\documentclass[11pt,a4paper]{article}

\usepackage{amssymb}
\usepackage{amsthm}
\usepackage{amsmath}

\usepackage{lineno}
\usepackage{color}
\usepackage{soul}

\usepackage{hyperref} %
\usepackage{geometry}
\geometry{vmargin=2cm,hmargin=2cm}

\usepackage{xspace}
\newcommand{\bw}{\ensuremath{\mathbf{w}}\xspace}
\newcommand{\bs}{\mathbf{s}}
\newcommand{\bu}{\mathbf{u}}

\newcommand{\bM}{\mathbf{M}}
\newcommand{\SbLSP}{S_{bLSP}}
\newcommand{\cS}{{\cal S}}
\newcommand{\cSbal}{\cS_{bal}}
\newcommand{\cSsturm}{\cS_{Sturm}}

\DeclareMathOperator{\stab}{Stab}
\DeclareMathOperator{\subst}{Subst}
\DeclareMathOperator{\first}{first}
\DeclareMathOperator{\last}{last}
\DeclareMathOperator{\stabfin}{StabFin}
\DeclareMathOperator{\stablet}{StabLet}
\DeclareMathOperator{\genstabfin}{GenStabFin}
\DeclareMathOperator{\adic}{adic}
\DeclareMathOperator{\id}{Id}
\DeclareMathOperator{\stabultlet}{StabUltLet}
\DeclareMathOperator{\exch}{Exch}

\newtheorem{thm}{Theorem}[section]
\newtheorem{theorem}[thm]{Theorem}
\newtheorem{proposition}[thm]{Proposition}
\newtheorem{lemma}[thm]{Lemma}

\newtheorem{corollary}[thm]{Corollary}

\newtheorem{remark}[thm]{Remark}

%
%
%

\begin{document}


\title{On sets of indefinitely desubstitutable words}


\author{Gwena\"el Richomme\\LIRMM,Université Paul-Valéry Montpellier 3,\\ Université de Montpellier, CNRS, Montpellier, France}

\maketitle

\begin{abstract}
The stable set associated to a given set ${\cal S}$ of nonerasing endomorphisms or substitutions
is the set of all right infinite words that can be indefinitely desubstituted over ${\cal S}$.
This notion generalizes the notion of sets of fixed points of morphisms.
It is linked to $S$-adicity and to property preserving morphisms. 
Two main questions are considered.
Which known sets of infinite words are stable sets?
Which ones are stable sets of a finite set of substitutions?
While bringing answers to the previous questions, some new characterizations  of several
well-known sets of words {such} as the set of binary balanced words or the set of episturmian words
are presented.
A characterization of the set of nonerasing endomorphisms that preserve episturmian words
is also provided.
\end{abstract}

\noindent
\textbf{Keywords:} S-adicity, fixed points, Sturmian words, episturmian words, property preserving morphisms


\section{Introduction}

As explained with more details in, for instance,  \cite{Berthe_Delecroix2014RIMS,Pytheas2002}, 
the terminology $S$-adic was introduced by S. Ferenczi {in} \cite{Ferenczi1996ETDS}{,}
where he proved that dynamical symbolic systems
with subaffine factor complexity are $S$-adic 
uniformly minimal symbolic systems.
This result was motivated by the so-called $S$-adic conjecture:
there exists a stronger notion of $S$-adicity which is equivalent to linear factor complexity.
Some advances on this conjecture were obtained by J.~Leroy \cite{Leroy2012thesis,Leroy2012AAM}{,}
especially when {the first difference of} the factor complexity is bounded by 2  \cite{Leroy2014DMTCS}.
Also an adaptation for infinite words was obtained \cite{Leroy_Richomme2012integers}.
Many classical words are known to be $S$-adic words {such} as, \textit{e.g.}, 
fixed points of morphisms, morphic words (images of fixed points), Sturmian words, 3-interval exchange transformations, Arnoux-Rauzy words {and} strict episturmian words. 

In \cite{Berthe_Delecroix2014RIMS}, Berthé and Delecroix write: ``\textit{Expansions of S-adic nature have now proved their efficiency
for yielding convenient descriptions for highly structured symbolic dynamical systems [...]\
If one wants to understand such a system [...],
it might prove to be convenient to decompose it via a desubstitution process: 
an $S$-adic system is a system that can be indefinitely desubstituted}."
The aim of the paper is to study more specifically this desubstitutive process 
under the approach of limit points as defined by P.~Arnoux, M.~Mizutani and T.~Sella \cite{ArnouxMizutaniSellami2014TCS} (and also used, for instance, by Justin {et} al.~\cite{DroubayJustinPirillo2001,JustinPirillo2002}). 
More specifically we study stable sets as defined by E.~Godelle \cite{Godelle2010AAM}, that is, sets of limit points of a given set of substitutions (or free monoid nonerasing endomorphisms). In other terms, a stable set
associated to a set $\cS$ of substitutions
is the set of all right infinite words 
that can be indefinitely desubstituted using elements of $\cS$.

In \cite{Richomme2019IJFCS},
answering a question of G.~Fici, 
the author characterizes in term{s} of limit $S$-adicity
the family of so-called LSP infinite words,
that is, the words having all their left special factors as prefixes.
For this{,} he determines a suitable set $\SbLSP$ of morphisms 
and an automaton recognizing the allowed infinite sequences of desubstitutions. 
As the obtained characterization is quite {involved},
a second part of \cite{Richomme2019IJFCS} considers the question of finding a simpler limit $S$-adic characterization.
Unfortunately, there exists no set $\cS$ of endomorphisms 
such that the set of LSP infinite words is (exactly) the stable set associated to $\cS$
except in the binary case. The main motivation of our study is the question{:} 
which are the known families of infinite words defined by a combinatorial property $P$ 
that correspond to stable sets? 
In \cite{Richomme2019IJFCS}, it was observed that, 
when such a situation arises, morphisms of $\cS$ necessarily preserve the property $P$ of infinite words.
While bringing answers to the previous question, we present some new characterizations of several
families of words {such} as the set of balanced words or the set of episturmian words. 
This leads us also to characterize the set of nonerasing endomorphisms that preserve episturmian words.

In Section~\ref{sec:fixed_points},
we present the process of desubstitution as a generalization of the notion of a fixed point of a morphism.
We also recall needed definitions {such} as those of indefinitely desubstituted words, limit points and stable sets.
In Section~\ref{sec:balanced_words}, we consider an example of {a} stable set, 
proving that the set of binary balanced words is a stable set
of a particular set of four morphisms{.}
As far as we know, this result was not stated formally earlier{,} even if many aspects of the proof are known. 
The considered process of desubstitution is infinite and 
an infinite sequence of substitutions of a given set $\cS$
is associated to the infinite desubstitution of an infinite word:
such a sequence of substitutions is called a directive sequence.
We show that, in the general case, any infinite sequence of elements of $\cS$ is the directive sequence
of at least one infinite word. 

In Section~\ref{sec:structure}, we  study the possible forms of the elements of a stable set
in relation with $S$-adicity.
We show that the characterization of the set of  balanced words can be transformed to another characterization in term{s} of $S$-adicity even if a general transformation cannot exist for arbitrary stable sets.
In Section~\ref{sec:other_binary_examples}, we provide two more examples of 
sets of binary infinite words that are stable sets:
the set of Sturmian words and the set of Lyndon Sturmian words.
For both sets, there exist only infinite sets of substitutions for which the set is a stable set.

In Section~\ref{sec:episturmian},
we consider episturmian words on arbitrary alphabets.
As far as we know, the fact that the set of standard episturmian word is a stable set is the unique result
{of} this form that was previously stated (without adapted terminology) \cite{JustinPirillo2002}. 
After recalling this result, we prove that the set of $A$-strict standard episturmian words,
the set of episturmian words and the set  of $A$-strict episturmian words
are stable sets{,} but only of infinite sets of substitutions.
For this{,} we prove and use a characterization of endomorphisms preserving episturmian words.
Observe that in \cite{JustinPirillo2002} there exists a characterization of episturmian words
using a desubstitution process associated to a finite set $S$ of substitutions 
but not all elements of the stable set of $S$ are episturmian (only the recurrent ones are).

\section{\label{sec:fixed_points}From fixed points to stable sets}

We assume that readers are familiar with combinatorics on words; for omitted definitions see, \textit{e.g.}, \cite{Berthe_Rigo2010CANT,Lothaire1983book,Lothaire2002}. 
All the infinite words considered in this paper are right infinite words. Let us recall some basic notions on fixed points of morphisms.

{Let $A$ be a (finite) alphabet. Let $\#A$ be its cardinality.
The set of words over $A$,} usually denoted $A^*$, equipped with the concatenation operation has a free monoid structure with
neutral element the empty word $\varepsilon$. Given two alphabets $A$ and $B$, a (free monoid) \textit{morphism} $f$ is a map from $A^*$ to $B^*$
that preserves the monoid structure: for all words $u$ and $v$, $f(uv) = f(u)f(v)$ 
(and, consequently, $f(\varepsilon) = \varepsilon$).
Morphisms are completely defined by images of letters.
In what follows we will essentially use \textit{endomorphisms}, that is, morphisms from a free monoid $A^*$ to itself. 

Given an alphabet $A$, the set of right infinite words, that is, infinite sequences of elements of $A$, is usually denoted $A^\omega$. Images by a morphism of infinite words are defined naturally but to preserve infinity, 
one may consider only \textit{nonerasing} morphisms (images of nonempty words are never the empty word). 
As often done, such a nonerasing morphism will be called a \textit{substitution}.

A \textit{fixed point} of an endomorphism is any finite or infinite word $w$ such that $w = f(w)$. 
For instance, the following morphism has both finite and infinite fixed points. Actually{,} since it admits a finite fixed point, it has infinitely many finite and infinitely many infinite fixed points. 

$$f_1:\left\{\begin{array}{l}a \mapsto a\\b \mapsto bac\\c \mapsto baca\end{array}\right.$$

The definition of fixed points is not constructive. To determine more precisely prefixes of an infinite fixed point, at least two different approaches are usually considered.

\subsection*{Limit approach}

If $\bw$ is an infinite fixed point of a morphism $f$ and  if $p$ is one of its prefixes, 
then necessarily, for $n \geq 0$, the word $f^n(p)$ is also a prefix of $\bw$ 
(classically $f^n = f^{n-1} \circ f$, $f^0$ being the identity morphism).  
If $\lim_{n \to \infty} |f^n(p)| = \infty$ (where $|u|$ denotes the length, 
that is, the number of letters, of the word $u$), 
then the words $f^n(p)$ provide progressively all the letters of $\bw$ that,
thus, can be viewed as a limit of finite words. 
This is usually denoted $\bw = \lim_{n \to \infty} f^n(p)$ or $f^\omega(p)$ (in this last case, $p$ is very often a letter).

Observe that the morphism $f_1$ has exactly one fixed point, the word $a^\omega$, 
that cannot be obtained as a limit of images of powers of $f_1$ applied to a word.
The other fixed points are the words $\lim_{n \to \infty} f_1^n( a^kb)$ with $k \geq 0$ an integer.

\subsection*{Desubstitutive approach}

The second approach consists {of} firstly considering 
the fixed point as the image of another word by the morphism.
There exists a word $\bw'$ such that 
$\bw = f(\bw')$: $\bw'$ is called a \textit{desubstituted} word. In general{,} $\bw'$ may not be unique. For instance, the unique fixed point of $f_1$ starting with the letter $b$ can be seen as the image of a word over $\{a, c\}$ as {$f_{1}(c) = f_{1}(ba)$}. Nevertheless in our context, we know that one possibility for $\bw'$ is $\bw$ itself. And so we can iterate the desubstitution. 
Hence we can find an infinite sequence $(\bw_i)_{i \geq 0}$ of infinite words such that $\bw = \bw_0$ and, for all $i \geq 0$, $\bw_i = f(\bw_{i+1})$: the word $\bw$ can be \textit{indefinitely desubstituted}.

For obtaining large prefixes of a fixed point, 
the desubstitutive approach is less constructive than the limit approach but it can be used as follow{s}. 
Consider the first letter $a_0$ of $\bw$. Then consider all letters whose images by $f$ start with $a_0$: each of {these letters} $a_1$ is a possible letter for the first letter of $\bw_1$ {whenever} $f(a_1)$ is a prefix of $\bw$. Then we can iterate looking for potential first letters $a_2$ of $\bw_2$, $a_3$ of $\bw_3$ and so on.
The words $f^n(a_n)$, with $a_n$ the first letter of $\bw_n$, provide longer and longer prefixes of $\bw$ except if this word starts with a finite fixed point. 
In this case we have to consider the letter occurring after this fixed point and iterate, from this letter, the search for other letters.

It follows {from} the definition that any fixed point can be indefinitely desubstituted (remember that some fixed points cannot be defined by the limit approach).
But, for some morphisms, there may exist some infinite words that can be indefinitely desubstituted without being  fixed points. 
This does not occur with the morphism $f_1$ but {it does occur} with the morphism $f_2$ defined  by $f_2(a) = ba$ and $f_2(b) = ab$
{(Observe that$f_2$ has no fixed point)}. The words that can be indefinitely desubstituted  using $f_2$ 
are the fixed points of $f_2^2$ and not {of} $f_2$ (these words are also the \textit{Thue-Morse words}, that is, the fixed points of the morphism $\mu$ defined by $\mu(a) = ab$ and $\mu(b) = ba$). The previous phenomenon is much more general.

\begin{lemma}[{\cite[Prop. 7.1]{Godelle2010AAM}}]
\label{L:characterising fixed points}
Let $\bw$ be a right infinite word and let $f$ be a {nonerasing} morphi{s}m. 
The word $\bw$ can be indefinitely desubstituted by $f$ if and only if $\bw$ is the fixed point of $f^n$ for some integer $n \geq 1$.
\end{lemma}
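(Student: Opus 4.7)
The ``if'' direction is immediate: if $\bw = f^n(\bw)$, setting $\bw_{kn+j} := f^{n-j}(\bw)$ for $0 \leq j < n$ and $k \geq 0$ yields a sequence with $\bw_0 = \bw$ and $\bw_i = f(\bw_{i+1})$ for every $i \geq 0$. The substance of the lemma is the converse.

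For the converse, let $(\bw_i)_{i \geq 0}$ be an indefinite desubstitution of $\bw = \bw_0$ by $f$. The plan is to exploit the finiteness of $A$ to trivialize the first-letter dynamics after passing to a suitable power of $f$, and then to propagate the analysis position by position along $\bw$. Define $g : A \to A$ by sending $a$ to the first letter of $f(a)$ (which exists because $f$ is nonerasing); since $A$ is finite and $(g^n(A))_{n \geq 0}$ is decreasing, $A_\infty := \bigcap_n g^n(A)$ is nonempty and $g|_{A_\infty}$ is a bijection. Let $P \geq 1$ be its order, so that $g^P|_{A_\infty} = \id$, and put $F := f^P$. Then $(\bw_{iP})_{i \geq 0}$ is an indefinite desubstitution of $\bw$ by $F$.

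The core is an induction on the position $j \geq 0$ of $\bw$, with inductive hypothesis that for each $l < j$ the $l$-th letter of $\bw_{iP}$ is the same element $a_l \in A_\infty$ for all $i$ and satisfies $F(a_l) = a_l$. Under this hypothesis the $F$-parsings of $\bw_{iP}$ and $\bw_{(i+1)P}$ agree up to position $j$, so the $j$-th letter of $\bw_{iP}$ is $g^P$ applied to the $j$-th letter of $\bw_{(i+1)P}$. The resulting sequence at position $j$ thus admits $g^P$-preimages of every depth, hence lies in $\bigcap_n (g^P)^n(A) = A_\infty$, and is constant because $g^P$ restricts to the identity on $A_\infty$. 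Denoting its common value by $a_j$, one gets $\bw[j] = a_j \in A_\infty$ and $F(a_j)$ begins with $a_j$.

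Two cases close the proof. If $F(a_j) = a_j$ for every $j$, the hypothesis propagates forever and $F(\bw) = \bw$ letter by letter. Otherwise let $j^*$ be the smallest position with $|F(a_{j^*})| \geq 2$; writing $F(a_{j^*}) = a_{j^*} r$ with $r \neq \varepsilon$, the iterates $F^i(a_{j^*})$ form a nested chain of prefixes of unbounded length, converging to an infinite word $F^\omega(a_{j^*})$ fixed by $F$. Using $\bw = F^i(\bw_{iP})$ together with the $F$-fixedness of the prefix $a_0 a_1 \cdots a_{j^*-1}$, the suffix of $\bw$ starting at position $j^*$ begins with $F^i(a_{j^*})$ for every $i$ and hence equals $F^\omega(a_{j^*})$; therefore $F(\bw) = \bw$ in this case as well. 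In either case $\bw = f^P(\bw)$, which is the claimed fixed-point identity with $n = P$. The main delicate point is the parse alignment in the induction: it holds precisely while the prefix consists of length-one $F$-fixed letters and breaks at the first growing position, but at exactly that moment the self-similarity of the suffix of $\bw$ supplies the fixed-point identity for free.
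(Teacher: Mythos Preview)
Your proof is correct and follows essentially the same line as the paper's: use the eventual periodicity of first letters to pass to a power of $f$, then split according to whether every letter of $\bw$ is a one-letter fixed point of that power or some first growing position generates the entire tail as a limit. Your one refinement is choosing a single universal period $P$ (the order of $g$ on $A_\infty$) at the outset, whereas the paper picks a separate period $\pi_i$ at each position and must combine them at the end---where, incidentally, it writes $\gcd$ but means $\mathrm{lcm}$.
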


{Proposition~7.1 in {\protect\cite{Godelle2010AAM}} is richer than the previous lemma. Its proof depends on a larger context. For the sake of completeness,  we provide a short proof in our context. Let $\first(u)$ denote the first letter of a finite nonempty word or infinite word $u$.
}

\begin{proof}
{
First observe that if $\bw$ is the fixed point of $f^n$, 
then $\bw$ can be indefinitely desubstituted. More precisely, 
as $\bw = f^n(\bw)$,
the desubstituted words are $f^{n-1}(\bw)$, \ldots, $f(w)$, $\bw$, $f^{n-1}(\bw)$, \ldots, $f(w)$, $\bw$ and so on.
}

{
Assume now that $\bw$ can be indefinitely desubstituted by $f$.
Let $(\bw_n)_{n \geq 0}$ be a sequence of desubstituted words: $\bw_0 = \bw$ and, for all $n\geq 0$, $w_n = f(\bw_{n+1})$.
Let $a_n = \first(\bw_n)$ for all $n \geq 0$.
As the alphabet $A$ is finite,
there exist arbitrarily large integers $m$, $n$
such that $0 \leq m < n \leq m+\#A$ and $a_n = a_m$.
Observe that, if $0 < m < n$ and $a_n = a_m$, then $a_{n-1} = a_{m-1}$.
Indeed, as $f$ is not erasing, 
$a_{n-1} =$ 
$\first(\bw_{n-1}) =$
$\first(f(\bw_{n})) =$
$\first(f(a_n)) =$ $\first(f(a_m)) =$ $a_{m-1}$.
Consequently, the sequence $(a_n)_{n \geq 0}$
is periodic with a period smaller than or equals to $\#A$.
Let $\pi$ be this period. 
}

{
Let $a = a_0$ ($a$ is the first letter of $\bw$).
For all $n \geq 0$, $a_{\pi n} = a$.
Since $a$ is the first letter of $\bw = f^\pi(\bw_{\pi})$ and since it is the first letter of $\bw_{\pi}$, 
$a$ is the first letter of $f^\pi(a)$.
If $|f^\pi(a)| > 1$, then $\lim_{n \to \infty} |f^{n\pi}(a)| = \infty$.
For all $n \geq 0$,
since $a$ is the first letter of $\bw_{n\pi}$ and as $\bw = f^{n\pi}(\bw_{n\pi})$,
the word $f^{n\pi}(a)$ is a prefix of $\bw$.
Hence $\bw = \lim_{n \to \infty} f^{n\pi}(a)$.
But, similarly, $\bw_\pi = \lim_{n \to \infty} f^{n\pi}(a)$.
Hence $\bw = f^\pi(\bw)$.
}

{Now assume that $|f^\pi(a)| = 1$. Then $\bw = a \bw'$ with $a$ a letter and $\bw'$ an infinite word
such that: $a = f^\pi(a)$; $\bw'$ can be indefinitely desubstituted using $f$ 
(the desubstituted words are the words $\bw_n'$ obtained from the words $\bw_n$ removing their first letters).}

{Thus, by induction, one can state that one of the two following cases holds.}
\begin{enumerate}
\item {$\bw = a_0 \cdots a_N \bw'$ where $N \geq -1$ is an integer (when $N = -1$, $\bw = \bw'$), 
$\bw'$ is an infinite word and $a_0$, \ldots, $a_N$ are letters such that:}
\begin{itemize}
\item {there exists an integer $p$ such that 
$1 \leq p \leq \#A$ and $\bw' = f^p(\bw')$,}
\item {for all $i$, $0 \leq i \leq N$, there exists an integer $\pi_i$ such that 
$1 \leq \pi_i \leq \#A$ and $a_i = f^{\pi_i}(a_i)$.}
\end{itemize}
\item {$\bw = \prod_{i \geq 0} a_i$ where for all $i \geq 0$, $a_i \in A$ and
there exists an integer $\pi_i$ such that $1 \leq \pi_i \leq \#A$ and $a_i = f^{\pi_i}(a_i)$.}

\end{enumerate}
{In the first case, let $n = gcd(p, \pi_0, \dots, \pi_N)$ and, in the second case, let $n =$ $gcd((\pi_i)_{i \geq 0})$.
In both cases, $\bw = f^n(\bw)$.}
\end{proof}

Before going further studying words that are indefinitely desubstitutable over a set of morphisms, 
let us recall and introduce some useful terminology. 
For any set $X$ of finite words, 
let $X^*$ denote the set of all finite words that can be obtained by concatenation of elements of $X$ (including the empty word),
let $X^+$ denote the set of all finite words that can be obtained by concatenation of at least one elements of $X$
and 
let $X^\omega$ denote the set of all infinite words that can be obtained by concatenation of elements of $X$.
Observe that these notations will be used also with sets of substitutions to denote sets of sequences of substitutions 
(the set of substitutions is then considered as an alphabet).
Each finite sequence $\sigma_1 \sigma_2 \cdots \sigma_k$ of substitutions refers both to the sequence 
and to the substitution obtained by composition of the $\sigma_i$ ($\sigma_1 \circ \sigma_2 \circ \ldots \circ \sigma_k$).
We will use both interpretations alternatively:
the context should be clear when we consider sequences and when we consider substitutions.

Let $\subst(A)$ be the set of substitutions on the alphabet A, that is, the set of nonerasing endomorphisms of the free monoid $A^*$.
Let ${\cal S} \subseteq \subst(A)$.
Let $(\sigma_n)_{n \geq 1} \in \cS^\omega$.
Following P.~Arnoux, M.~Mizutani and T.~Sella \cite{ArnouxMizutaniSellami2014TCS},
a finite or an infinite word $\bw$ over $A$ is a \textit{limit point of the sequence}  $(\sigma_n)_{n \geq 1}$
if there exists a sequence $(\bw_n)_{n \geq 0}$ of infinite words over $A$
such that {$\bw = \bw_0$} and {$\bw_n = \sigma_{n+1}(\bw_{n+1})$} for all $n \geq 0$. In other words $\bw$ can be \textit{indefinitely desubstituted} using successively the substitutions in the sequence $(\sigma_n)_{n \geq 1}$. 
The sequence $(\sigma_n)_{n \geq 1}$ will be called a {\textit{directive sequence}} of $\bw$.

Following Godelle \cite{Godelle2010AAM}, for $\cS \subseteq \subst(A)$, 
the \textit{stable set} of ${\cal S}$, denoted {\textit{$\stab(\cS)$}}, is the set of all limit points of sequence{s} in {$\cS^\omega$}.
Also a set $X$ of infinite words is \textit{stabilized} by $\cS$ if $X = \bigcup_{f \in \cS} f(X)$.

\begin{lemma}[\cite{Godelle2010AAM}]~\label{L:Godelle_Stabilized}
\begin{itemize}
\item $\stab(\cS)$ is stabilized by $\cS$.
\item Any set stabilized by $\cS$  is included in $\stab(\cS)$.
\end{itemize}
\end{lemma}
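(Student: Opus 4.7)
My plan is to prove the two bullet points separately, both directly from the definition of limit point and the corresponding directive sequence.

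For the first bullet, I need to show $\stab(\cS) = \bigcup_{f \in \cS} f(\stab(\cS))$. For the forward inclusion, I take $\bw \in \stab(\cS)$ with directive sequence $(\sigma_n)_{n \geq 1}$ and associated sequence $(\bw_n)_{n \geq 0}$ with $\bw_0 = \bw$ and $\bw_n = \sigma_{n+1}(\bw_{n+1})$. Then $\bw = \sigma_1(\bw_1)$, and the shifted sequence $(\sigma_{n+1})_{n \geq 1}$ together with $(\bw_{n+1})_{n \geq 0}$ witnesses $\bw_1 \in \stab(\cS)$, so $\bw \in \sigma_1(\stab(\cS))$. For the reverse inclusion, if $\bw = f(\bw')$ with $f \in \cS$ and $\bw' \in \stab(\cS)$ having directive sequence $(\sigma_n)_{n \geq 1}$ and sequence $(\bw'_n)_{n \geq 0}$, then prepending $f$ to the directive sequence and setting $\bw_0 = \bw$, $\bw_{n+1} = \bw'_n$ yields a valid witness that $\bw \in \stab(\cS)$.

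For the second bullet, let $X$ be stabilized by $\cS$, so $X = \bigcup_{f \in \cS} f(X)$, and fix $\bw \in X$. I would construct, by dependent choice, sequences $(\bw_n)_{n \geq 0} \in X^\omega$ and $(\sigma_n)_{n \geq 1} \in \cS^\omega$ with $\bw_0 = \bw$ and $\bw_n = \sigma_{n+1}(\bw_{n+1})$ for all $n \geq 0$; at each step, since $\bw_n \in X = \bigcup_{f \in \cS} f(X)$, one can pick some $\sigma_{n+1} \in \cS$ and some $\bw_{n+1} \in X$ such that $\bw_n = \sigma_{n+1}(\bw_{n+1})$. This directly exhibits $\bw$ as a limit point of $(\sigma_n)_{n \geq 1}$, hence $\bw \in \stab(\cS)$.

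I do not anticipate any real obstacle: both parts are essentially unwinding the definitions, the only subtlety being the implicit use of dependent choice in the second bullet and the careful re-indexing of the directive sequence when prepending or removing a substitution in the first bullet.
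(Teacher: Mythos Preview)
Your proof is correct and is the natural unwinding of the definitions. The paper does not give its own proof of this lemma but simply cites it from Godelle~\cite{Godelle2010AAM}, so there is nothing to compare against; your argument is exactly the expected one.
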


Thus $\stab(\cS)$ is the large{st} set ({w.r.t.} inclusion) of infinite words stabilized by $\cS$. 
Hence the stable set of a set of substitutions appears as a natural generalization of the set of fixed points of powers of a morphism.
This is coherent with Lemma~\ref{L:characterising fixed points} {which} states that the stable set of a singleton $\{f\}$ is the set of fixed points of the morphisms $f^n$ with $n \geq 1$.

Let $P$ be a property of infinite words.
Let $X$ be the set of all infinite words having this property.
A morphism is {said} to \textit{preserve the property} $P$ or \textit{to preserve the set} $X$ 
if for all elements $\bw$ in $X$, $f(\bw)$ also belongs to $X$.
By definition of a stable set, 
for all $\cS \subseteq \subst(A)$,
for all $\sigma \in \cS$ and $\bw \in \stab(\cS)$,
$\sigma(\bw) \in \stab(\cS)$.
This can be reformulated as the next remark, already made in \cite{Richomme2019IJFCS}, 
that will be useful when proving that  some sets of infinite words cannot be stable sets 
of finite sets of substitutions.

\begin{remark}
\label{Rem:preserve}
If $X = \stab(\cS)$ for some set $\cS$ of substitutions, then all elements of $\cS$ preserve the set $X$.
\end{remark}

\section{\label{sec:balanced_words}Balanced words: an example of \texorpdfstring{{a}}{} stable set}

The desubstitutive approach naturally arises in the study of infinite binary balanced words{,} especially in the study of aperiodic balanced words, that is, the Sturmian words. Here we consider the whole set of infinite binary balanced words and we show Proposition~\ref{P:characterising balanced words} below.
Elements of the proof are rather classical.
They allow to illustrate some techniques of proof 
that will be reused later.

A word on an alphabet $A$ is \textit{balanced} 
if for all letters $a$ in $A$ and 
for all factors $u$ and $v$ with $|u|=|v|$, 
the following inequation holds:
$||u|_a-|v|_a| \leq 1$ (here $|w|_\alpha$ denotes the number of occurrences of the letter $\alpha$ in the word $w$).
The aperiodic, that is, {not} ultimately periodic, infinite binary balanced words are called \textit{Sturmian words}. 
The most famous one is the \textit{Fibonacci word}
which is the fixed point of the morphism $\varphi$ defined by $\varphi(a) = ab$, $\varphi(b) = a$.
Since Morse and Hedlund's work \cite{MorseHedlund1940},
periodic infinite binary balanced words are known to be infinite repetitions of conjugates of finite standard words (\textit{e.g.}, $ab(abaab)^\omega = (ababa)^\omega$) (two words $x$ and $y$ are \textit{conjugates} if  there exists a word $u$ such that $xu = uy$). But there are also some ultimately periodic balanced words that are not purely periodic as for instance $a^nba^\omega$, $(ab)^na(ab)^\omega$, $(abaab)^naba(abaab)^\omega$, \ldots 

It is well-known that the following four morphisms are intrinsically related to binary balanced words as they naturally {occur in the study} of Sturmian words. 

\centerline{
{
$L_a: \left\{
\begin{tabular}{l}
$a \mapsto a$\\
$b \mapsto ab$
\end{tabular}\right.$
}
{
$L_b: \left\{
\begin{tabular}{l}
$a \mapsto ba$\\
$b \mapsto b$
\end{tabular} \right.$
}
{
$R_a: \left\{
\begin{tabular}{l}
$a \mapsto a$\\
$b \mapsto ba$
\end{tabular} \right.$
}
{
$R_b: \left\{
\begin{tabular}{l}
$a \mapsto ab$\\
$b \mapsto b$
\end{tabular} \right.$
}
}

Let $\cSbal = \{L_a, R_a, L_b, R_b\}$.
\begin{proposition}\label{P:characterising balanced words}
The set of binary balanced infinite words is the stable set of ${\cal S}_{{\rm  bal}}$.
\end{proposition}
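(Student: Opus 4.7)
The plan is to prove the two inclusions $B \subseteq \stab(\cSbal)$ and $\stab(\cSbal) \subseteq B$, where $B$ denotes the set of binary balanced infinite words. For $B \subseteq \stab(\cSbal)$, I would apply the second part of Lemma~\ref{L:Godelle_Stabilized} by showing that $B$ is stabilized by $\cSbal$, i.e.\ $B = \bigcup_{\sigma \in \cSbal} \sigma(B)$. This splits into two classical steps. First, each of the four morphisms preserves balancedness; they all belong to the Sturmian monoid. Second, every balanced infinite word $\bw$ admits a desubstitution $\bw = \sigma(\bw')$ with $\sigma \in \cSbal$ and $\bw' \in B$. The key observation is that a balanced binary word cannot contain both $aa$ and $bb$, since $||aa|_a - |bb|_a| = 2$. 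If $aa$ is not a factor of $\bw$, then $\bw$ factorizes uniquely over the prefix code $\{ab, b\}$, which consists of the letter images under $R_b$, so $\bw = R_b(\bw')$; dually, if $bb$ is not a factor of $\bw$, then $\bw = R_a(\bw')$. The balancedness of $\bw'$ follows from the standard reflection property of Sturmian morphisms: the image $R_b(x)$ is balanced only if $x$ is, and similarly for $R_a$.

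For $\stab(\cSbal) \subseteq B$, the plan is an infinite-descent argument. Assume that some $\bw \in \stab(\cSbal)$ is not balanced; then a classical criterion provides factors $asa$ and $bsb$ of $\bw$ for some finite word $s$, and one picks such a pair with $|s|$ minimal. The base case $|s| = 0$ is already impossible: a direct inspection of the four image codes shows that images of $R_b$ and $L_b$ never contain $aa$, while images of $R_a$ and $L_a$ never contain $bb$. Hence as soon as a word contains both $aa$ and $bb$ it cannot be the image of any word by any element of $\cSbal$, contradicting $\bw \in \stab(\cSbal)$. So $|s| \geq 1$. The descent step is then the following lemma: for every $\sigma \in \cSbal$ and every infinite binary word $x$, if $\sigma(x)$ contains $asa$ and $bsb$ with $|s| \geq 1$ minimal, then $x$ contains $as'a$ and $bs'b$ with $|s'| < |s|$. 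Applied along the directive sequence $(\sigma_n)_{n \geq 1}$ and desubstituted sequence $(\bw_n)_{n \geq 0}$, it yields unbalanced pairs in successive $\bw_n$ of strictly decreasing length, leading eventually to an impossible pair with $|s| = 0$.

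The main obstacle is this descent lemma. For each $\sigma \in \cSbal$, one pulls the two factors back through $\sigma$ using the synchronizing block structure: $\{\sigma(a), \sigma(b)\}$ is a prefix code or a suffix code, so every occurrence of a factor in $\sigma(x)$ is anchored to specific block boundaries in $x$. The four morphisms are interchanged by the involution $a \leftrightarrow b$ and by word reversal, so a careful verification for $R_b$ suffices and the remaining cases follow by symmetry. The subtle points are the boundary positions where $asa$ or $bsb$ only partially cover the first or last block image; these boundary effects produce the strict decrease $|s'| < |s|$.
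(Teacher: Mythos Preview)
Your argument is correct, but the two inclusions are handled quite differently from the paper.

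For $B \subseteq \stab(\cSbal)$ you invoke Lemma~\ref{L:Godelle_Stabilized} after checking that $B$ is stabilized by $\cSbal$; the paper instead iterates the desubstitution directly. The ingredients are the same (each $\sigma\in\cSbal$ preserves balance, and every balanced word desubstitutes to a balanced one), so this is only a cosmetic difference. Note incidentally that you desubstitute using only $R_a$ and $R_b$, whereas the paper chooses among all four morphisms according to the first letter of $\bw$; your simplification is harmless since $B \subseteq R_a(B)\cup R_b(B)$ already suffices for stabilization.

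For $\stab(\cSbal) \subseteq B$ the approaches genuinely diverge. The paper argues by case analysis on the directive sequence $\bs$: if $\bs$ is eventually in $\{L_a,R_a\}^\omega$ (or symmetrically), the limit word contains at most one $b$ and is trivially balanced; otherwise $\bs$ contracts to a sequence of morphisms each lengthening both letters, the word is obtained as a limit of images of single letters, and balance follows from Lemma~\ref{morphisms preserving balance property}. Your descent argument instead tracks the minimal unbalance witness $(asa,bsb)$ through the desubstitution and forces $|s|$ to strictly decrease at each step until it hits the impossible value $0$. This is more self-contained---it avoids both the contraction trick and the appeal to the Sturmian characterisation---at the cost of the descent lemma, which you leave as a sketch. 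That lemma does hold: for instance with $\sigma=L_a$, the absence of $bb$ in $L_a(x)$ forces $s$ to begin and end with $a$; parsing $bsb$ over $\{a,ab\}$ yields $s=L_a(u)a$ and a factor $bub$ of $x$, while parsing $asa$ yields a factor $aua$ of $x$ with the \emph{same} $u$, and $|u|=|s|-1-|u|_b<|s|$. The reversal symmetry you invoke to transport this to $R_a$ requires a small remark (pass to finite prefixes, since right-infinite words have no reversal), but it goes through.
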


The previous {proposition} means that an infinite binary  word $\bw$ is balanced if and only if there 
exists some words $(\bw_i)_{ i \geq 0}$ 
and morphisms $(\sigma_i)_{i \geq 1}$ in $\{L_a, L_b, R_a, R_b\}$
such that 
{$\bw_0 = \bw$} and for all $i \geq 0$, $\bw_i = \sigma_{i+1}(\bw_{i+1})$. 
One can also see that all words in $(\bw_i)_{i \geq 0}$ are balanced words.
The proof of the \textit{only if} part illustrates the mechanism of desubstitution.

\begin{proof}[Proof of the only if part]
Given an infinite binary balanced word $\bw$ over $\{a, b\}$, 
one of the two words $aa$ and $bb$ does not occur in $\bw$.
In each case and whatever is the first letter of $\bw$, $\bw = \sigma_1(\bw_1)$ 
for a morphism $\sigma_1$ in $\{L_a, L_b, R_a, R_b\}$ 
and an infinite word $\bw_1$ over $\{a, b\}$. The following table summarises the four possible cases:
\begin{center}
\begin{tabular}{|c|c|p{5cm}|}
\hline
$\bw$ starts with & $\bw$ does not contain & $\bw$ can be decomposed  \\
\hline
$a$ &  $bb$ &
over $\{a,ab\}$:
$\bw = L_a(\bw_1)$
\\
\hline
$a$ & $aa$ & 
over $\{ab, b\}$:
$\bw = R_b(\bw_1)$\\
\hline
$b$ & $bb$ & 
over $\{ba,a\}$:
$\bw = R_a(\bw_1)$\\
\hline
$b$ & $aa$ &
over $\{ba, b\}$:
$\bw = L_b(\bw_1)$\\
\hline
\end{tabular}
\end{center}

Let us now state that the word $\bw_1$ is balanced. 
Assume by contradiction that $\bw_1$ is not balanced. 
Considering a pair of words of same minimal length 
such that the first one contains at least two more occurrences of the letter $a$ than the second, 
we find a word $u$ such that both words $aua$ and $bub$ occur in $\bw_1$. 
In the case where $\bw$ starts with $b$ and does not contain $bb$, we verify that the word $\bw = R_a(\bw_1)$ contains  
$aaR_a(u)a$ and $baR_a(u)b$ as factors (the factor $aua$ is not a prefix of $\bw_1$, and so, images of its occurrences in $R_a(\bw_1)$
are necessarily preceded by $a$). 
A similar contradiction with the balance property of $\bw$ can be obtained in the three other cases.

As $\bw_1$ is balanced, iterating what precedes, 
one can find a balanced infinite binary word $\bw_2$ 
and a morphism $\sigma_2$ in $\{L_a, L_b, R_a, R_b\}$ such that $\bw_1 = \sigma_2(\bw_2)$.
The proof of the only if part ends by induction.
\end{proof}

The proof of the \textit{if} part needs the following well-known property:

\begin{lemma}[\mbox{see, \textit{e.g.}, \cite[chap. 2]{Lothaire2002}}]
\label{morphisms preserving balance property}
Any morphism in $\{L_a, L_b, R_a, R_b\}^*$ preserves the balanced property.
\end{lemma}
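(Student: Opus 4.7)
The plan is to prove the lemma by induction on the length of the composition, reducing to the base case: each single generator in $\{L_a, L_b, R_a, R_b\}$ preserves the balance property. Composition then trivially preserves the property: if $f$ and $g$ both preserve balance, so does $f \circ g$ (and the identity preserves everything).

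For the base case, I would invoke two symmetries to limit the work. First, the letter exchange $a \leftrightarrow b$ is an involution of $\{a,b\}^*$ that preserves the balance property and conjugates $L_a$ with $L_b$ and $R_a$ with $R_b$; hence $L_a$ preserves balance if and only if $L_b$ does, and similarly for the $R$'s. Second, if $\tilde{\cdot}$ denotes reversal of finite words, a direct computation on the images of letters gives $R_a(u) = \widetilde{L_a(\tilde u)}$ for every finite word $u$; since balance is a property of the factor set and is preserved by reversal, $L_a$ preserves balance if and only if $R_a$ does. Thus everything reduces to showing that $L_a$ preserves balance.

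For this key step, I would argue by contradiction using the classical characterization (Lothaire, Chap.~2): an infinite word on $\{a,b\}$ is balanced if and only if there is no finite word $v$ such that both $ava$ and $bvb$ are factors. Suppose $\bw$ is balanced but $L_a(\bw)$ is not, and pick a minimal $v$ such that $ava$ and $bvb$ are both factors of $L_a(\bw)$. The structural key is that $L_a(\bw)$ starts with $a$ and contains no factor $bb$: every occurrence of $b$ in $L_a(\bw)$ comes from the image $ab$ of a letter $b$ and is immediately preceded by that $a$. This forces $v$ itself to begin and end with the letter $a$ (otherwise $bvb$ would contain $bb$). Prepending the forced $a$ before the initial $b$ of $bvb$, the factor $abvb$ also occurs in $L_a(\bw)$, and it aligns cleanly with the unique parsing of $L_a(\bw)$ as a concatenation of blocks from $\{a, ab\}$. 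A parallel (slightly more delicate) alignment applies to $ava$. Desubstituting these blocks recovers a factor $v'$ of $\bw$ witnessing that both $av'a$ and $bv'b$ are factors of $\bw$, contradicting the balance of $\bw$.

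The main obstacle is the bookkeeping in this last desubstitution step: one has to track precisely how the boundary letters of $ava$ and $bvb$ sit inside the $\{a, ab\}$-block decomposition of $L_a(\bw)$ (in particular whether a boundary $a$ is a full block or the first letter of an $ab$ block), so that the recovered $v'$ really produces an unbalanced pair $av'a, bv'b$ in $\bw$. Once this case analysis is completed for $L_a$, the two symmetries transfer the result to the other three generators, and the induction on the length of the composition then yields the lemma.
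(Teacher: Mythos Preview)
The paper does not give its own proof of this lemma; it simply cites Lothaire, Chapter~2. Your sketch is essentially the standard argument found there---reduction to a single generator via the letter-exchange and reversal symmetries, then a contradiction using the $ava/bvb$ characterization of unbalancedness together with desubstitution through the prefix code $\{a,ab\}$---and it is correct.
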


\begin{proof}[Proof of the if part of Proposition~\ref{P:characterising balanced words}]
Assume that $\bs = (\sigma_n)_{n \geq 1}$ is a sequence of morphisms in $\{L_a,$ $L_b, R_a, R_b\}$ occurring in an infinite desubstitution of a word $\bw$, and let $(\bw_n)_{n \geq 0}$ be the corresponding sequence of words: $\bw_0 = \bw$, $\bw_i = \sigma_{i+1}(\bw_{i+1})$ for all $i \geq 0$.

Assume first that $\bs \in \{L_a, R_a\}^\omega$.
Assume that there exists an integer $n$ such that $ba^nb$ is a factor of $\bw$. 
Thus, by induction, one can see that $ba^{n-i}b$ is a factor of $\bw_i$ for all $i$ such that $0 \leq i \leq n$. Especially $bb$ is a factor of $\bw_n$. This is impossible as it is an image of a word by $L_a$ or $R_a$. 
Thus $\bw$ contains at most one $b$, that is, it belongs to the set
$\{ a^\omega, a^nba^\omega \mid n \geq 1\}$: it is clearly balanced. 

Assume now that more generally, $\bs$ contains finitely many elements of $\{L_b, R_b\}$. 
This means that $\bw = f(\bw')$ with $f \in  \{L_a, L_b, R_a, R_b\}^*$ and $\bw'$ as in the previous case. As $\bw'$ is balanced and as $f$ preserves balanced words by Lemma~\ref{morphisms preserving balance property}, the word $\bw$ is also balanced.

Similarly $\bw$ is balanced if $\bs$ contains finitely many elements of $\{L_a, R_a\}$. 

It remains to study the case where $\bs$ contains both infinitely many occurrences of elements of $\{L_a, R_a\}$ and
infinitely many occurrences of elements of $\{L_b, R_b\}$.
These words are the Sturmian words \cite{BertheHoltonZamboni2006}
and so they are  balanced.
The proof can be given shortly. 
The {sequence $\bs$} can be contracted to a sequence $\bs'$ of morphisms in $\{L_a, R_a\}^+\{L_b, R_b\} \cup  \{L_b, R_b\}^+\{L_a, R_a\}$. 
All the morphisms $\sigma$ occurring in $\bs'$ {satisfy $|\sigma(a)| \geq 2$ and $|\sigma(b)| \geq 2$}. 
Thus one can obtain the word $\bw$ as the limit of prefixes $\sigma_1'\cdots \sigma_n'(a_n)$ with $\sigma_1'\cdots \sigma_n'$ prefixes of $\bs'$. 
Letters $a_n$ are the first letters of the words associated to the infinite desubstitution of $\bw$ by morphisms in $\sigma'$. As letters are balanced words and as all morphisms $\sigma_i$ ($i \geq 1$) 
{preserve the property of being balanced}
by Lemma~\ref{morphisms preserving balance property}, we obtain the balance property of $\bw$.
\end{proof}

Let us observe, from the previous proof, that any infinite sequence of morphisms over $\cSbal$ is the directive sequence of at least one aperiodic or purely periodic balanced {word}. This property is general.

\begin{proposition}
\label{generalisation Arnoux et al}
Let $\cS \subseteq \subst(A)$. Any sequence in $\cS^\omega$ is the directive sequence of at least one element of $\stab(\cS)$.
\end{proposition}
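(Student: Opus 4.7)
The plan is to use a compactness argument on $A^\omega$. Fix any infinite word $\bu \in A^\omega$ (the statement is vacuous if $A$ is empty). Since each $\sigma_n$ is nonerasing, for every $k \geq 1$ the image $\sigma_1 \sigma_2 \cdots \sigma_k(\bu)$ is again an infinite word. Equipping $A^\omega$ with the usual prefix metric makes it a compact metric space (because $A$ is finite), and every nonerasing endomorphism of $A^*$ extends to a continuous self-map of $A^\omega$.

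First, using compactness, I would extract a subsequence $(k_j^{(0)})_{j \geq 1}$ of the positive integers such that $\sigma_1 \sigma_2 \cdots \sigma_{k_j^{(0)}}(\bu)$ converges in $A^\omega$ to some $\bw_0$. Then inductively, once $(k_j^{(n)})_{j \geq 1}$ and $\bw_n \in A^\omega$ have been produced (with $\sigma_{n+1} \cdots \sigma_{k_j^{(n)}}(\bu) \to \bw_n$ as $j \to \infty$), I would apply compactness again to extract a subsequence $(k_j^{(n+1)})_{j \geq 1}$ of $(k_j^{(n)})_{j \geq 1}$ along which $\sigma_{n+2} \cdots \sigma_{k_j^{(n+1)}}(\bu)$ converges to some $\bw_{n+1} \in A^\omega$.

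The final step is to verify that $\bw_n = \sigma_{n+1}(\bw_{n+1})$ for every $n \geq 0$. Since $(k_j^{(n+1)})_{j \geq 1}$ is a subsequence of $(k_j^{(n)})_{j \geq 1}$, the sequence $\sigma_{n+1} \sigma_{n+2} \cdots \sigma_{k_j^{(n+1)}}(\bu)$ still converges to $\bw_n$ as $j \to \infty$. Rewriting each such term as $\sigma_{n+1}\bigl(\sigma_{n+2} \cdots \sigma_{k_j^{(n+1)}}(\bu)\bigr)$ and applying the continuity of $\sigma_{n+1}$ gives $\bw_n = \sigma_{n+1}(\bw_{n+1})$. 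Hence $(\bw_n)_{n \geq 0}$ exhibits $(\sigma_n)_{n \geq 1}$ as a directive sequence of $\bw_0$, and $\bw_0 \in \stab(\cS)$.

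The main obstacle to anticipate is the need for all the $\bw_n$ to be compatible \emph{simultaneously}: a single use of compactness only supplies $\bw_0$. The nested extractions handle this by arranging that every previously chosen limit remains a limit along the refined indices, so that continuity of each $\sigma_{n+1}$ suffices to link consecutive $\bw_n$. Both the finiteness of $A$ (for compactness of $A^\omega$) and the nonerasing hypothesis (to keep the relevant images infinite and the endomorphisms continuous on $A^\omega$) are used essentially.
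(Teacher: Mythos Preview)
Your proof is correct and takes a genuinely different route from the paper's. The paper argues combinatorially: it defines the first-letter maps $f_n:A\to A$, $a\mapsto \first(\sigma_n(a))$, uses K\"onig's lemma to find a coherent sequence of letters $(a_n)_{n\ge 0}$ with $a_n=f_{n+1}(a_{n+1})$, and then notes that the finite words $u_n=\sigma_1\cdots\sigma_n(a_n)$ form a chain of prefixes; either $|u_n|\to\infty$ and the limit is the desired word, or the chain stabilizes at some $u$ and $u^\omega$ is shown to be a limit point. Your argument replaces this case split by a purely topological nested-extraction scheme on $A^\omega$ combined with continuity of nonerasing substitutions. What the paper's approach buys is explicit structural information about the limit point (it is either an honest $\cS$-adic limit or a specific periodic word), which feeds naturally into later results such as Proposition~\ref{P:structure_part_3}. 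What your approach buys is brevity and robustness: no case analysis is needed, and the same scheme would work verbatim for any sequence of continuous self-maps of a compact metric space. One minor point you may wish to make explicit is that the indices $k_j^{(n)}$ tend to infinity with $j$, so that $\sigma_{n+2}\cdots\sigma_{k_j^{(n)}}(\bu)$ is well defined for all large $j$; this is implicit in your setup but worth a word.
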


This proposition (and its proof) is a generalization of a result by P.~Arnoux, M.~Mizutani and T.~Sella \cite[Prop2.1]{ArnouxMizutaniSellami2014TCS}: \textit{Any primitive sequence of substitutions has a finite and nonzero number of limit points}.  Let {us} recall that a sequence of substitutions $(\sigma_n)_{n \geq 0}$ over an alphabet $A$ is \textit{primitive}
 if for all $n\geq 0$, there exists an integer $k$ such that, 
for each letter $a$ in $A$, all letters of $A$ occur in $\sigma_n \cdots \sigma_{n+k}(a)$.
For an arbitrary set of substitutions, there may not exist aperiodic limit points. 
Indeed when $\cS = \{ R_b \}$, {we have} $\stab(\cS) = \{ b^\omega \} \cup \{ b^n a b^\omega \mid n \geq 0\}$.
There may also exist  infinitely many aperiodic limit points.
Indeed let $g$ be the morphism defined by $g(a) = abab$, $g(b) = b$. 
The fixed point $g^\omega(a)$ of $g$ starting with $a$ is aperiodic as it contains all words $ab^na$ (with $n \geq 1$) as factors.
The infinite fixed points of $g$ are $b^\omega$ and the words $b^ng^\omega(a)$ for all $n \geq 0$.

\begin{proof}[Proof of Proposition~\ref{generalisation Arnoux et al}]
Let $(\sigma_n)_{n \geq 1}$ be a sequence in $\cS^\omega$.
For any $n \geq 1$, let $f_n: A \mapsto A$ be the map that sends any letter $a \in A$
 to the first letter of $\sigma_n(a)$. 
The sequence $f_1(f_2 \cdots(f_n(A))\cdots)$ of subsets of $A$ is a decreasing sequence of nonempty subsets of $A$, so their intersection is nonempty. 
Let $X$ be the set of all words $a_0 \cdots a_n$, with $n \geq 0$, such that $a_i = f_{i+1}(a_{i+1})$  for all $i, 0 \leq i < n$.
By Konig's lemma, there exists an infinite sequence of letters $(a_n)_{n \geq 0}$
such that $a_n = f_{n+1}(a_{n+1})$ for all $n \geq 0$.

Consider the sequence of words $(u_n)_{n \geq 0}$ defined by $u_n = \sigma_1\sigma_2 \cdots \sigma_{n} (a_{n})$. 
Since $a_{n}$ is the first letter of $\sigma_{n+1}(a_{n+1})$,  $u_n$ is, by construction, a prefix of $u_{n+1}$.
If the sequence {$(u_n)_{n \geq 0}$} is not ultimately periodic, the limit $\lim_{n \to \infty} u_n$ 
defines an infinite word which is a limit point of the sequence $(\sigma_n)_{n \geq 1}$.
If the sequence $u_n$ is ultimately periodic, then its ultimate value is a finite word $u$. 
Thus the infinite word $u^\omega$ is a limit point of $(\sigma_n)_{n \geq 1}$.
\end{proof}

\section{\label{sec:structure}About the structure of stable sets}

In this section, we study links between stable sets and substitutive-adicity.
We also provide a description of elements of a stable set $\stab(\cS)$ that are not $\cS$-adic.
For a survey on $S$-adicity see, \textit{e.g.}, Berthé and Delecroix \cite{Berthe_Delecroix2014RIMS}.
Let {us} recall that an infinite word $\bw$ is \textit{$S$-adic}
if there exist
a sequence $(\sigma_n)_{n \geq 1}$, $\sigma_n: A_{n+1}^* \to A_{n}^*$, of substitutions and
a sequence of letters $(a_n)_{n \geq 1}$
such that 
$\bw = \lim_{n \to \infty} \sigma_1 \cdots \sigma_n(a_n)$.
The sequence $(\sigma_n)_{n \geq 1}$ is called  a \textit{directive sequence} of $\bw$.

Denoting $\cS = \{ \sigma_n \mid n \geq 1\}$, $\bw$ is $\cS$-adic (where $\cS$ refers to the set of substitutions and not to the term ``substitutive").
In what follows we will only consider sets of substitutions that are endomorphisms. 
As already said, Berthé and Delecroix \cite{Berthe_Delecroix2014RIMS} mentioned that ``\textit{an $S$-adic system is a system that can be indefinitely desubstituted}". The next result formalizes this in our context.

\begin{proposition}
\label{S-adic=>stable}
Let $\cS \subseteq \subst(A)$. Any $\cS$-adic word belongs to $\stab(\cS)$.

Moreover{,} if $(\sigma_n)_{n \geq 1}$ is the directive word  of an $\cS$-adic word $\bw${,} 
then $(\sigma_n)_{n \geq 1}$ is also a directive sequence of $\bw$ as an element of $\stab(\cS)$.
\end{proposition}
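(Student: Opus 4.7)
The plan is to exhibit, starting from the $S$-adic representation $\bw = \lim_{n \to \infty} \sigma_1 \cdots \sigma_n(a_n)$, an explicit witnessing sequence $(\bw_n)_{n \geq 0}$ of infinite words satisfying $\bw_0 = \bw$ and $\bw_n = \sigma_{n+1}(\bw_{n+1})$ for the very same directive sequence $(\sigma_n)_{n \geq 1}$. The natural (and essentially forced) candidate is the ``tail'' limit
$$\bw_n \;=\; \lim_{k \to \infty} \sigma_{n+1}\sigma_{n+2}\cdots\sigma_{n+k}(a_{n+k}),$$
and the whole argument amounts to checking that this is a well-defined infinite word and then verifying the two identities.

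First I would unpack what the outer convergence already gives. For $\lim_{n \to \infty} \sigma_1\cdots\sigma_n(a_n)$ to define an infinite word, the finite words $\sigma_1\cdots\sigma_n(a_n)$ must form a chain of prefixes of unbounded length; and since every $\sigma_i$ is nonerasing, this forces $a_n = \first(\sigma_{n+1}(a_{n+1}))$ for every $n \geq 0$. An easy induction on $k$ then shows that $\sigma_{n+1}\cdots\sigma_{n+k}(a_{n+k})$ is a prefix of $\sigma_{n+1}\cdots\sigma_{n+k+1}(a_{n+k+1})$, so the sequence defining $\bw_n$ is at least increasing in the prefix order.

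The main obstacle is the length condition: I must rule out that these inner prefixes stay bounded, in which case $\bw_n$ would only be a finite word. I would argue by contradiction. Since $A$ is finite, boundedness would force the increasing prefix chain to be eventually constant equal to some finite word $u$; applying the nonerasing morphism $\sigma_1\cdots\sigma_n$ would then yield $\sigma_1\cdots\sigma_{n+k}(a_{n+k}) = \sigma_1\cdots\sigma_n(u)$ for all large $k$, contradicting the fact that the outer limit is the infinite word $\bw$. Hence each $\bw_n$ is a well-defined infinite word.

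The two identities are then routine. The equality $\sigma_{n+1}(\bw_{n+1}) = \bw_n$ follows from the continuity of nonerasing endomorphisms of $A^*$ for the prefix topology: one pulls $\sigma_{n+1}$ inside the limit defining $\bw_{n+1}$ and reindexes. And $\bw_0 = \bw$ is just the definition of the $S$-adic limit. This exhibits $\bw \in \stab(\cS)$ and simultaneously shows that $(\sigma_n)_{n \geq 1}$ is a directive sequence of $\bw$ in the stable-set sense.
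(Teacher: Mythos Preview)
Your argument has a genuine gap at the very first step. You claim that the convergence $\bw = \lim_{n\to\infty}\sigma_1\cdots\sigma_n(a_n)$ forces the finite words $\sigma_1\cdots\sigma_n(a_n)$ to form a prefix chain and, via nonerasingness, that $a_n = \first(\sigma_{n+1}(a_{n+1}))$. Neither claim is true in general. Convergence in the prefix topology only means that for every $k$ the word $\sigma_1\cdots\sigma_n(a_n)$ eventually has the length-$k$ prefix of $\bw$ as a prefix; it does \emph{not} require $\sigma_1\cdots\sigma_n(a_n)$ to be a prefix of $\sigma_1\cdots\sigma_{n+1}(a_{n+1})$. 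And even when that prefix relation does hold, the substitutions need not be injective, so one cannot cancel $\sigma_1\cdots\sigma_n$ to deduce anything about $a_n$ versus $\first(\sigma_{n+1}(a_{n+1}))$.

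The paper itself supplies the killing example in the paragraph immediately following the proposition: with $f\colon a\mapsto a,\ b\mapsto a$ and $g\colon a\mapsto bb,\ b\mapsto aa$, and $a_n = a$ for all $n$, one has $\lim_{n\to\infty} f g^n(a) = a^\omega$, yet $\first(g(a)) = b \neq a$, and indeed your tail limit $\bw_1 = \lim_{n\to\infty} g^n(a)$ does not exist (it oscillates between powers of $a$ and powers of $b$). So your candidate sequence $(\bw_n)$ is not well defined in general. This is precisely why the paper's proof is elaborate: it builds a tree of length-$k$ prefixes of the words $\sigma_m\cdots\sigma_n(a_n)$, applies K\"onig's lemma, and separately handles the case where no infinite branch gives growing images by showing that $\bw$ then lies in $(\stabultlet(\bs))^\omega$, invoking the finiteness result of Proposition~\ref{P:GenStabFin_fini}. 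Your shortcut bypasses exactly the obstacle that makes the proposition nontrivial.
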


Let us observe that Proposition~\ref{S-adic=>stable} is not immediate. 
If $(\sigma_n)_{n \geq 1}$  is the directive sequence of an {$\cS$-adic} word $\bw$, there exists a sequence of letters $(a_n)_{n \geq 1}$ 
such that $\bw = \lim_{n \to \infty} \sigma_1 \sigma_2 \cdots \sigma_n( a_{n})$. 
This does not imply that for all $k \geq 0$, the limit  $\lim_{n \to \infty} \sigma_k \sigma_{k+1} \cdots \sigma_n( a_{n})$ exists. 
Hence the fact that $\bw$ belongs to $\stab(\cS)$ is not immediate. 
This phenomenon can be illustrated by the following example. 
Let $f: a \mapsto a, b \mapsto a$ and
$g: a \mapsto bb, b \mapsto aa$.
For all $n \geq 0$, we have
$g^{2n}(a) = a^{2^{2n}}$ $g^{2n+1}(a) = b^{2^{2n+1}}$.
Hence taking $a_n =a$ for all $n \geq 1$, we have $\lim_{n \to \infty} fg^n(a) = a^\omega$ but $\lim g^n(a)$ does not exist.
Nevertheless taking $\bw_{2n} = a^\omega$ and $\bw_{2n+1} = b^\omega$ provides a sequence of infinite desubstituted words of $a^\omega$ using $f$ and $g$.

{Another difficulty to cope with is that, for some sets $\cS$ of substitutions, there may exist a word $\bw$ that is $\cS$-adic but for which, given any sequence of desubstituted words $(\bw_n)_{n \geq 0}$ of $\bw$, we do not have
$\bw = \lim_{n \to \infty} \sigma_1 \cdots \sigma_n( \first( \bw_n))$. For instance, let $\cS = \{L_a\}$. 
There is only one word in $\stab(\cS)$: the word $a^\omega$. The unique sequence of desubstituted words is $(a^\omega)_{n \geq 0}$ and $a^\omega \neq \lim_{n \to \infty} L_a^n(\first(a^\omega))$. Nevertheless $a^\omega$ is $\{L_a\}$-adic since $a^\omega = \lim_{n \to \infty} a^nb = \lim_{n \to \infty} L_a^n(b)$.}

{The situation in the previous example can be greatly explained by the fact 
that the finite word $a$ can be indefinitely desubstituted. 
In Section~{\protect\ref{subsec:desubstitutable_finite}}, 
we study the set of finite words that can be indefinitely desubstitutable.
In particular, Proposition~{\ref{P:GenStabFin_fini}} states that, 
given a directive sequence $\bs$, 
the set of finite words that can be indefinitely desubstituted with $\bs$
is finitely generated.
The next example shows that this does not mean that 
the set of finite words that can be indefinitely desubstituted using a given set of substitutions
is finitely generated. 
Set $\cS = \{L_a, L_b\}$. 
The words $a$ and $b$ can be indefinitely desubstituted using, 
respectively, $L_a^\omega$ and $L_b^\omega$ as directive sequences.
But not all words over $\{a, b\}$ belong to $\stab(\cS)$.}

{After proving Proposition~{\protect\ref{P:GenStabFin_fini}} 
in Section~{\protect\ref{subsec:desubstitutable_finite}},
we use this proposition to prove Proposition~{\protect\ref{S-adic=>stable}} (in Section~{\protect{\ref{subsec:proof4.1}}}).
In Section~{\protect\ref{subsec:more_on_structure}}, we provide more information on the structure of stable sets.
In Section~{\protect\ref{subsec:s-adic_balanced}}, we consider a question relative to a converse to 
Proposition~{\protect\ref{S-adic=>stable}}.}

\subsection{\label{subsec:desubstitutable_finite}\texorpdfstring{{On desubstitutable finite words}}{}}

{For $\bs \in \cS^\omega$, 
let $\stabfin(\bs)$ and $\stablet(\bs)$
denote the sets of, respectively, finite nonempty words and letters
that can be indefinitely desubstituted using $\bs$ as directive sequence.
Set $\genstabfin(\bs) = \stabfin(\bs)\setminus (\stabfin(\bs)\stabfin(\bs)^+)$: 
this set is the set of all elements of $\stabfin(\bs)$ that cannot be decomposed 
as a concatenation of two or more elements of $\stabfin(\bs)$.
Finally, let $\stabultlet(\bs)$ be the subset of $\stabfin(\bs)$ containing words 
that can be desusbtituted in such a way that desubstituted words are ultimately letters:}
$$\stabultlet(\bs) = \{ f(\alpha) \mid \bs = f \bs', f \in \cS^*, \bs' \in \cS^\omega  {\rm ~and~} \alpha \in \stablet(\bs')\}$$

\begin{proposition}
\label{P:structure_parts_1_2}
Let $\cS \subseteq \subst(A)$ and let $\bs \in \cS^\omega$.
\begin{enumerate}
\item $\stabfin(\bs) = \genstabfin(\bs)^+$
\item {$\genstabfin(\bs) \subseteq \stabultlet(\bs)$}
\end{enumerate}
\end{proposition}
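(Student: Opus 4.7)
My plan is to handle the two items separately, using that $\stabfin(\bs)$ is closed under concatenation as the backbone of item~1 and a careful length analysis of a desubstitution sequence for item~2.

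For item~1, the inclusion $\genstabfin(\bs)^+ \subseteq \stabfin(\bs)$ reduces to showing that $\stabfin(\bs)$ is closed under concatenation: if $u, v \in \stabfin(\bs)$ admit desubstitution sequences $(u_n)_{n \geq 0}$ and $(v_n)_{n \geq 0}$ along $\bs = (\sigma_n)_{n \geq 1}$, then $(u_n v_n)_{n \geq 0}$ is a valid desubstitution of $uv$ because each $\sigma_{n+1}$ is a morphism. For the reverse inclusion I induct on $|w|$ for $w \in \stabfin(\bs)$: either $w \in \genstabfin(\bs)$ (base case), or $w = uv$ with $u \in \stabfin(\bs)$ and $v \in \stabfin(\bs)^+$, in which case both $u$ and $v$ (the latter itself lying in $\stabfin(\bs)$ by the closure just shown) are strictly shorter than $w$, so the induction hypothesis places them in $\genstabfin(\bs)^+$, whence $w \in \genstabfin(\bs)^+$ too.

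For item~2, I fix $w \in \genstabfin(\bs)$ with a desubstitution sequence $(w_n)_{n \geq 0}$, $w_0 = w$, $w_n = \sigma_{n+1}(w_{n+1})$. Since each $\sigma_{n+1}$ is nonerasing, $(|w_n|)_{n \geq 0}$ is nonincreasing and stabilizes at some $K \geq 1$ from an index $N$ onward. If $K = 1$ then $w_N$ is a letter, the shifted sequence $(w_{N+k})_{k \geq 0}$ witnesses $w_N \in \stablet(\bs')$ for $\bs' = (\sigma_n)_{n > N}$, and $w = (\sigma_1 \cdots \sigma_N)(w_N)$ exhibits $w \in \stabultlet(\bs)$, which is the desired conclusion. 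The real work is therefore to exclude $K \geq 2$. Assuming $K \geq 2$, the equality $|w_n| = |w_{n+1}|$ for $n \geq N$ forces $\sigma_{n+1}$ to act letter-to-letter on the letters appearing in $w_{n+1}$. Writing $w_N = a_1 \cdots a_K$ with $a_i \in A$, the $i$-th letter of $w_n$ (for $n \geq N$) then forms a sequence of letters with successive $\sigma_{n+1}$-preimages, so $a_i \in \stablet(\bs')$. A short gluing argument combining the finite iterates $(\sigma_{k+1} \cdots \sigma_N)(a_i)$, for $0 \leq k < N$, with this letter-by-letter tail shows that $(\sigma_1 \cdots \sigma_N)(a_i) \in \stabfin(\bs)$ for every $i$. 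Consequently $w = (\sigma_1 \cdots \sigma_N)(a_1) \cdots (\sigma_1 \cdots \sigma_N)(a_K)$ is a concatenation of $K \geq 2$ elements of $\stabfin(\bs)$, contradicting $w \in \genstabfin(\bs)$.

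The main obstacle I foresee is precisely this last step of item~2: one must transport the clean letter-by-letter behavior observed from index $N$ onward back through the finite prefix $\sigma_1 \cdots \sigma_N$ of $\bs$, and verify that the finite images $(\sigma_1 \cdots \sigma_N)(a_i)$ genuinely lie in $\stabfin(\bs)$ (and not merely in some auxiliary set), so that the resulting decomposition of $w$ really witnesses $w \in \stabfin(\bs)\stabfin(\bs)^+$. Everything else in the proof is routine bookkeeping about desubstitution sequences and length estimates.
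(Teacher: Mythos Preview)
Your proof is correct and follows the same approach as the paper. The paper dispatches item~1 in a single sentence (``follows directly from the definitions'') and for item~2 simply observes that $(|w_n|)_{n\geq 0}$ is nonincreasing, hence ultimately constant, and that a limit value $\geq 2$ would contradict indecomposability; you spell out both parts in full detail, including the gluing of the letter-by-letter tail with the finite prefix $\sigma_1\cdots\sigma_N$, which is exactly the content the paper leaves implicit. The ``main obstacle'' you flag is in fact routine: setting $x_i^{(n)} = (\sigma_{n+1}\cdots\sigma_N)(a_i)$ for $n < N$ and $x_i^{(n)}$ equal to the $i$-th letter of $w_n$ for $n \geq N$ gives a desubstitution sequence for $(\sigma_1\cdots\sigma_N)(a_i)$ along $\bs$, so no additional idea is needed there.
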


\begin{proof}
Relation~1 follows directly the definition of $\stabfin(\bs)$ and $\genstabfin(\bs)$.

Relation~2 is also a consequence of the definition of the sets $\genstabfin(\bs)$ and {$\stabultlet(\bs)$}.
Let $w \in \genstabfin(\bs)$.
Let $\bs = (\sigma_n)_{n \geq 1}$ and let $(w_n)_{n \geq 0}$ be the corresponding sequence 
of finite desubstituted words: $w_0 = w$
and $w_{n-1} = \sigma_n(w_n)$ for $n \geq 1$.
Observe {that} $|w_{n-1}| \geq |w_n| \geq 1$.
Hence the sequence $(|w_n|)_{n \geq 0}$ is ultimately constant.
If this constant is 2 or more,
then we find a contradiction with {indecomposability} of elements of  $\genstabfin(\bs)$. 
Hence $(|w_n|)_{n \geq 0}$ is ultimately 1 and Relation~2 holds.
\end{proof}

Observe that the inverse inclusion of the second assertion of Proposition~\ref{P:structure_parts_1_2}
does not hold in general.
For instance if $\cS = \{f, \id\}$ with $\id$ the identity morphism and $f$ defined by 
$f(a) = bc$, $f(b) = b$ and $f(c) = c${, then} $bc \not\in \genstabfin(f.\id^\omega)$ 
(with $f.\id^\omega$ the sequence of substitutions beginning with $f$ and followed by $\id^\omega$),
even if $bc = f(a)$, as $b$ and $c$ belong to $\genstabfin(f.\id^\omega)$.
{The same phenomenon exists if one consider, instead of $\id$, any morphism $g$ such that $g(\{a, b, c\})= \{a, b, c\}$.}

\begin{proposition}
\label{P:GenStabFin_fini}
Let $\cS \subseteq \subst(A)$ and {$\bs \in \cS^\omega$}.
The sets $\genstabfin(\bs)$ and {$\stabultlet(\bs)$ are} finite.
{Their cardinalities are bounded by the cardinality of $A$.}

More precisely there exist $p \in \cS^*$ and $\bs' \in \cS^\omega$ such that $\bs = p\bs'$, and, 
for all elements $u$ in $\stabultlet(\bs)$, 
$u = p(\alpha)$ for some $\alpha \in \stablet(\bs')$.
\end{proposition}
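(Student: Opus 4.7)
The plan is to exhibit $\stabultlet(\bs)$ as an increasing union, indexed by the prefix length, of sets each of cardinality at most $\#A$, and then argue that this union must stabilize.

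For each $N \geq 0$, write $\bs^{(N)} = (\sigma_{N+1},\sigma_{N+2},\ldots)$ for the infinite suffix obtained by removing the first $N$ substitutions from $\bs = (\sigma_n)_{n\geq 1}$, and set
$$T_N = \sigma_1\circ\sigma_2\circ\cdots\circ\sigma_N(\stablet(\bs^{(N)})).$$
Unfolding the definition of $\stabultlet(\bs)$ given in the paper immediately yields $\stabultlet(\bs) = \bigcup_{N \geq 0} T_N$, since any decomposition $\bs = f\bs'$ with $f \in \cS^*$ corresponds to a unique choice of $N = |f|$.

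First I would prove the monotonicity $T_N \subseteq T_{N+1}$. Take $u = \sigma_1\cdots\sigma_N(\alpha)$ with $\alpha \in \stablet(\bs^{(N)})$. Since $\alpha$ is a letter and is indefinitely desubstitutable using $\bs^{(N)}$, there is a finite word $\beta$ with $\sigma_{N+1}(\beta) = \alpha$; because $\sigma_{N+1}$ is nonerasing and $|\alpha| = 1$, the word $\beta$ must be a single letter, and by construction it is still indefinitely desubstitutable using $\bs^{(N+1)}$, so $\beta \in \stablet(\bs^{(N+1)})$. Hence $u = \sigma_1\cdots\sigma_{N+1}(\beta) \in T_{N+1}$. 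Next I would observe the uniform bound $|T_N| \leq |\stablet(\bs^{(N)})| \leq \#A$, since $\stablet(\bs^{(N)})$ is a subset of $A$.

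With these two ingredients the argument concludes quickly: $(T_N)_{N\geq 0}$ is an increasing sequence of finite subsets of $A^+$, and the nondecreasing integer sequence $(|T_N|)_{N \geq 0}$ is bounded above by $\#A$, hence eventually constant. At an index $N_0$ where $|T_{N_0}| = |T_{N_0+1}|$, the inclusion $T_{N_0} \subseteq T_{N_0+1}$ forces equality, and iterating gives $T_N = T_{N_0}$ for every $N \geq N_0$. Therefore $\stabultlet(\bs) = T_{N_0}$ is finite of cardinality at most $\#A$. Taking $p = \sigma_1\cdots\sigma_{N_0}$ and $\bs' = \bs^{(N_0)}$ delivers the refined statement. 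The bound on $\genstabfin(\bs)$ is then immediate from the inclusion $\genstabfin(\bs) \subseteq \stabultlet(\bs)$ provided by Proposition~\ref{P:structure_parts_1_2}(2).

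The main obstacle is really just to identify the correct increasing chain and to justify that desubstituting a letter by a nonerasing substitution can only yield a letter; the stabilization of a nondecreasing bounded integer sequence then does the rest of the work. No calculation beyond these observations is required.
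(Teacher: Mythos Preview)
Your proof is correct and follows essentially the same idea as the paper's: both arguments rest on the observation that desubstituting a letter through a nonerasing substitution yields a letter, so every element of $\stabultlet(\bs)$ can be realised as $p(\alpha)$ for a single sufficiently long prefix $p$ and some $\alpha\in\stablet(\bs')$, whence the bound $\#A$. The paper phrases this by taking an arbitrary finite collection $x_1,\ldots,x_k\in\stabultlet(\bs)$, choosing $p$ to be the longest of the associated prefixes, and pushing each $x_i$ down to a letter in $\stablet(\bs')$; you phrase it as the stabilisation of the increasing chain $(T_N)_{N\ge0}$. One small wording issue: the clause ``at an index $N_0$ where $|T_{N_0}|=|T_{N_0+1}|$ \ldots\ iterating gives $T_N=T_{N_0}$'' does not literally follow from a single equality of consecutive terms, but since you have already noted that $(|T_N|)$ is eventually constant, simply take $N_0$ to be an index beyond which the cardinalities no longer change, and the conclusion is immediate.
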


\begin{proof}
{By Relation~2 of Proposition~{\protect\ref{P:structure_parts_1_2}}, 
it is sufficient to prove the result for  $\stabultlet(\bs)$.
Let $x_1$, \ldots, $x_k$ be distinct elements of $\stabultlet(\bs)$.}
{T}here exist prefixes $p_1$, \ldots, $p_k$ of $\bs$ 
(each $p_i$ is a composition of morphisms),
suffixes $\bs_1$, \ldots, $\bs_k$ of $\bs$
and letters $\alpha_1$, \ldots, $\alpha_k$ such that,
for all $i$, $1 \leq i \leq k$,
$x_i = p_i(\alpha_i)$, $\alpha_i \in \stablet(\bs_i)$
and $\bs = p_i \bs_i$.
Let $p$ be the longest word among the words $p_1$, \ldots, $p_k$.
Let $\bs'$ be the corresponding suffix of $\bs$: $\bs = p \bs'$.
For each $i$, let $f_i$ such that $p = p_i f_i$.
As $\alpha_i \in \stablet(\bs_i)$,
there exists a letter $\alpha_i' \in \stablet(\bs')$
such that $\alpha_i = f_i(\alpha_i')$.
Hence for all $i$, $x_i = p(\alpha_i')$.
As $\stablet(\bs')$ {is a subset of $A$},
we conclude that {$k \leq \#A$. So $\stabultlet(\bs)$ is finite.}
\end{proof}

\subsection{\label{subsec:proof4.1}\texorpdfstring{{Proof of Proposition \protect\ref{S-adic=>stable}}}{}}

Let $\bw$ be an $\cS$-adic word. 
There exist a sequence $(\sigma_n)_{n \geq 1}$ of elements of $\cS^\omega$ 
and a sequence of letters $(a_n)_{n \geq 1}$ 
such that $\bw = \lim_{n \to \infty} \sigma_1 \sigma_2 \cdots \sigma_n( a_n)$.
{We prove that $\bw \in \stab(\cS)$ 
and that $(\sigma_n)_{n \geq 1}$ is a directive sequence of $\bw$. 
We have to construct a sequence of desubstituted words.}

{Note that, for all $m \geq 1$,  $\lim_{n \to \infty} |\sigma_m \cdots \sigma_n( a_n)| = \infty$.
Let $k \geq 1$ be an integer.
We first consider links between prefixes of length $k$ of words $\sigma_m \cdots \sigma_n( a_n)$ 
(when $k \leq |\sigma_m \cdots \sigma_n( a_n)|$).
Informally, the idea of the proof is that if we can find a $k$ and a sequence $(p_m)_{m \geq 0}$ of such prefixes
such that, for all $m$, $p_m$ is a prefix of $\sigma_m(p_{m+1})$  and $\lim_{m \to \infty} |\sigma_1 \sigma_2 \cdots \sigma_m( p_m)| = \infty$,
then the words $\lim_{n \to \infty} \sigma_m\cdots \sigma_n(p_n)$ would form a sequence of desubstituted words. If we cannot find any $k$ and any sequence of such prefixes, then $\bw$ has infinitely many prefixes that can be indefinitely desubstituted using the directive sequence $(\sigma_n)_{n \geq 1}$ and the construction of desubstituted words would differ.}

{Let ${\cal V}_k$ be the set of all pairs $(p, m)$ with $m \geq 1$ an integer and $p$ a word  of length $k$ 
such that there exists an integer $n \geq m$ for which $p$ is a prefix of $\sigma_m \cdots \sigma_n( a_n)$. 
Let ${\cal G}_k$ be the directed acyclic graph
whose vertices are the elements of ${\cal V}_k$
and the edges are the pairs $((p, m), (p', m+1))$ of elements of ${\cal V}_k$ such that $p$ is the prefix of length $k$ of $\sigma_m(p')$.}

{Let $(p', m+1)$ be an element of ${\cal V}_k$ with $m \geq 1$.
Let $n \geq m+1$ be an integer such that $p'$ is a prefix of $\sigma_{m+1}\cdots\sigma_n(a_n)$.
Let $p$ be the prefix of length $k$ of $\sigma_m(p')$: $p$ is also a prefix of
$\sigma_m\sigma_{m+1}\cdots\sigma_n(a_n)$. Hence $(p,m) \in {\cal V}_k$ and  $((p, m), (p', m+1))$ is an edge of ${\cal G}_k$. 
More precisely,
since $\sigma_m(p')$ has a unique prefix of length $k$,
 $((p, m), (p', m+1))$ is the unique edge of ${\cal G}_k$ ending on $(p', m+1)$. It follows that ${\cal G}_k$ is an infinite forest.}

{Let $\pi_k$ be the prefix of length $k$ of $\bw$.
As $\bw = \lim_{n \to \infty} \sigma_1 \sigma_2 \cdots \sigma_n( a_n)$, 
there exist infinitely many elements $(p,m)$ in ${\cal V}_k$ for which there exists a path from $(\pi_k, 1)$ to $(p, m)$.
Let ${\cal T}_k$ be the subgraph of ${\cal G}_k$ induced by elements of ${\cal V}_k$ that are accessible from $(\pi_k, 1)$.
The graph ${\cal T}_k$ is an infinite tree.
By Konig's lemma, there exists an infinite path $(p_i, i)_{i \geq 1}$ starting from $(\pi_k, 1)$.}

{Let $m, n$ with $n \geq m \geq 1$. 
By construction, $p_m$ is the prefix of length $k$ of $\sigma_m\cdots \sigma_{n-1}(p_n)$
and, by definition of ${\cal V}_k$, $p_n$ is a prefix of 
$\sigma_n\cdots \sigma_\ell(a_\ell)$ for some $\ell \geq 1$.
Hence $p_m$ is the prefix of length $k$ of $\sigma_m\cdots \sigma_\ell(a_\ell)$ for arbitrary large $\ell$.
It follows that $\sigma_1\cdots\sigma_{m-1}(p_m)$ is a prefix of $\sigma_1\cdots \sigma_\ell(a_\ell)$ for arbitrary large $\ell$.
As $\bw = \lim_{n \to \infty} \sigma_1 \sigma_2 \cdots \sigma_n( a_n)$, 
$\sigma_1\cdots\sigma_{m-1}(p_m)$ is a prefix of $\bw$.}

{Assume that $\lim_{n \to \infty} |\sigma_1\cdots\sigma_{n-1}(p_n)| = \infty$.
We have $\bw = \lim_{n \to \infty} \sigma_1\cdots\sigma_{n-1}(p_n)$.
Let $m \geq 1$.
We also have $\lim_{n \to \infty} |\sigma_m\cdots\sigma_{n-1}(p_n)| = \infty$.
Moreover, by construction, for all $n \geq m$,
$\sigma_m\cdots\sigma_{n-1}(p_n)$ is a prefix of $\sigma_m\cdots\sigma_{n}(p_{n+1})$.
Thus $\lim_{n \to \infty} \sigma_m\cdots\sigma_{n-1}(p_n)$ defines an infinite word.
We denote it $\bw_m$.
It follows from the construction that, for all $m \geq 1$,
$\bw_m = \sigma_m(\bw_{m+1})$
and $\bw_1 = \bw$.
Hence $\bw \in \stab(\cS)$ and $\bw$ has $(\sigma_n)_{n \geq 1}$ as directive sequences.}

{To end the proof of Proposition~{\protect\ref{S-adic=>stable}},
we consider the case where for all $k \geq 1$, 
and for all infinite paths $(p_i, i)_{i \geq 1}$ in ${\cal T}_k$, 
$\lim_{n \to \infty} |\sigma_1\cdots\sigma_{n-1}(p_n)|$ is finite.}

{Let $k \geq 1$ be an integer and let $(p_i, i)_{i \geq 1}$ be an infinite path in ${\cal T}_k$.
As, for all $i \geq 1$, $p_i$ is a prefix of $\sigma_i(p_{i+1})$, 
$\sigma_1\cdots\sigma_{i-1}(p_i)$is a prefix of $\sigma_1\cdots\sigma_i(p_{i+1})$.
The fact that $\lim_{n \to \infty} |\sigma_1\cdots\sigma_{n-1}(p_n)|$ is finite
implies that
there exists an integer $N$
such that, for all $n \geq N$, 
$|\sigma_1\cdots\sigma_{n-1}(p_n)| = |\sigma_1\cdots\sigma_{n}(p_{n+1})|$.
It follows that, for all $n \geq N$, $|p_n| = |\sigma_n(p_{n+1})|$
and so
$p_n = \sigma_n(p_{n+1})$.
As for all $n \geq 0$, $\sigma_1\cdots\sigma_{n-1}(p_n)$ is a prefix of 
$\sigma_1\cdots\sigma_{n}(p_{n+1})$ and a prefix of $\bw$, 
the limit $\lim_{n \to \infty} \sigma_1\cdots\sigma_{n-1}(p_n)$
converges to a prefix $\pi$ of $\bw$.}

{For any $n \geq 1$, $|p_n| = k$.
Let us decompose $p_n$ over letters.
For $i$ with $1 \leq i \leq k$, let $p_{n, i} \in A$ such that
$p_n = p_{n,1}\cdots p_{n, k}$.
We have $\pi = [\sigma_1\cdots\sigma_{N-1}(p_{N,1})] \cdots [\sigma_1\cdots\sigma_{N-1}(p_{N,k})]$
and for all $n \geq N$ and for all $i$, $1 \leq i \leq k$,
$p_{n,i} = \sigma_{n}(p_{n+1,i})$.
In other words, the prefix $\pi$ of $\bw$ can be decomposed into $k$ factors  
that are indefinitely desubstitutable with $(\sigma_n)_{n \geq 1}$ as directive sequence:
$\pi \in \stabultlet(\bs)$ with $\bs = (\sigma_n)_{n \geq 1}$.
}

{What is described before holds for arbitrary $k \geq 1$ since we have assumed: for all $k \geq 1$, 
and for all infinite paths $(p_i, i)_{i \geq 1}$ in ${\cal T}_k$, 
$\lim_{n \to \infty} |\sigma_1\cdots\sigma_{n-1}(p_n)|$ is finite.
This means that $\bw$ has infinitely many prefixes in $(\stabultlet(\bs))^*$.
By Proposition~{\protect\ref{P:GenStabFin_fini}}, $\stabultlet(\bs)$ is finite.
Thus $\bw$ belongs to $(\stabultlet(\bs))^\omega$.}

{Let us decompose $\bw$ over $\stabultlet(\bs)$.
Let $(u_i)_{i \geq 1}$ in $\stabultlet(\bs)$ 
such that $\bw = \prod_{i \geq 1} u_{i}$.
For each $i \geq 1$, let $(u_{i, j})_{j \geq 0}$ be the sequence of desubstituted words of $u_i$: $u_i = u_{i, 0}$
and $u_{i, j} = \sigma_{j+1}(u_{i, j+1})$ for $j \geq 0$.
Let $\bw_j = \prod_{i \geq 1} u_{i, j}$.
By construction, $\bw = \bw_0$ and, for $j \geq 0$,
$\bw_j = \sigma_{j+1}(\bw_{j+1})$. So $\bw$ belongs to $\stab(\cS)$ and $(\sigma_n)_{n \geq 0}$ 
is a directive sequence of $\bw$.}

\subsection{\label{subsec:more_on_structure}\texorpdfstring{{More on the structure on stable sets}}{}}

{For $\bs \in \cS^\omega$, 
let $\stab(\bs)$ 
denote the set of infinite words
that can be indefinitely desubstituted using $\bs$ as directive sequence.
Let $\adic(\bs)$ be the set of infinite words that are $S$-adic with $\bs$ as directive sequence.
}

\begin{proposition}
\label{P:structure_part_3}
Let $\cS \subseteq \subst(A)$ and let $\bs \in \cS^\omega$. We have:
$$\stab(\bs) = \genstabfin(\bs)^\omega \cup (\genstabfin(\bs))^*\adic(\bs)$$
\end{proposition}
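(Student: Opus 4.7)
The plan is to prove both inclusions. The direction $\supseteq$ is constructive. Writing $\bs = (\sigma_n)_{n \geq 1}$: if $\bw = \prod_{i \geq 1} u_i$ with each $u_i \in \genstabfin(\bs) \subseteq \stabfin(\bs)$, each $u_i$ admits a finite-word desubstitution $(u_{i,n})_{n \geq 0}$ with $u_{i,0} = u_i$ and $u_{i,n-1} = \sigma_n(u_{i,n})$, so $\bw_n := \prod_{i \geq 1} u_{i,n}$ is a desubstitution of $\bw$ with directive sequence $\bs$; if instead $\bw = u_1 \cdots u_k \bw'$ with $\bw' \in \adic(\bs)$, then Proposition~\ref{S-adic=>stable} places $\bw'$ in $\stab(\bs)$ with directive sequence $\bs$, and concatenating the desubstitutions of $u_1, \dots, u_k, \bw'$ gives a desubstitution of $\bw$.

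For the direction $\subseteq$, fix $\bw \in \stab(\bs)$ with desubstitution $(\bw_n)_{n \geq 0}$, and write $\bw_n = b_{n,1} b_{n,2} \cdots$ as an infinite concatenation of letters. The relation $\bw = \sigma_1 \cdots \sigma_n(\bw_n)$ decomposes $\bw$ into consecutive blocks $\sigma_1 \cdots \sigma_n(b_{n,i})$; let $B_n \subseteq \mathbb{N}$ be the set of starting positions of these blocks. Since $\bw_{n-1} = \sigma_n(\bw_n)$ implies that each block at level $n$ is a concatenation of blocks at level $n-1$, we have $B_n \supseteq B_{n+1}$; set $B_\infty := \bigcap_n B_n$, which contains $0$.

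Enumerate $B_\infty = \{0 = q_0 < q_1 < \cdots\}$. For each consecutive pair, the factor $u_i$ of $\bw$ lying between positions $q_{i-1}$ and $q_i$ belongs to $\stabfin(\bs)$: at every level $n$, both $q_{i-1}$ and $q_i$ are block boundaries, so $u_i = \sigma_1 \cdots \sigma_n(v_{i,n})$ for a well-determined finite factor $v_{i,n}$ of $\bw_n$, and $\bw_{n-1} = \sigma_n(\bw_n)$ forces $v_{i,n-1} = \sigma_n(v_{i,n})$, so $(v_{i,n})_{n \geq 0}$ is a desubstitution of $u_i$ with directive sequence $\bs$. If $B_\infty$ is infinite, these factors exhaust $\bw$, giving $\bw \in \stabfin(\bs)^\omega = \genstabfin(\bs)^\omega$ (using part~1 of Proposition~\ref{P:structure_parts_1_2}). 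If $B_\infty = \{q_0, \dots, q_k\}$ is finite, the same argument handles $u_1, \dots, u_k$, and it remains to show that the tail $\bw'$ of $\bw$ starting at position $q_k$ lies in $\adic(\bs)$. Since no position greater than $q_k$ belongs to $B_\infty$, each such position eventually leaves the $B_n$'s; hence for every $M \geq 1$ there exists $N$ with $\{q_k+1, \dots, q_k+M\} \cap B_n = \emptyset$ for $n \geq N$, which means the first block of $\bw'$ at level $n$, namely $\sigma_1 \cdots \sigma_n(b_{n,\ell_n})$ for an appropriate index $\ell_n$, has length exceeding $M$. Therefore $\bw' = \lim_{n \to \infty} \sigma_1 \cdots \sigma_n(b_{n,\ell_n})$ belongs to $\adic(\bs)$, giving $\bw \in \genstabfin(\bs)^* \adic(\bs)$.

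The main obstacle is the tail analysis in the finite-$B_\infty$ case: one must argue carefully that positions outside $B_\infty$ are progressively expelled from the $B_n$'s, and translate this into the divergence of the length of the first block at level $n$ in the tail, so that the tail is recognized as a genuine $\cS$-adic limit with directive sequence $\bs$. The decomposition $\stabfin(\bs) = \genstabfin(\bs)^+$ from Proposition~\ref{P:structure_parts_1_2} then lets us pass from $\stabfin(\bs)$-factorizations to $\genstabfin(\bs)$-factorizations without loss.
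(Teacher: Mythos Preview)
Your proof is correct, and for the inclusion $\supseteq$ it is essentially the same as the paper's (which simply says the inclusion ``follows directly the definitions'' and implicitly relies on Proposition~\ref{S-adic=>stable} for the $\adic$ part, exactly as you do).

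For the inclusion $\subseteq$, however, you take a genuinely different route. The paper argues \emph{locally} and \emph{inductively}: it only looks at the sequence of first letters $a_n = \first(\bw_n)$. Either $|\sigma_n(a_n)| \geq 2$ for arbitrarily large $n$, in which case the prefixes $\sigma_1\cdots\sigma_n(a_n)$ grow without bound and $\bw \in \adic(\bs)$ directly; or eventually $a_{n-1} = \sigma_n(a_n)$, so $u = \lim_n \sigma_1\cdots\sigma_n(a_n)$ is a finite word in $\stabfin(\bs)$, one writes $\bw = u\bw'$ with $\bw' \in \stab(\bs)$, and one recurses on $\bw'$. Your argument is \emph{global} and \emph{non-inductive}: you track the nested system of block boundaries $B_0 \supseteq B_1 \supseteq \cdots$ induced by all the letters of all the $\bw_n$ simultaneously, and the persistent boundaries $B_\infty = \bigcap_n B_n$ hand you the whole $\genstabfin(\bs)$-factorisation in one stroke. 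The trade-off is that the paper's approach needs essentially no bookkeeping but leaves the final ``induction on the length of prefixes'' somewhat informal (one has to observe that either the process terminates after finitely many peelings, landing in $(\genstabfin(\bs))^*\adic(\bs)$, or goes on forever, landing in $\genstabfin(\bs)^\omega$), whereas your approach makes that dichotomy explicit via the finiteness or infiniteness of $B_\infty$, at the cost of the boundary-tracking machinery. Both are clean; yours is closer in spirit to the block-structure arguments used elsewhere in the paper (e.g.\ the graphs ${\cal G}_k$ in the proof of Proposition~\ref{S-adic=>stable}).
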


\begin{proof}

{T}he inclusion
$\genstabfin(\bs)^\omega \cup (\genstabfin(\bs))^*\adic(\bs) \subseteq \stab(\cS)$
follows directly the definitions.
For the converse, assume that $\bw \in \stab(\bs)$. 
Let $\bs = (\sigma_n)_{n \geq 1}$ and let {$(\bw_n)_{n \geq 0}$} be the corresponding sequence of desubstituted words.
Let also $(a_n)_{n \geq 0}$ be the sequence $(\first(\bw_n))_{n \geq 0}$.
If there exist arbitrarily large integers $n$ such that $|\sigma_n(a_n)|  \geq 2$,
then $\bw (= \bw_0)$ belongs to $\adic(\bs)$.
Otherwise we ultimately have $a_{n-1} = \sigma_n(a_n)$.
Set $u= \lim_{n \to \infty} \sigma_1\cdots \sigma_n(a_n)$.
This word $u $ belongs to $\stabfin(\bs)$.
Thus, by Relation~1 of {Proposition~\protect\ref{P:structure_parts_1_2}}, $\bw = u \bw'$ with $u \in \genstabfin(\bs)^+$ and $\bw' \in \stab(\bs)$.
Hence the proof can end by induction on the length of prefixes of $\bw$.
\end{proof}

{Let us observe that $\genstabfin(\bs)^\omega \cup (\genstabfin(\bs))^*\adic(\bs)$ may not be empty. This is the case for instance when $\bs = L_a^\omega$ since $a^\omega = \lim_{n \to \infty} L_a^n(b)$
and $\genstabfin(\bs) = \{a\}$.}

The next result is a direct consequence of Propositions~\ref{P:structure_part_3} and \ref{P:GenStabFin_fini}.

\begin{corollary}
\label{C:structure2}
Let $\cS \subseteq \subst(A)$. An infinite word $\bw$ belongs to $\stab(\cS)$ if and only if one of the two following cases holds:
\begin{itemize}
\item $\bw = f(u \bw')$ with $f \in \cS^*$ and there exists a sequence $\bs \in \cS^\omega$
such that $u \in \stablet(\bs)^*$ and $\bw' \in \adic(\bs)$.
\item $\bw = f(\bw')$ with $f \in \cS^*$ and $\bw' \in (\stablet(\bs))^\omega$ 
for a sequence $\bs \in \cS^\omega$.
\end{itemize}
\end{corollary}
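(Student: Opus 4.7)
The plan is to read the corollary off from Proposition~\ref{P:structure_part_3} by writing $\stab(\cS)=\bigcup_{\bs_0\in\cS^\omega}\stab(\bs_0)$ and then using Proposition~\ref{P:GenStabFin_fini} to expose a common prefix $p$ of the directive sequence. For the forward direction, given $\bw\in\stab(\cS)$, pick $\bs_0\in\cS^\omega$ with $\bw\in\stab(\bs_0)$. Proposition~\ref{P:structure_part_3} places $\bw$ in one of the two forms $\bw=u_1\cdots u_k\bw''$ with $u_i\in\genstabfin(\bs_0)$ and $\bw''\in\adic(\bs_0)$, or $\bw=u_1u_2u_3\cdots$ with all $u_i\in\genstabfin(\bs_0)$. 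Since $\genstabfin(\bs_0)\subseteq\stabultlet(\bs_0)$ by Proposition~\ref{P:structure_parts_1_2}, Proposition~\ref{P:GenStabFin_fini} (specifically its ``more precisely'' clause) provides a single factorization $\bs_0=p\bs'$ and letters $\alpha_i\in\stablet(\bs')$ with $u_i=p(\alpha_i)$ for \emph{every} such $u_i$ simultaneously. Desubstituting the adic tail $\bw''$ by the first $|p|$ substitutions of $\bs_0$ strips off exactly $p$ and leaves a word in $\adic(\bs')$---a routine unwinding of the definition of $\cS$-adicity together with Proposition~\ref{S-adic=>stable}. Setting $f=p$ and $\bs=\bs'$ produces the two shapes in the statement.

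For the reverse direction, I would build an explicit desubstitution sequence by hand. In the first case, write $u=\alpha_1\cdots\alpha_k\in\stablet(\bs)^*$. Each $\alpha_i$ comes with a letter-indexed desubstitution sequence $(\alpha_i^{(n)})_{n\ge 0}$ along $\bs$ by definition of $\stablet(\bs)$, and by Proposition~\ref{S-adic=>stable}, the adic tail $\bw'$ has $\bs$ as a directive sequence as an element of $\stab(\bs)$, producing $(\bw'_n)_{n\ge 0}$. Concatenating letter by letter, $\bw_n=\alpha_1^{(n)}\cdots\alpha_k^{(n)}\bw'_n$ is a valid sequence of $\bs$-desubstitutions of $u\bw'$, so $u\bw'\in\stab(\bs)$; then $\bw=f(u\bw')\in\stab(f\bs)\subseteq\stab(\cS)$. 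The second case is handled identically, replacing $\bw'_n$ by the infinite concatenation of the individual letter-desubstitutions.

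The only point where I expect care is the uniformity of the prefix $p$: the second half of Proposition~\ref{P:GenStabFin_fini} is crucial here, because if $p$ depended on the particular $u_i$, I could not write $\bw$ as $p$ applied to a single word. This uniformity is precisely what the proposition asserts, so the corollary really is little more than a repackaging of Propositions~\ref{P:structure_part_3} and \ref{P:GenStabFin_fini}, consistent with the author's remark that it is a direct consequence of the two.
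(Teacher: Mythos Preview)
Your argument is correct and matches the paper's intended route: the author simply declares the corollary a direct consequence of Propositions~\ref{P:structure_part_3} and~\ref{P:GenStabFin_fini}, and you have accurately unpacked that deduction, including the key use of the uniform prefix $p$ from the ``more precisely'' clause of Proposition~\ref{P:GenStabFin_fini}. One small remark: for the adic tail, the cleanest way to see that $\bw''\in\adic(p\bs')$ implies $\bw''=p(\bw''')$ with $\bw'''\in\adic(\bs')$ is directly from the limit definition (the tail $\lim_{n\to\infty}\sigma_{m+1}\cdots\sigma_n(a_n)$ must be infinite, else the original limit would be finite), rather than via Proposition~\ref{S-adic=>stable}, which only yields an element of $\stab(\bs')$; your phrase ``routine unwinding of the definition'' suggests you have this in mind.
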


The next result is a direct consequence of Corollary~\ref{C:structure2}.

\begin{corollary}
\label{C:condition_S_adic}
Let $\cS \subseteq \subst(A)$.
{If $\bigcup_{s \in \cS^\omega} \stablet(\bs)$ is empty,
then the set $\stab(\cS)$ is exactly the set of $\cS$-adic words.}
\end{corollary}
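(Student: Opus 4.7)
The plan is to derive the corollary directly from Corollary~\ref{C:structure2}, showing that the hypothesis kills the ``letter-only'' contributions and leaves only the purely $\cS$-adic part.

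First I would handle the trivial inclusion: every $\cS$-adic word lies in $\stab(\cS)$ by Proposition~\ref{S-adic=>stable}, with no hypothesis needed on $\stablet$.

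For the converse, let $\bw \in \stab(\cS)$ and apply Corollary~\ref{C:structure2}. The hypothesis $\bigcup_{\bs \in \cS^\omega} \stablet(\bs) = \emptyset$ means that $\stablet(\bs) = \emptyset$ for every $\bs \in \cS^\omega$. Consequently $(\stablet(\bs))^\omega = \emptyset$, so the second case of Corollary~\ref{C:structure2} cannot occur. In the first case, $\stablet(\bs)^{*} = \{\varepsilon\}$, hence $u = \varepsilon$ and $\bw = f(\bw')$ for some $f \in \cS^*$ and some $\bw' \in \adic(\bs)$ with $\bs \in \cS^\omega$.

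It remains to observe that $\cS^*$ acts on $\cS$-adic words by prepending letters to the directive sequence: if $f = \tau_1 \cdots \tau_k$ with $\tau_i \in \cS$ and $\bw' = \lim_{n \to \infty} \sigma_1 \cdots \sigma_n(a_n)$ with $(\sigma_n)_{n \geq 1} \in \cS^\omega$, then $\bw = f(\bw') = \lim_{n \to \infty} \tau_1 \cdots \tau_k \sigma_1 \cdots \sigma_n(a_n)$, whose directive sequence $\tau_1, \ldots, \tau_k, \sigma_1, \sigma_2, \ldots$ lies in $\cS^\omega$. Hence $\bw$ is $\cS$-adic, concluding the argument. There is no real obstacle here; the only point worth checking carefully is that prepending a finite prefix in $\cS^*$ preserves $\cS$-adicity, which is immediate from the definition of the directive sequence.
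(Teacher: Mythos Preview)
Your proposal is correct and follows exactly the route the paper intends: the paper simply declares the result to be ``a direct consequence of Corollary~\ref{C:structure2}'', and your argument spells out that deduction faithfully, including the easy observation that applying an element of $\cS^*$ to a $\cS$-adic word is again $\cS$-adic.
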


{Let us observe that the converse of this lemma does not hold, as shown by
Proposition~{\protect{\ref{S-adicity balanced words}}}.}

Let us observe {also} that the condition in Corollary~\ref{C:condition_S_adic} is decidable when $\cS$ is finite.
Indeed the set $\stablet(\cS)$ can be algorithmically determined as follows. 
First construct the graph whose vertices are letters and whose labeled oriented edges are the triples $(\alpha, f, \beta)$ with $\alpha$, $\beta$ letters and $f \in \cS$ such that $\alpha = f(\beta)$.
Elements of $\stablet(\cS)$ are the vertices from which go at least one infinite path in the graph. 
Labels of such paths are directive sequences of the desubstitutions.
Finally{,}  $\bigcup_{s \in \cS^\omega} \stablet(\bs)$ is empty if and only if there is no circuit in the graph.
{An open question is to decide, given a finite set $\cS$ of substitutions, whether the set $\stab(\cS)$ is exactly the set of $\cS$-adic words.}

\subsection{\label{subsec:s-adic_balanced}Substitutive-adicity of balanced words}

After Corollary~\ref{C:condition_S_adic},
a natural question is:
given a set $\cS \subseteq \subst(A)$, does there exist a set $\cS' \subseteq \subst(A)$
such that $\stab(\cS)$ is the set of all $\cS'$-adic words.
Although Proposition~\ref{S-adicity balanced words} below provides an example of positive answer, 
this does not hold in general as shown by the set $\{\id\}$ with $\id$ the identity morphism over $A^*$:
$\stab(\{\id\}) = A^\omega$ but this set does not correspond to a set of $\cS'$-adic words
as shown by the next lemma. 
{We say that a morphism is a \textit{permutation} if $\{ f(a) \mid a \in A\} = A$.}

\begin{lemma}
\label{L:Aomega_non_adic}
Let $A$ be an alphabet containing at least two letters.
There exists no set $\cS \subseteq \subst(A)$ 
such that $A^\omega$ is exactly the set of all {$\cS$}-adic words.
Moreover any set $\cS$ such that $A^\omega = \stab(\cS)$ must contain
{a permutation.}
\end{lemma}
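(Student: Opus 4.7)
The plan is to prove the ``moreover'' clause (part~(2)) first by a Baire category argument, then to deduce part~(1) using the same combinatorial input. For part~(2), assume $\stab(\cS) = A^\omega$. By Lemma~\ref{L:Godelle_Stabilized}, $A^\omega = \bigcup_{\sigma \in \cS} \sigma(A^\omega)$, and since $\subst(A)$ is countable, so is $\cS$. Each $\sigma(A^\omega)$ is a closed subset of the compact metric space $A^\omega$. I would then reduce the claim to the sublemma: for every non-permutation $\sigma \in \subst(A)$, the set $\sigma(A^\omega)$ has empty interior. Indeed, if $\cS$ contained no permutation, then $A^\omega$ would be a countable union of closed nowhere dense sets, contradicting Baire's theorem applied to the complete metric space $A^\omega$; hence $\cS$ must contain at least one permutation.

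The main obstacle is the sublemma, which I phrase positively: \emph{if $\sigma(A^\omega)$ contains the cylinder $[p] = \{p \bw' : \bw' \in A^\omega\}$ for some $p \in A^*$, then $\sigma$ is a permutation of~$A$.} I would proceed in two steps. First, for each $a \in A$, the word $p a^\omega$ lies in $\sigma(A^\omega)$, so it equals $\sigma(\bu_a)$ for some $\bu_a \in A^\omega$; letters of $\bu_a$ occurring sufficiently far to the right must then have $\sigma$-image contained in $a^+$, so $V_a = \{\alpha \in A : \sigma(\alpha) \in a^+\}$ is nonempty. The sets $(V_a)_{a\in A}$ are pairwise disjoint nonempty subsets of the finite set $A$, so by counting each $|V_a|=1$, yielding a bijection $\pi : A \to A$ with $\sigma(\pi^{-1}(a)) \in a^+$ for every $a$. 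Second, to conclude $|\sigma(\pi^{-1}(a))| = 1$ for every $a$, suppose by contradiction that $|\sigma(\pi^{-1}(a_0))| \geq 2$ for some $a_0 \in A$, and pick $b \in A \setminus \{a_0\}$ (possible since $\#A \geq 2$). The word $p(a_0 b)^\omega$ lies in $[p]$ and so has a preimage whose tail letters $\alpha$ each satisfy $\sigma(\alpha) \in a^+$ for some $a$; since $(a_0 b)^\omega$ contains neither $a_0 a_0$ nor $bb$ as a factor, each such $\sigma(\alpha)$ must have length exactly $1$. But $\pi^{-1}(a_0)$ necessarily occurs in that tail (as $a_0$ occurs infinitely often in the image), forcing $|\sigma(\pi^{-1}(a_0))| = 1$, a contradiction.

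For part~(1), suppose $A^\omega$ is exactly $\adic(\cS)$. Then $A^\omega \subseteq \stab(\cS) \subseteq A^\omega$, so by part~(2) $\cS$ contains at least one permutation; write $\cS = \mathcal{P} \sqcup \mathcal{N}$ for the partition into permutations and non-permutations. For any $\bw \in \adic(\cS)$ with directive $(\sigma_n)_{n \geq 1}$ and associated letter sequence $(c_n)_{n\geq 1}$, the divergence $|\sigma_1 \cdots \sigma_n(c_n)| \to \infty$ combined with the length-preservation by elements of $\mathcal{P}$ forces some $\sigma_n$ to lie in $\mathcal{N}$. Let $n_1$ be the smallest such index, and set $\pi_1 = \sigma_1 \cdots \sigma_{n_1-1}$ (the identity if $n_1 = 1$, otherwise a composition of permutations and itself a permutation) and $g_1 = \sigma_{n_1} \in \mathcal{N}$. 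By Proposition~\ref{S-adic=>stable}, $\bw$ admits a sequence $(\bw_n)_{n\geq 0}$ of infinite desubstituted words, whence $\bw = \pi_1(g_1(\bw_{n_1}))$ lies in $g_1'(A^\omega)$ for $g_1' = \pi_1 \circ g_1$. Since $\pi_1$ is a length-preserving letter-to-letter bijection, $|g_1'(\alpha)| = |g_1(\alpha)|$ for all $\alpha \in A$ and $g_1'(\alpha) = g_1'(\alpha')$ iff $g_1(\alpha) = g_1(\alpha')$, so $g_1'$ is also not a permutation, and the sublemma gives that $g_1'(A^\omega)$ is nowhere dense. Finally, the pairs $(\pi_1, g_1) \in \mathcal{P}^* \times \mathcal{N}$ form a countable set, so $\adic(\cS)$ is covered by a countable union of nowhere dense sets; hence $\adic(\cS)$ is meager, contradicting the assumption $\adic(\cS) = A^\omega$ by Baire's theorem.
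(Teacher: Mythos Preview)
Your proof is correct and takes a genuinely different route from the paper. The paper argues directly with a single ``universal'' word $\bw$ containing every finite word as a factor: from $\bw = f(\bw')$ it forces $f$ to be a permutation by exactly the same two combinatorial tests you use in your sublemma (large powers $a^N$ force each letter to map into some $a^+$, and then long factors in $(a_1 a_2 \cdots a_k)^*$ force all image lengths to be $1$). It then observes that $\bw'$ is again universal, so by induction every term in any directive sequence of $\bw$ is a permutation, which is incompatible with $|\sigma_1\cdots\sigma_n(c_n)|\to\infty$. Part~(2) falls out of the first desubstitution step.

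You instead package the combinatorial core as a clean standalone statement (``if $\sigma(A^\omega)$ contains a cylinder then $\sigma$ is a permutation'') and then run Baire category twice: once against the decomposition $A^\omega=\bigcup_{\sigma\in\cS}\sigma(A^\omega)$ for part~(2), and once against the countable cover $\adic(\cS)\subseteq\bigcup_{\pi,g}(\pi g)(A^\omega)$ for part~(1). The paper's approach is more elementary and self-contained; yours is more conceptual and makes the topology of $A^\omega$ do the bookkeeping that the paper handles by induction on the desubstitution depth. A small remark: your invocation of Proposition~\ref{S-adic=>stable} in part~(1) is fine, but you could also bypass it by noting that $g_1'(A^\omega)$ is closed and that the words $\sigma_1\cdots\sigma_n(c_n\,a^\omega)\in g_1'(A^\omega)$ converge to $\bw$.
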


\begin{proof}
Let $A$ be an alphabet with $k \geq 2$ distinct letters $a_1$, \ldots, $a_k$.
Let $\cS \subseteq \subst(A)$  {be} such that $A^\omega$ is the set of 
all $\cS$-adic words. By Proposition~\ref{S-adic=>stable}, $A^\omega = \stab(\cS)$.
Let $\bw$ be an infinite word containing all finite words {over $A$} as factors.
There exists a substitution $f$ in $\cS$ and an infinite word $\bw'$ such that $\bw = f(\bw')$.
For each $i$, $1 \leq i \leq k$, $\bw$ has arbitrary large factors 
that are powers of $a_i$.
This implies that there exists a letter $\alpha_i \in A$ and an integer $\ell_i$ with $f(\alpha_i)= a_i^{\ell_i}$.
{Note that $A = \{\alpha_i \mid 1 \leq i \leq k\}$. Hence $\bw = f(\bw')$ belongs to $\{ a_1^{\ell_1}, \ldots, a_k^{\ell_k}\}^\omega$.
But} there also exist arbitrarily large factors of $\bw$ {in} $(a_1a_2\cdots a_k)^*${. As $k \geq 2$,} for all $i$, $\ell_i = 1$. 
The morphism $f$ {is a permutation}.

Observe that $\bw'$ contains all finite words as factors. 
Thus, by induction, we can see that any directive sequence of $\bw$ as an $\cS$-adic word
is a sequence of {permutations.}
{This contradicts the fact that, for any directive sequence $(\sigma_n)_{n\geq 0}$ of $\bw$ as an $\cS$-adic word,
it should exist a sequence of letters $(a_n)_{n \geq 0}$ such that $\bw = \lim_{n \to \infty} \sigma_1\cdots\sigma_n(a_n)$.}
\end{proof}

Let us recall that J.~Cassaigne provides an example showing that all words of $A^\omega$ 
are $\cS$-adic with $\cS$ a finite set of substitutions but the substitutions used in this example are defined on a larger alphabet than $A$ 
and the stable set $\stab(\cS)$ contains elements that are not in $A^\omega$ 
{(see, \textit{e.g.}, \protect{\cite[Remark 3]{Berthe_Delecroix2014RIMS}})}.
Let us recall this example. 
Let {$\bw = (a_i)_{i \geq 1}$} in $A^\omega$ ($a_i \in A$). 
We have
$\bw   = \lim_{n \to \infty} f \sigma_2\cdots \sigma_n( \ell )$ 
with $\ell$ a letter that does not belong to $A$ and the morphisms
$f$ and $(\sigma_i)_{i \geq 2}$ defined as follows.

$f: 
\left\{
\begin{array}{l}
\ell \mapsto a_1\\
\alpha \mapsto \alpha ~~~~~~ (\alpha \in A)
\end{array}
\right.
$~~~~~~
$\sigma_i: 
\left\{
\begin{array}{l}
\ell \mapsto \ell a_i\\
\alpha \mapsto \alpha ~~~~~~ (\alpha \in A)
\end{array}
\right.
$

Note also that the previous lemma is not true for a one-letter alphabet as $a^\omega$ is the fixed point of the morphism $d$ defined by $d(a) =aa$.

The next result characterizes balanced words in term{s} of $S$-adicity.

\begin{proposition}
\label{S-adicity balanced words}
An infinite word over $\{a, b\}$ is balanced if and only if
{it is $\cSbal$-adic}.
\end{proposition}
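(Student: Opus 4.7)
The plan is to combine Propositions~\ref{P:characterising balanced words} and~\ref{S-adic=>stable} with a case analysis on the tail of a directive sequence. For the \emph{if} direction, every $\cSbal$-adic word belongs to $\stab(\cSbal)$ by Proposition~\ref{S-adic=>stable}, and $\stab(\cSbal)$ equals the set of balanced binary words by Proposition~\ref{P:characterising balanced words}; so $\cSbal$-adic words are balanced.

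For the \emph{only if} direction, let $\bw$ be balanced. Proposition~\ref{P:characterising balanced words} furnishes a directive sequence $\bs = (\sigma_n)_{n \geq 1} \in \cSbal^\omega$ with corresponding desubstituted words $(\bw_n)_{n \geq 0}$. I must produce letters $(a_n)_{n \geq 0}$ such that $\bw = \lim_n \sigma_1 \cdots \sigma_n(a_n)$. Write $T_a = \{L_a, R_a\}$ and $T_b = \{L_b, R_b\}$, and split on how often each type appears in $\bs$.

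If both $T_a$ and $T_b$ occur infinitely often, set $a_n = \first(\bw_n)$. Since each $\sigma_n$ is nonerasing, $\sigma_{n+1}(a_{n+1})$ starts with $a_n$, so $\sigma_1 \cdots \sigma_n(a_n)$ is always a prefix of $\bw$. To conclude, I need $|\sigma_1 \cdots \sigma_n(a_n)| \to \infty$. The abelianization matrices are decisive here: $L_a$ and $R_a$ share the matrix $\left(\begin{smallmatrix} 1 & 1 \\ 0 & 1 \end{smallmatrix}\right)$ and $L_b, R_b$ share $\left(\begin{smallmatrix} 1 & 0 \\ 1 & 1 \end{smallmatrix}\right)$; any product containing at least one matrix of each type has all four entries positive, so grouping $\bs$ into successive blocks each containing morphisms of both types forces $|\sigma_1 \cdots \sigma_n(\alpha)|$ to grow to infinity for every letter $\alpha$.

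Otherwise one type, say $T_b$, occurs only finitely often; choose $N$ with $\sigma_n \in T_a$ for $n > N$ and set $f = \sigma_1 \cdots \sigma_N$. Then $(\sigma_n)_{n > N} \in T_a^\omega$ is a directive sequence for $\bw_N$, and the analysis in the proof of Proposition~\ref{P:characterising balanced words} forces $\bw_N \in \{a^\omega\} \cup \{a^k b a^\omega \mid k \geq 0\}$. Each such word is visibly $\cSbal$-adic: $a^\omega = \lim_n L_a^n(b)$ and $a^k b a^\omega = \lim_n L_a^k R_a^n(b)$. Prepending the decomposition of $f$ to this representation yields $\bw = f(\bw_N)$ as an $\cSbal$-adic word, and the symmetric case (finitely many $T_a$) is identical. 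The only nontrivial ingredient is the length-growth claim in the mixed case; the rest reduces to the two propositions already cited together with small explicit computations.
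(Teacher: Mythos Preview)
Your proof is correct and follows essentially the same route as the paper. Both arguments use Proposition~\ref{S-adic=>stable} together with Proposition~\ref{P:characterising balanced words} for the \emph{if} direction, and for the \emph{only if} direction both split on whether the directive sequence contains infinitely many elements of each of $\{L_a,R_a\}$ and $\{L_b,R_b\}$: in the mixed case the paper contracts $\bs$ into blocks in $\{L_a,R_a\}^+\{L_b,R_b\}\cup\{L_b,R_b\}^+\{L_a,R_a\}$ and observes $|\sigma(a)|,|\sigma(b)|\ge 2$ for each block, which is exactly your abelianization-matrix growth argument in different clothing; in the eventually-one-type case both reduce to $\bw_N\in\{\alpha^\omega\}\cup\{\alpha^k\beta\alpha^\omega\mid k\ge 0\}$ and exhibit the same explicit $\cSbal$-adic representations.
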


\begin{proof}
{
First assume that a word $\bw$ is $\cSbal$-adic.
By Proposition~{\protect\ref{S-adic=>stable}}, $\bw$ belongs to $\stab(\cSbal)$.
Then by Proposition~{\protect\ref{P:characterising balanced words}}, it is balanced.}

From now on, 
assume that $\bw$ is balanced.
By Proposition~\ref{P:characterising balanced words}, $\bw \in \stab(\cSbal)$.
{Let $\bs = (\sigma_n)_{n \geq 1} \in \cS^\omega$ be a directive sequence of $\bw$.
As already mentioned in the proof of Proposition~{\protect\ref{P:characterising balanced words}},
if $\bs$ contains infinitely many elements in $\{L_a, R_a\}$ and infinitely many elements in $\{L_b, R_b\}$,
then $\bw$ is $\cSbal$-adic.
In the remaining cases, there exist an integer $N \geq 1$ 
and a letter $\alpha \in \{a, b\}$ such that,
for all $n \geq N$, $\sigma_n \in \{L_\alpha, R_\alpha\}$.
Let $f = \sigma_1 \cdots \sigma_{N-1}$ and let $\bw'$ be the element
of $\stab(\cS)$ directed by $(\sigma_n)_{n \geq N}$ such that $\bw = f(\bw')$.
Let $\beta$ be the letter such that $\{\alpha, \beta\} = \{a, b\}$.
Observe that $\bw'$ contains at most one occurrence of $\beta$.
Indeed if it contains a factor $\beta \alpha^n \beta$ for some integer $n$,
then for any $i$, $1 \leq i \leq n$, the $i$th desubstituted word of $\bw'$ (remember that $\sigma_i \in \{L_\alpha, R_\alpha\}$)
would contain the factor $\beta \alpha^{n-i}\beta$.
But the $n$th desubstituted word cannot contain the factor $\beta\beta$
since it can itself be desubstituted using $L_\alpha$ or $R_\alpha$.
Hence $\bw' = \alpha^\omega = \lim_{n \to \infty} L_\alpha^n(\beta)$ or
$\bw' = \alpha^k\beta\alpha^\omega = L_\alpha^k(\lim_{n \to \infty} R_\alpha^n(\beta))$ for some $k \geq 0$.
Hence $\bw'$ and $\bw=f(\bw')$ are $\cSbal$-adic.}
\end{proof}

\section{\label{sec:other_binary_examples}Two other examples of stable sets on binary alphabets}

After Remark~\ref{Rem:preserve}, while searching for examples of stable sets defined by a combinatorial property,
it is natural to consider properties for which we know the morphisms that preserve them.
But this does not provide systematically a stable set as shown by the next example.

An overlap-free word is a word that contains no factor 
{of} the form $\alpha u \alpha u \alpha$ with $\alpha$ a letter
and $u$ a word. 
Since Thue \cite{Thue1912}{,} it is known that the morphisms 
that preserve overlap-free words are the morphisms obtained by compositions of the {morphism} $\mu$ ($\mu(a) = ab$, $\mu(b) = ba$) and the exchange  morphi{s}ms $E$ ($E(a) = b$, $E(b) = a$). As $\mu E = E \mu$, one can check that $\stab(\{\mu\}) = \stab(\{\mu, E\}^* \setminus \{E, \id\}) = \{ \bM, E(\bM)\}$ where $\bM$, the Thue-Morse word, is the fixed point of $\mu$ starting with letter $a$.
It follows that no stable set corresponds to the set of binary overlap-free words.
Also $\bM$ and $E(\bM)$ are the unique overlap-free words 
that can be defined using $S$-adicity for a set $S$ of morphisms preserving overlap-freeness
(as by Proposition~\ref{S-adic=>stable}, $S$-adic words belong to $\stab(\cS)$).

\subsection{\label{subsec:sturm}Sturmian words}

As mentioned previously the Sturmian words are the binary balanced aperiodic words{.}
We show below that the set of Sturmian words can be defined as a stable set but only with infinite sets of substitutions.

In the proof of Proposition~\ref{P:characterising balanced words}
we have already recalled that Sturmian words are {(exactly)} the elements of $\stab(\cSbal)$ such that any directive sequence
contains infinitely many elements of $\{L_a, R_a\}$ and infinitely many elements of $\{L_b, R_b\}$ \cite{BertheHoltonZamboni2006}.
Using the contraction method already used in the proof of Proposition~\ref{P:characterising balanced words}, 
this can be reformulated using the set $\cSsturm = \{L_a, R_a\}^+\{L_b, R_b\} \cup \{L_b, R_b\}^+\{L_a, R_a\}$.
{Thus we have proved the following proposition.}

\begin{proposition}
A word is Sturmian if and only if it belongs to $\stab(\cSsturm)$.
\end{proposition}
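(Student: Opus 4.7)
The plan is to reduce directly to the characterization recalled immediately before the statement: Sturmian words are exactly those elements of $\stab(\cSbal)$ admitting a directive sequence in $\cSbal^\omega$ that contains infinitely many morphisms from $\{L_a, R_a\}$ and infinitely many from $\{L_b, R_b\}$. The bridge between $\stab(\cSsturm)$ and this characterization is the expansion/contraction of sequences already sketched in the proof of Proposition~\ref{P:characterising balanced words}.

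For the inclusion $\stab(\cSsturm) \subseteq \{\text{Sturmian words}\}$, I would start with $\bw \in \stab(\cSsturm)$, a directive sequence $(\tau_n)_{n \geq 1} \in \cSsturm^\omega$, and its desubstituted words $(\bw_n)_{n \geq 0}$. Each $\tau_n$ decomposes, by definition of $\cSsturm$, as a product $\sigma_{n,1} \circ \cdots \circ \sigma_{n,k_n}$ of elements of $\cSbal$. Defining intermediate infinite words $\bu_{n,j} = \sigma_{n,j+1} \circ \cdots \circ \sigma_{n,k_n}(\bw_n)$ (which are well defined and infinite since the $\sigma_{n,i}$ are nonerasing) and concatenating all these factors produces a directive sequence in $\cSbal^\omega$ for $\bw$. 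By the shape of elements of $\cSsturm$, each block contains at least one morphism from $\{L_a, R_a\}$ and one from $\{L_b, R_b\}$, so the expanded sequence has infinitely many occurrences of each, and the recalled characterization gives that $\bw$ is Sturmian.

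For the converse, let $\bw$ be Sturmian and fix a directive sequence $(\sigma_n)_{n \geq 1} \in \cSbal^\omega$ of $\bw$ with the infinite occurrence property, together with the corresponding words $(\bw_n)_{n \geq 0}$. I would construct inductively indices $0 = i_0 < i_1 < i_2 < \cdots$ by letting, at step $k$, $i_k$ be the smallest index strictly greater than $i_{k-1}$ for which $\sigma_{i_k}$ belongs to the opposite half (among $\{L_a, R_a\}$ and $\{L_b, R_b\}$) from $\sigma_{i_{k-1}+1}$. Such an $i_k$ always exists since both halves occur infinitely often. Each block $\tau_k := \sigma_{i_{k-1}+1} \circ \cdots \circ \sigma_{i_k}$ then lies in $\{L_a, R_a\}^+\{L_b, R_b\} \cup \{L_b, R_b\}^+\{L_a, R_a\} = \cSsturm$, and the subsequence $(\bw_{i_k})_{k \geq 0}$ satisfies $\bw_{i_{k-1}} = \tau_k(\bw_{i_k})$, exhibiting $\bw \in \stab(\cSsturm)$.

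I expect no serious obstacle: the combinatorial core of the result lies in the already cited characterization of Sturmian words among binary balanced ones, and the rest is simply careful bookkeeping on the chosen factorization and on the alignment of desubstituted words with the expanded or contracted directive sequence.
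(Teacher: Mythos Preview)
Your proposal is correct and follows essentially the same approach as the paper: the paper simply remarks that the contraction method already used in the proof of Proposition~\ref{P:characterising balanced words} reformulates the cited characterization of Sturmian words (elements of $\stab(\cSbal)$ whose directive sequences contain infinitely many morphisms from each half) in terms of $\cSsturm$, and states the proposition as an immediate consequence. Your write-up makes explicit the expansion and contraction steps (including the index selection ensuring each contracted block lies in $\cSsturm$), but the underlying argument is identical.
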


\begin{proposition}
\label{No finite set for Sturmian}
There exists no finite set $\cS$ of substitutions such that {the set of Sturmian words is equal to} $\stab(\cS)$.
\end{proposition}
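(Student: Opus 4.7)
The plan is to argue by contradiction: suppose $\cS$ is a finite set of substitutions with $\stab(\cS)$ equal to the set of Sturmian words, and derive a periodic element of $\stab(\cS)$. By Remark~\ref{Rem:preserve}, each $\sigma \in \cS$ preserves the set of Sturmian words; classical results (Mignosi--S\'e\'ebold, see, \textit{e.g.}, \cite{Lothaire2002}) then identify such morphisms as Sturmian morphisms, that is, elements of the monoid generated by $L_a, R_a, L_b, R_b$ and the exchange morphism $E$, whose incidence matrices are in $SL_2^{\pm}(\mathbb{Z}_{\geq 0})$ (determinant $\pm 1$, non-negative integer entries). Moreover, no $\sigma \in \cS$ may fix a letter $c$: otherwise $c^\omega$ would lie in $\stab(\{\sigma\}) \subseteq \stab(\cS)$, contradicting the fact that Sturmian words are aperiodic.

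The core step is a forcing argument. Set $L = \max\{|\sigma(c)| : \sigma \in \cS,\, c \in \{a,b\}\}$, which is finite. I choose a Sturmian word $\bw$ whose slope lies in $(0, 1/(L+1))$, so that $\bw$ begins with strictly more than $L$ consecutive $a$'s. Writing $\bw = \sigma_1(\bw_1)$ with $\sigma_1 \in \cS$ and $\bw_1 \in \stab(\cS)$ Sturmian, and letting $c_1$ be the first letter of $\bw_1$, the word $\sigma_1(c_1)$ is a prefix of $\bw$ of length at most $L$, hence lies in $\{a\}^*$, say $\sigma_1(c_1) = a^k$. The determinant constraint on the incidence matrix of $\sigma_1$, combined with the fact that one of its columns is $(k,0)^T$, forces $k=1$ and one of two cases: either $\sigma_1(a) = a$ (so $\sigma_1$ fixes $a$, excluded) or $\sigma_1(b) = a$. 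Thus $\cS$ contains a morphism $\sigma_A$ with $\sigma_A(b) = a$. By the symmetric argument applied to a Sturmian word of slope close to $1$, $\cS$ must also contain a morphism $\sigma_B$ with $\sigma_B(a) = b$.

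To conclude, define the directive sequence $(\sigma_n)_{n \geq 1}$ by $\sigma_{2k-1} = \sigma_A$ and $\sigma_{2k} = \sigma_B$, together with candidate desubstituted words $(\bw_n)_{n \geq 0}$ given by $\bw_{2k} = a^\omega$ and $\bw_{2k+1} = b^\omega$. Since $\sigma_A(b^\omega) = a^\omega$ and $\sigma_B(a^\omega) = b^\omega$, we have $\sigma_{n+1}(\bw_{n+1}) = \bw_n$ for every $n \geq 0$. This exhibits $a^\omega$ as an element of $\stab(\cS)$, but $a^\omega$ is periodic and therefore not Sturmian, the desired contradiction.

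The main obstacle is the forcing step: one has to appeal to the classical determinant-$\pm 1$ property of Sturmian morphisms to pin down that the only way for $\sigma_1$ to send a single letter to a word in $\{a\}^*$ without fixing the letter $a$ is to send $b$ to the letter $a$. The ancillary fact that a Sturmian word of sufficiently small slope begins with arbitrarily many $a$'s is a routine consequence of the standard link between Sturmian words and continued fractions.
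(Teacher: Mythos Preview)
Your proof is correct and takes a genuinely different route from the paper's. Both arguments begin the same way --- Remark~\ref{Rem:preserve} plus Lemma~\ref{L:Sturmian morphisms} force $\cS \subseteq (\cSbal \cup \{E\})^*$ --- but the forcing step diverges. The paper works combinatorially: it pushes the $E$ factors to the right to write $\cS \subseteq \cSbal^*\{\id, E\}$, proves a structural fact (if $f \in \cSbal^i$ and $f(\bw)$ contains the factor $ba^{j+2}b$ with $j \geq i$, then $f \in \{L_a, R_a\}^i$), and applies it to the first two morphisms of a directive sequence of a Sturmian word containing a long block $ba^{M+2}b$; a case analysis then produces some $\sigma \in \cS$ (or a product $\sigma_{i+1}\cdots\sigma_j$) lying in $\{L_a,R_a\}^*$ or $\{L_b,R_b\}^*$, whence $a^\omega$ or $b^\omega$ belongs to $\stab(\cS)$. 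You instead work algebraically: the determinant-$\pm 1$ constraint on the incidence matrix, together with the observation that no $\sigma \in \cS$ can fix a letter, lets you read off directly from a long initial $a$-run that $\cS$ must contain some $\sigma_A$ with $\sigma_A(b) = a$, and symmetrically some $\sigma_B$ with $\sigma_B(a) = b$; alternating these yields $a^\omega \in \stab(\cS)$. Your argument is shorter and never needs to track compositions or handle $E$ explicitly; the paper's combinatorial approach, on the other hand, is the one that generalizes to episturmian words over larger alphabets (Propositions~\ref{P:stable_standard_episturmian} and~\ref{P:stab_epi_pas_fini}), where no determinant criterion is available.

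One small imprecision: the slope alone does not force the word to \emph{begin} with $a^{L+1}$ (the intercept matters too --- a Sturmian word of any slope can start with $b$). But since Sturmian words of small slope contain $a^{L+1}$ as a factor and any suffix of a Sturmian word is again Sturmian, the fix is immediate; alternatively, take the lower mechanical word $s_{\alpha,0}$ with $\alpha < 1/(L+1)$.
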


For the proof of this proposition we need the next lemma.

\begin{lemma}[\mbox{see, \textit{e.g.}, \cite[chap. 2]{Lothaire2002}}]
\label{L:Sturmian morphisms}
Morphisms that preserve Sturmian words are the elements of the set $(\cSbal \cup \{E\})^*$.
\end{lemma}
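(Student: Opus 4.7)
The plan is to prove both inclusions.

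\textbf{Forward inclusion $(\cSbal \cup \{E\})^* \subseteq \{\text{Sturmian-preserving}\}$.} It suffices to verify each generator preserves Sturmian words, since composition preserves ``Sturmian-preserving''. The exchange morphism $E$ clearly preserves both balance and aperiodicity. For each $\sigma \in \cSbal$, balance preservation is Lemma~\ref{morphisms preserving balance property}. For aperiodicity preservation, each of $L_a, L_b, R_a, R_b$ codes $\{a,b\}^\omega$ over a prefix code on two blocks (for instance $\{a, ab\}$ in the case of $L_a$), so is injective; moreover it sends periodic words to periodic words, so by unique decoding, if $\sigma(\bw)$ were periodic then $\bw$ would be too. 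Hence each generator, and thus each composition, sends Sturmian words to Sturmian words.

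\textbf{Reverse inclusion.} Let $f$ be Sturmian-preserving. I proceed by induction on $N = |f(a)| + |f(b)|$. If $N = 2$, the images of $a$ and $b$ are letters that must be distinct (the image $f(\bw)$ uses both letters), so $f \in \{\id, E\} \subseteq (\cSbal \cup \{E\})^*$. For $N \geq 3$, I aim to write $f = g \circ f'$ with $g \in \cSbal \cup \{E\}$ and $|f'(a)| + |f'(b)| < N$, then invoke the induction hypothesis on $f'$.

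To locate $g$: For every Sturmian $\bw$, $f(\bw)$ is Sturmian, hence balanced, hence avoids $aa$ or avoids $bb$. I claim this avoidance is governed by $f$ alone and not by $\bw$: a $bb$ in $f(\bw)$ can only come from within $f(a)$ or $f(b)$, or from a junction $f(a)f(b), f(b)f(a), f(a)f(a), f(b)f(b)$. The first two junctions occur in every Sturmian image (since $ab, ba$ appear in every Sturmian word); the latter two are realized by choosing Sturmian $\bw$ of slope $<\nolinebreak1/2$ (containing $aa$) or $> 1/2$ (containing $bb$). Composing with $E$ on the left if necessary, I may assume no $bb$ occurs in any $f(\bw)$. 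Case analysis on the first and last letters of $f(a), f(b)$ under the four no-$bb$ junction conditions shows that $f(a)$ and $f(b)$ both lie either in $\{a, ab\}^+$ or in $\{a, ba\}^+$; correspondingly, $f = L_a \circ f'$ or $f = R_a \circ f'$, where $f'$ is determined by decoding $f(a), f(b)$ over the appropriate code. Strict length reduction holds: decoding strictly shortens any word containing $b$, and $f(a), f(b)$ cannot both be $a$-powers since otherwise $f(\bw) \in \{a\}^\omega$ would be periodic, contradicting Sturmianity of the image.

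\textbf{Main obstacle.} To apply the induction hypothesis I need $f'$ itself Sturmian-preserving. Aperiodicity of $f'(\bw)$ follows from aperiodicity of $L_a(f'(\bw)) = f(\bw)$, since $L_a$ sends periodic to periodic. The delicate point is balance: I must prove that $L_a(\bu)$ balanced implies $\bu$ balanced, and analogously for $R_a, L_b, R_b$. I would prove this contrapositively. If $\bu$ admits factors $a v a$ and $b v b$ witnessing unbalance, I construct from their images $L_a(ava) = a L_a(v) a$ and $L_a(bvb) = ab L_a(v) ab$, together with suitable extensions using neighboring positions in $L_a(\bu)$, two factors of equal length in $L_a(\bu)$ whose $b$-counts differ by $2$. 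Exploiting that an aperiodic non-balanced $\bu$ has infinitely many such witness-pairs (with arbitrarily varying surrounding context) allows the length-matching to be arranged. This reflection property for the four generators is the technical crux; once established, $f'$ is Sturmian-preserving, the induction closes, and we conclude $f \in (\cSbal \cup \{E\})^*$.
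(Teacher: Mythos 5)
The paper does not actually prove this lemma: it is quoted from \cite[chap.~2]{Lothaire2002} (the Mignosi--S\'e\'ebold / Berstel--S\'e\'ebold characterization of Sturmian morphisms), so there is no in-paper argument to compare yours against. Your overall strategy --- induction on $|f(a)|+|f(b)|$, peeling a generator off on the left by desubstitution --- is the standard one, and it is the same scheme the paper itself uses for Proposition~\ref{P:preserving_episturmian}. The ``technical crux'' you isolate (balance reflects through desubstitution by an element of $\cSbal$) is exactly the padding trick appearing in the only-if part of Proposition~\ref{P:characterising balanced words}: since $aua$ is not a prefix, its image is preceded by a known letter, which lets you match lengths with a discrepancy of $2$. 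That part, though only sketched, is sound.

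The genuine gap is the step ``this avoidance is governed by $f$ alone''. A Sturmian word contains exactly one of $aa$, $bb$, so the junctions $f(a)f(a)$ and $f(b)f(b)$ can never be realized by the same test word. Your argument therefore does not exclude the case where every occurrence of $aa$ in an image $f(\bw)$ arises only from the junction $f(a)f(a)$ and every occurrence of $bb$ only from $f(b)f(b)$: then each individual $f(\bw)$ avoids one of $aa$, $bb$, but which one depends on $\bw$, and no uniform choice is available even after composing with $E$. Concretely this happens exactly when $f(a)$ begins and ends with $a$, $f(b)$ begins and ends with $b$, and neither image contains $aa$ or $bb$, that is $f(a)=(ab)^ka$ and $f(b)=(ba)^lb$ (or the letter-exchanged form). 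For $k+l\geq 1$ such an $f$ admits no factorization $L_\alpha f'$, $R_\alpha f'$ or $Ef'$, so your induction step has nothing to peel off; you must instead prove separately that these morphisms do not preserve Sturmian words. They do not, but this needs its own argument (an explicit imbalance in the image of a suitable Sturmian word); observing that their incidence matrix has determinant $k+l+1\geq 2$ while all elements of $(\cSbal\cup\{E\})^*$ have determinant $\pm 1$ only shows they lie outside the monoid, which is the conclusion, not a proof that they fail to preserve Sturmian words. A smaller slip: in the forward inclusion, ``$\sigma$ sends periodic words to periodic words'' does not yield ``$\sigma(\bw)$ ultimately periodic $\Rightarrow \bw$ ultimately periodic''; what you need is that an injective substitution pulls ultimate periodicity back (distinct suffixes of $\bw$ give distinct suffixes of $\sigma(\bw)$), which your unique-decoding remark can deliver, but the sentence as written argues the wrong direction.
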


 \begin{proof}[Proof of Proposition~\ref{No finite set for Sturmian}]
Assume that the set of Sturmian words is the set $\stab(\cS)$ for some finite set $\cS$ of substitutions.
First observe that, by Remark~\ref{Rem:preserve}, for all $f$ in $\cS$, 
$f$ preserves the family of Sturmian words. 
Hence by Lemma~\ref{L:Sturmian morphisms}, $\cS \subseteq \{L_a, L_b, R_a, R_b, E\}^*$.
As $E L_a = L_b E$, $E L_b = L_a E$, $E R_a = R_b E$, $E R_b = R_a E$ and $EE = \id$,
{we have} $\cS \subseteq \{L_a, L_b, R_a, R_b\}^* \{ \id, E\}$.
Hence there exist two subsets ${\cal F}$  and ${\cal G}$ of $\cSbal^*$
such that $\cS = {\cal F}\cup {\cal G}E$ .
As $\cS$ is finite, let ${\cal F} = \{ f_i \mid 1 \leq i \leq \ell\}$ and let ${\cal G} = \{ g_i \mid 1 \leq i \leq k\}$.

{The following fact is important.}

{
\textbf{Fact.} For $i \geq 0$ an integer, for $f \in \cSbal^i$ and for an infinite word $\bw$, 
 if the word $f(\bw)$ contains a factor $ba^{j+2}b$ with $j \geq i$, then $f \in \{L_a, R_a\}^i$.
}

{
Let us prove this fact by induction on $i$. It is basically true for $i = 0$.
Assume $i \geq 1$.
Observe that for any infinite word $\bu$, both words $L_b(\bu)$ and $R_b(\bu)$ does not contain $aa$.
As $aa$ is a factor of $f(\bw)$ and $f \in \cSbal^i$, it follows that
$f = L_a g$ or $f = R_a g$ with $g \in \cSbal^{i-1}$. 
Whatever is the decomposition of $f$, 
$g(\bw)$ contains a factor $ba^{j+1}b = ba^{(j-1)+2}b$.
As $i-1 \leq j-1$, by induction, $g \in \{L_a, R_a\}^{i-1}$.
hence $f \in \{L_a, R_a\}^i$.
}

{Now let $M = 2\max( \{ |f_i| \mid 1 \leq i \leq \ell\} \cup \{ |g_i| \mid 1 \leq i \leq k\})$ 
(here, for $f \in \cSbal^*$, $|f|$ is the length of $f$ considered as a word over the alphabet $\cSbal$).}
{Let $\bw$ be a Sturmian word that contains the factor $ba^{M+2}b$.
Let $(\sigma_n)_{n \geq 1} \in \cS^\omega$ be a directive word of $\bw$ (as an element of $\stab(\cS)$).}

For any morphism $f$ in $(\cSbal \cup E)^*$,
let $\bar{f}$ be the morphism obtained from the decomposition of $f$ over $\{L_a, L_b, R_a, R_b\}$
replacing each occurrence of $L_a$ with $L_b$, each occurrence of $R_a$ with $R_b$, each occurrence of $L_b$ with $L_a$ and each occurrence of $R_b$ with $R_a$.  
Observe that $f E = E \bar{f}$.

{Let $\sigma_1'$, $\sigma_2'$ be the morphisms defined as follows.}
\begin{itemize}
\item if $\sigma_1 \in {\cal F}$ and $\sigma_2 \in  {\cal F}$, set $\sigma_1' = \sigma_1$,  $\sigma_2' = \sigma_2$ (note that $\sigma_1'\sigma_2' = \sigma_1\sigma_2$);
\item if $\sigma_1 \in {\cal F}$ and $\sigma_2 \in  {\cal G}E$, set $\sigma_1' = \sigma_1$,  $\sigma_2' = \sigma_2 E$ (note that $\sigma_1'\sigma_2' = \sigma_1\sigma_2 E$);
\item if $\sigma_1 \in {\cal G}E$ and $\sigma_2 \in  {\cal F}$, set $\sigma_1' = \sigma_1 E$,  $\sigma_2' = \overline{\sigma_2}$ (note that $\sigma_1'\sigma_2' = \sigma_1\sigma_2 E$);
\item if $\sigma_1 \in {\cal G}E$ and $\sigma_2 \in  {\cal G}E$, set $\sigma_1' = \sigma_1 E$,  $\sigma_2' = \overline{\sigma_2} E$ (note that $\sigma_1'\sigma_2' = \sigma_1\sigma_2$).
\end{itemize}

In all cases, {$\sigma_1'$} and {$\sigma_2'$} belong to $\cSbal^*$
and {$\bw = \sigma_1'\sigma_2'(\bw')$} for some infinite word $\bw'$.
{As $|\sigma_1'\sigma_2'|\leq M$ and as $\bw$ contains the factor $ba^{M+2}b$, by the initial fact, 
we} must have $\sigma_1'\sigma_2' \in \{L_a, R_a\}^*$, and so $\sigma_1'$, $\sigma_2' \in \{L_a, R_a\}^*$

If $\sigma_1 \in {\cal F}$, we get $\sigma_1 \in \{L_a, R_a\}^*$.
If $\sigma_1 \not\in {\cal F}$ and $\sigma_2 \in {\cal F}$, {from} $\sigma_2' \in \{L_a, R_a\}^*$, {we get} $\sigma_2 \in \{L_b, R_b\}^*$.
If $\sigma_1 \not\in {\cal F}$ and $\sigma_2 \not\in {\cal F}$, we {have} $\sigma_1\sigma_2 \in \{L_a, R_a\}^*$.
{But then, one of the words $a^\omega$ or $b^\omega$ belongs to $\stab(\cS)$ 
with $\sigma_1^\omega$, $\sigma_2^\omega$ or $(\sigma_1\sigma_2)^\omega$ as  
a directive sequence. As these words are not Sturmian, we have a contradiction with the hypothesis that $\stab(\cS)$ is the set of Sturmian words.}
\end{proof}

The previous result shows that the set of aperiodic words of a stable set of a finite set of substitutions may not form also a stable set of a finite set of substitutions by themselves.

\subsection{Lyndon Sturmian words\label{subsec:Lyndon_Sturmian}}

Let {us} recall that an \textit{infinite Lyndon word} is a word smaller, with respect to the lexicographic order, than its suffixes.
Here{, without loss of generality, we restrict our attention on the ordered alphabet $\{a < b\}$, that is on the alphabet $\{a, b\}$ with $a < b$.
Results for the order $\{ b < a \}$ can be obtained exchanging the roles of $a$ and $b$.}
Let us show that the set of binary balanced infinite Lyndon words form a stable set.
Set $\cS_{Lynd} = \{L_a^nR_b, R_b^nL_a \mid n \geq 1\}$.

\begin{proposition}~
\begin{itemize}
\item A word is a Lyndon Sturmian word over $\{a < b\}$
 if and only if 
it belongs to $\stab(\cS_{Lynd})$
if and only if it is $\cS_{Lynd}$-adic.
\item There is no finite set {$\cS$} such that the set of Lyndon Sturmian word{s} is $\stab(\cS)$.
\end{itemize}
\end{proposition}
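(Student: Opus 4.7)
My approach to Part~1 is to identify, for each Lyndon Sturmian word $\bw$, a canonical desubstitution by an element of $\cS_{Lynd}$ into another Lyndon Sturmian word, and then iterate. Since $a<b$, the word $\bw$ must start with $a$. Writing $\alpha_n := L_a^n R_b$ (so $\alpha_n(a) = a^{n+1}b$ and $\alpha_n(b) = a^n b$) and $\beta_m := R_b^m L_a$ (so $\beta_m(a) = ab^m$ and $\beta_m(b) = ab^{m+1}$), I split into two cases. If the initial $a$-run of $\bw$ has length $n+1 \geq 2$, balancedness forces every $a$-run to have length $n$ or $n+1$, and the Lyndon inequality applied to each suffix beginning at a $b$ forces this first run to be the longest; hence $\bw \in \{a^n b, a^{n+1}b\}^\omega = \alpha_n(\{a,b\}^\omega)$, giving $\bw = \alpha_n(\bw')$ for a unique $\bw'$ starting with $a$. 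Symmetrically, if the initial $a$-run has length $1$, then $\bw$ decomposes over $\{ab^m, ab^{m+1}\}$ for the minimal $b$-run length $m$, giving $\bw = \beta_m(\bw')$. In both cases $\bw'$ is balanced (as the preimage of a balanced word under an element of $\cSbal^*$) and Lyndon, because $\alpha_n$ and $\beta_m$ are strictly order-preserving on $\{a,b\}^\omega$: their images of $a$ and $b$ share a common prefix and differ at a position where the image of $a$ carries $a$ and the image of $b$ carries $b$ (for $\beta_m$ this uses that $\beta_m(x)$ starts with $a$ for every letter $x$). Iterating gives a directive sequence in $\cS_{Lynd}^\omega$, and since $|\sigma(x)| \geq 2$ for every $\sigma \in \cS_{Lynd}$ and every $x \in \{a,b\}$, while each desubstituted word starts with $a$, one obtains $\bw = \lim_{k \to \infty} \sigma_1 \cdots \sigma_k(a)$, proving $\cS_{Lynd}$-adicity.

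For the reverse inclusion $\stab(\cS_{Lynd}) \subseteq$ Lyndon Sturmian, the same order-preservation argument shows each $\sigma \in \cS_{Lynd}$ preserves the Lyndon property; balancedness is preserved because $\cS_{Lynd} \subseteq \cSbal^*$ (Lemma~\ref{morphisms preserving balance property}); and aperiodicity is forced because iterated desubstitution of any periodic balanced word would eventually have to reach $a^\omega$ or $b^\omega$, which themselves cannot be desubstituted by any $\sigma \in \cS_{Lynd}$ (every image $\sigma(x)$ contains both letters). Combining this with Lemma~\ref{L:Godelle_Stabilized} and Proposition~\ref{S-adic=>stable} completes the three-way equivalence.

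For Part~2, suppose for contradiction that a finite set $\cS$ satisfies $\stab(\cS) =$ Lyndon Sturmian. Each $\sigma \in \cS$ preserves Lyndon Sturmian words by Remark~\ref{Rem:preserve}, and I further argue it preserves Sturmian words in general: every Sturmian word shares its factor set with the Lyndon Sturmian word of the same slope, and Sturmianness is determined by factor complexity. By Lemma~\ref{L:Sturmian morphisms}, $\cS \subseteq (\cSbal \cup \{E\})^*$. Since $\stab(\cS) = \stab(\cS^2)$ (directive sequences can be regrouped in pairs) and $(fE)(gE) = f\bar g \in \cSbal^*$, I may replace $\cS$ by the finite set $\cS^2 \subseteq \cSbal^*$. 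I then follow the template of Proposition~\ref{No finite set for Sturmian}: choose a Lyndon Sturmian word $\bw$ containing $b a^{M+2} b$ where $M$ exceeds the maximal length (over $\cSbal$) of any element of $\cS^2$; the ``Fact'' in that proof forces the first morphism $\tau$ of any directive sequence of $\bw$ in $\cS^2$ to lie in $\{L_a, R_a\}^*$, so $\tau(a^\omega) = a^\omega$, and hence $a^\omega \in \stab(\cS^2) = \stab(\cS)$. As $a^\omega$ is periodic, this contradicts $\stab(\cS) \subseteq$ Lyndon Sturmian. The main technical hurdle is the reduction ``preserves Lyndon Sturmian $\Rightarrow$ preserves Sturmian'', which I handle via the shared-factor-set argument above; the rest of the proof is a careful but routine case analysis modeled on the Sturmian case.
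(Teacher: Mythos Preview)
Your approach to Part 1 is more self-contained than the paper's, which simply invokes Theorems~5.6 and~6.5 of \cite{LeveRichomme2007TCS} to get the $\cS_{Lynd}$-adic characterization directly and then applies Corollary~\ref{C:condition_S_adic} (using $|\sigma(x)| \geq 2$ for all $\sigma\in\cS_{Lynd}$) to identify the $\cS_{Lynd}$-adic words with $\stab(\cS_{Lynd})$. Your desubstitution analysis for the forward direction is correct and explicit. However, your reverse inclusion $\stab(\cS_{Lynd}) \subseteq \{\text{Lyndon Sturmian}\}$ has a gap: you argue that each $\sigma$ \emph{preserves} the Lyndon property, but preservation tells you that images of Lyndon words are Lyndon, not that an arbitrary $\bw \in \stab(\cS_{Lynd})$ is Lyndon. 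Neither Lemma~\ref{L:Godelle_Stabilized} nor Proposition~\ref{S-adic=>stable} supplies this direction. The fix is to use your own observations that $|\sigma(x)| \geq 2$ and that every image begins with $a$: any $\bw\in\stab(\cS_{Lynd})$ then satisfies $\bw = \lim_k \sigma_1\cdots\sigma_k(a)$, each prefix $\sigma_1\cdots\sigma_k(a)$ is a finite Lyndon word (by preservation, starting from the Lyndon letter $a$), and a prefix-increasing limit of finite Lyndon words is an infinite Lyndon word (since finite Lyndon words are unbordered, the first mismatch between $\bw$ and any proper suffix already occurs within some finite prefix).

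Your Part~2 contains a concrete error. The claim $\cS^2 \subseteq \cSbal^*$ is false: if $\cS$ contains both an element of $\cSbal^*$ and an element of $\cSbal^* E$, their product lies in $\cSbal^* E$, not in $\cSbal^*$. Your justification $(fE)(gE) = f\bar g$ handles only the case where both factors carry an $E$. The paper avoids this entirely by citing \cite{Richomme2003BBMS}, which shows that the morphisms preserving Lyndon Sturmian words over $\{a<b\}$ are exactly $\{L_a, R_b\}^*$; from $\cS \subseteq \{L_a, R_b\}^*$ one immediately excludes elements of $L_a^*$ and of $R_b^*$ (they would put $a^\omega$ or $b^\omega$ into $\stab(\cS)$), and finiteness of $\cS$ then bounds the $a$-runs achievable. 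Your shared-factor-set argument that ``preserves Lyndon Sturmian implies preserves Sturmian'' is correct, but it only yields $\cS\subseteq(\cSbal\cup\{E\})^*$; to complete your route you would still need to show that no element of $\cSbal^* E$ preserves Lyndon Sturmian over $\{a<b\}$, which you have not done.
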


\begin{proof}
The first part is a direct consequence of Theorems~5.6 and 6.5 in \cite{LeveRichomme2007TCS} from which we have: \textit{A {Sturmian} word $\bw$ is a Lyndon word over $\{a < b\}$ or over $\{b < a\}$ if and only if there exists a sequence $(d_n)_{n \geq 0}$ of integers such that
$d_1 \geq 0$, $d_k \geq 1$ for all $k \geq 2$ and}
$$\bw = \lim_{n \to \infty} L_a^{d_1}R_b^{d_2}L_a^{d_3}R_b^{d_4} \cdots L_a^{d_{2n-1}}R_b^{d_{2n}}(a)$$
\textit{or}
$$\bw = \lim_{n \to \infty} L_b^{d_1}R_a^{d_2}L_b^{d_3}R_a^{d_4} \cdots L_b^{d_{2n-1}}R_a^{d_{2n}}(a)$$
Consequently 
a Sturmian word $\bw$ is a Lyndon word over $\{a < b\}$ if and only if 
it is {$\cS_{Lynd}$}-adic. 
{Observe that, for any $f$ in $\cS_{Lynd}$, $|f(a)| \geq 2$ and $|f(b)| \geq 2$.
Hence the set $\bigcup_{\bs \in \cS_{Lynd}^\omega} \stablet(\bs)$ is empty.
By Corollary~{\protect\ref{C:condition_S_adic}}, the set of $\cS_{Lynd}$-adic words is equal to $\stab(\cS_{Lynd})$.
}

The second part is a consequence of the fact that morphisms that preserve Lyndon Sturmian words
are the elements of $\{L_a, R_b\}^*$ \cite{Richomme2003BBMS}.
Assume that the set of Lyndon Sturmian words is $\stab(\cS)$
for some $\cS \subseteq \subst(\{a, b\})$.
By Remark~\ref{Rem:preserve}, $\cS \subseteq \{L_a, R_b\}^*$.
But no element of $\cS$ can belong to $L_a^*$ or $R_b^*$ as  $a^\omega$ and $b^\omega$ 
(that are indefinitely desubstitutable over $L_a$ and $R_b$ respectively) are not Lyndon words.
Hence any element $f \in \cS$ should be a concatenation of elements of $\{L_a, R_b\}$
with at least one occurrence of $L_a$ and one occurrence of $R_b$.
But then{,} as in the proof of Proposition~\ref{No finite set for Sturmian},
$\cS$ cannot be finite as otherwise the number of occurrences of the letter $a$ between two occurrences of the letter $b$ would be bounded.
\end{proof}

\section{Episturmian words\label{sec:episturmian}}

In \cite{DroubayJustinPirillo2001,JustinPirillo2002} 
an \textit{episturmian word} is defined as an infinite word whose set of factors is closed under reversal
and that has at most one left special factor of each length 
(Let {us} recall that $u$ is a \textit{left special factor} of a word $\bw$ if $au$ and $bu$ are factors of $\bw$ for $a$ and $b$ two different letters).
The subset of the \textit{strict episturmian words} (or $A$-\textit{strict episturmian words} with $A$ the alphabet) 
corresponds to the \textit{Arnoux-Rauzy words}. 
It is the set of infinite words having exactly one left special factor of each length, 
whose set of factors is closed under {reversal}
and such that if $u$ is a left special factor {then} all words $\alpha u$, for $\alpha \in A$, are also factors.
Also an episturmian word is \textit{standard}
if  its left special factors are its prefixes.

The following morphisms are important when studying episturmian words. They extend {to} arbitrary alphabet
the morphisms of $\cSbal$, and the exchange morphism $E$. For $a, b \in A$, let $E_{a, b}$ be the substitution 
defined by $E_{a, b}(a) = b$, $E_{b,a}(b) = a$ and $E_{b,a}(c) = c$ for $c \in A\setminus\{a, b\}$. 
For $\alpha \in A$, let $L_\alpha$ and $R_\alpha$ be the following substitutions:

$L_\alpha: \left\{
\begin{tabular}{l}
$\alpha \mapsto \alpha$\\
$\beta \mapsto \alpha \beta$ for $\beta \neq \alpha$, $\beta \in A$
\end{tabular}\right.$
\hspace{0.3cm}
$R_\alpha: \left\{
\begin{tabular}{l}
$\alpha \mapsto \alpha$\\
$\beta \mapsto \beta\alpha$ for $\beta \neq \alpha$, $\beta \in A$
\end{tabular} \right.$

Let ${\cal L} = \{ L_\alpha \mid \alpha \in A\}$ and  ${\cal R} = \{ R_\alpha \mid \alpha \in A\}$.
{Let $\exch$ be the set $\{ E_{\alpha, \beta} \mid \alpha, \beta \in A\}$.
Observe that elements of $\exch^*$ are the permutations.
The set $\exch^*$ is finite and its cardinality is $\#A!$.}
{Here follows some useful relations.
For pairwise distinct letters $\alpha$, $\beta$, $\gamma$, 
we have
}
\begin{align}
L_\alpha E_{\beta,\gamma}= E_{\beta,\gamma}L_\alpha \label{eq1}\\
R_\alpha E_{\beta,\gamma}= E_{\beta,\gamma}R_\alpha \label{eq2}\\
L_\alpha E_{\alpha,\beta}= E_{\alpha,\beta}L_\beta  \label{eq3}\\
R_\alpha E_{\alpha,\beta}= E_{\alpha,\beta}R_\beta \label{eq4}
\end{align}

\subsection{Standard episturmian words and stable sets}

As far as we know, the next statement is the unique one that characterized explicitly a combinatorial family of words in term{s} of a stable set before the current paper.

\begin{proposition}[\mbox{\cite[Cor. 2.7]{JustinPirillo2002}}]\label{Y1}\label{P:stable_standard_episturmien}

The set of standard episturmian words is $\stab({\cal L})$.
\end{proposition}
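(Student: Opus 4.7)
The plan is to prove the two inclusions $\mathrm{SE} \subseteq \stab(\mathcal{L})$ and $\stab(\mathcal{L}) \subseteq \mathrm{SE}$ separately (writing SE for the set of standard episturmian words), relying on the classical directive-word characterization of these words: every standard episturmian word $\bw$ admits a directive word $\Delta = x_1 x_2 \cdots \in A^\omega$ satisfying $\bw = L_{x_1}(\bw')$, where $\bw'$ is itself the standard episturmian word with directive word $x_2 x_3 \cdots$.

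The inclusion $\mathrm{SE} \subseteq \stab(\mathcal{L})$ then follows by iterating this recursive decomposition: any standard episturmian word is indefinitely desubstitutable along $(L_{x_n})_{n \geq 1}$ and therefore lies in $\stab(\mathcal{L})$.

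For the reverse inclusion, fix $\bw \in \stab(\mathcal{L})$ with directive sequence $(L_{x_n})_{n \geq 1}$ and desubstituted words $(\bw_n)_{n \geq 0}$. I would first observe that $\first(L_\alpha(u)) = \alpha$ for any nonempty $u$, so $\first(\bw_n) = x_{n+1}$ for every $n$; consequently $L_{x_1} \cdots L_{x_n}(x_{n+1})$ is a prefix of $\bw$ for every $n$. On the other hand, applying the first inclusion to the suffix directive words shows that the standard episturmian word $s_\Delta$ with directive word $\Delta = x_1 x_2 \cdots$ satisfies $s_\Delta = L_{x_1}\cdots L_{x_n}(s_{\Delta_n})$ with $\Delta_n = x_{n+1} x_{n+2} \cdots$, so it admits the same family of prefixes. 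When these prefixes have unbounded length in $n$, this immediately forces $\bw = s_\Delta \in \mathrm{SE}$.

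The main obstacle is the degenerate case where the lengths $|L_{x_1}\cdots L_{x_n}(x_{n+1})|$ stay bounded; a short calculation shows this happens precisely when $\Delta$ is eventually constant, say $x_n = \alpha$ for all $n \geq N$. Then $\bw_N \in \stab(\{L_\alpha\})$, and I would establish the auxiliary fact $\stab(\{L_\alpha\}) = \{\alpha^\omega\}$ by noting that if some element of $\stab(\{L_\alpha\})$ were distinct from $\alpha^\omega$, it would have a first non-$\alpha$ letter at some finite position $p$, but successive desubstitutions by $L_\alpha$ strictly decrease this position at each step, eventually reaching the impossible value $0$. Hence $\bw_N = \alpha^\omega = s_{\Delta_N}$, and applying $L_{x_1}\cdots L_{x_{N-1}}$ to both sides gives $\bw = s_\Delta \in \mathrm{SE}$, completing the proof.
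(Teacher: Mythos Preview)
The paper does not supply its own proof of this proposition; it is simply quoted as Corollary~2.7 of Justin--Pirillo \cite{JustinPirillo2002}. So there is no in-paper argument to compare against.

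Your reconstruction is essentially correct, with one caveat worth making explicit. The ``classical directive-word characterization'' you invoke for the inclusion $\mathrm{SE} \subseteq \stab(\mathcal{L})$ is itself the substantive combinatorial content of the Justin--Pirillo result: starting from the definition used in the paper (set of factors closed under reversal, at most one left special factor of each length, and all left special factors are prefixes), one must prove that any such word decomposes as $L_\alpha(\bw')$ with $\bw'$ again standard episturmian. You take this step as known, which is legitimate as a citation but means your proof is not self-contained from the paper's definitions; it effectively reduces the proposition to the very lemma that underlies Corollary~2.7 in \cite{JustinPirillo2002}.

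Your treatment of the reverse inclusion $\stab(\mathcal{L}) \subseteq \mathrm{SE}$ is correct and complete. The observation that $\first(\bw_n) = x_{n+1}$, the matching of prefixes $L_{x_1}\cdots L_{x_n}(x_{n+1})$ between $\bw$ and $s_\Delta$, the dichotomy on whether these prefix lengths are bounded, and the identification $\stab(\{L_\alpha\}) = \{\alpha^\omega\}$ via the decreasing-position argument are all sound. In particular, your handling of the eventually-constant directive-word case is exactly the kind of degenerate situation the paper itself repeatedly isolates elsewhere (e.g.\ in the proofs of Propositions~\ref{S-adicity balanced words} and~\ref{P:stable_standard_episturmian}, third part).
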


Strict standard episturmian words correspond to the words whose directive sequence contains
infinitely many $L_\alpha$ for each letter $\alpha$ (see \cite{JustinPirillo2002}).
In the binary case they correspond to the standard Sturmian words, also
called characteristic words, that is, to the infinite binary words such that their
prefixes are exactly their special factors.
It is known that the standard Sturmian words are the ${\cal L}$-adic words (see, \textit{e.g.}, \cite[Prop. 2.2.24]{Lothaire2002}).
This does not extend to strict standard episturmian words for alphabet{s} with three letters or more.
Indeed over an alphabet containing three letters or more,
being ${\cal L}$-adic does not imply that there exists a directive sequence such that,
for each letter $\alpha$, 
$L_\alpha$ occurs infinitely often in the directive sequence.
{For instance, the episturmian word directed by $L_c(L_aL_b)^\omega$ is ${\cal L}$-adic but it not a strict standard episturmian word.}

Let ${\cal L}_{StrictStand}$ be the set of all elements of ${\cal L}^*$
whose decompositions on ${\cal L}$ 
{contain} at least one occurrence of each element of ${\cal L}$
and the last substitution in the decomposition has no previous occurrence in the decomposition:
{$f \in {\cal L}_{StrictStand}$} if $f = L_{a_1}L_{a_2}\cdots L_{a_k}$ {for some letters $a_1, \ldots, a_k$, if
for each $\alpha \in A$, there exists $i$,  $1 \leq i \leq k$ such that $a_i = \alpha$, and if $a_k \not\in \{a_1, \ldots, a_{k-1}\}$.
(Observe that the decomposition $L_{a_1}L_{a_2}\cdots L_{a_k}$ is unique:
This can be verified directly using the fact that, for an infinite word $\bw$ containing $a$, $L_a(\bw)$ contains $aa$}) 
For instance, when $A = \{a, b\}$, {${\cal L}_{StrictStand}$} $= \{ L_a^n L_b, L_b^nL_a \mid n \geq 1\}$.
From {the definition of strict standard episturmian words and from the definition of the set  ${\cal L}_{StrictStand}$}, the first part of the next {proposition} may be directly verified.

\begin{proposition}
\label{P:stable_standard_episturmian}
Let $A$ be an alphabet containing at least two letters.
\begin{itemize}
\item The set of $A$-strict standard episturmian words
is the set {$\stab({\cal L}_{StrictStand})$}
\item There exists no finite set $\cS \subseteq \subst(A)$ such that 
{$\stab({\cal L}_{StrictStand}) = \stab(\cS)$}.
\item The set of standard episturmian words is the set of
{${\cal L}$}-adic words.
\end{itemize}
\end{proposition}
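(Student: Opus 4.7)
The plan is to handle the three parts separately, starting with the two routine ones.

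For the first part, I would exploit the characterization cited from \cite{JustinPirillo2002}: a standard episturmian word is $A$-strict if and only if it admits a directive sequence in ${\cal L}^\omega$ in which every $L_\alpha$ occurs infinitely often. Combined with Proposition~\ref{P:stable_standard_episturmien}, this reduces the first bullet to establishing a correspondence between directive sequences in ${\cal L}_{StrictStand}^\omega$ and directive sequences in ${\cal L}^\omega$ in which each $L_\alpha$ occurs infinitely often. Concatenating the ${\cal L}$-decompositions of the successive elements of a sequence in ${\cal L}_{StrictStand}^\omega$ yields one direction. For the converse, I would factor greedily any eligible sequence in ${\cal L}^\omega$ into its successive shortest prefixes each containing every element of ${\cal L}$: each such block automatically lies in ${\cal L}_{StrictStand}$ since greedy minimality forces its last letter to be fresh. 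Restricting the desubstituted sequence of the ${\cal L}$-directive to block boundaries yields a desubstituted sequence for the corresponding ${\cal L}_{StrictStand}$-directive, and conversely.

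For the third part, the plan is to apply Corollary~\ref{C:structure2} with $\cS = {\cal L}$. The key observation is that $\stablet(\bs) = \emptyset$ for every $\bs \in {\cal L}^\omega$ unless $\bs = L_\alpha^\omega$ for some letter $\alpha$, in which case $\stablet(L_\alpha^\omega) = \{\alpha\}$ and $\adic(L_\alpha^\omega) = \{\alpha^\omega\}$; moreover $\alpha^\omega = \lim_{n \to \infty} L_\alpha^n(\beta)$ for any $\beta \neq \alpha$ is itself ${\cal L}$-adic. Feeding this into Corollary~\ref{C:structure2} shows that every element of $\stab({\cal L})$ takes either the form $f(\bw')$ with $f \in {\cal L}^*$ and $\bw'$ already ${\cal L}$-adic, or the form $f(\alpha^\omega)$ with $f \in {\cal L}^*$ and $\alpha \in A$; both forms are ${\cal L}$-adic, and the reverse inclusion is Proposition~\ref{S-adic=>stable}.

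The second part is the most delicate. Assuming for contradiction that a finite $\cS \subseteq \subst(A)$ satisfies $\stab(\cS) = \stab({\cal L}_{StrictStand})$, Remark~\ref{Rem:preserve} forces every $f \in \cS$ to preserve $A$-strict standard episturmian words. I would then invoke the characterization of preserving morphisms developed later in this section to put each such $f$ in the normal form $g\pi$ with $g \in {\cal L}^*$ and $\pi \in \exch^*$, and use relations~(\ref{eq1})--(\ref{eq4}) to push all permutations systematically to the right in any $\cS$-directive sequence. This converts such a sequence into a concatenation of ${\cal L}^*$-blocks drawn from a finite set of patterns, each of length at most some fixed $M$ (finiteness of $\exch^*$ is used here). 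I would then pick an $A$-strict standard episturmian word $\bw$ whose ${\cal L}$-directive sequence contains a run $L_\alpha^N$ with $N > 2M$ (possible since $L_\alpha^n L_\beta \in {\cal L}_{StrictStand}$ for arbitrary $n$). Pigeonhole forces at least one ${\cal L}^*$-block to lie entirely inside that run, so some $f \in \cS$ must have ${\cal L}$-expansion $L_\alpha^k$ with $k \geq 1$; iterating $f$ then puts $\alpha^\omega$ into $\stab(\cS)$, contradicting the fact that $\alpha^\omega$ is not $A$-strict standard episturmian. The main obstacle is establishing this clean block structure of $\cS$-directives in the presence of exchange morphisms, which is exactly where the preservation characterization of the section pulls its weight.
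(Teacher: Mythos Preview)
Your first and third parts are correct and essentially match the paper's treatment (for the third part you route through Corollary~\ref{C:structure2} where the paper argues by a direct case split on the directive sequence, but both arguments work).

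The second part has a genuine gap. After writing each $\sigma_j \in \cS$ as $g_j\pi_j$ with $g_j \in {\cal L}^*$ and $\pi_j \in \exch^*$ and pushing permutations to the right via~\eqref{eq1} and~\eqref{eq3}, the $j$-th ${\cal L}$-block in the normalized concatenation is not $g_j$ but its conjugate $\rho_{j-1}\,g_j\,\rho_{j-1}^{-1}$, where $\rho_{j-1} = \pi_1\cdots\pi_{j-1}$. Hence if that block equals $L_\alpha^k$, all you learn is that $\sigma_j = L_\beta^k\pi_j$ for some letter $\beta$ and some (possibly nontrivial) $\pi_j$. Iterating this single $\sigma_j$ need not put any periodic word into $\stab(\cS)$: for instance with $A = \{a,b\}$ and $\sigma = L_aE_{a,b}$ one gets $\sigma^2 = L_aL_b$, whose fixed points are Sturmian, hence genuinely $A$-strict. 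So your contradiction does not fire.

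The paper closes this gap by a second pigeonhole, applied to the cumulative permutations $\rho_i$ rather than to a single block. It sets $M = (\#A!+1)M_1$ (with $M_1$ the bound on block lengths), chooses an $A$-strict standard episturmian word containing the factor $ba^{M+2}b$, and proves a short inductive Fact forcing every partial product $\sigma_1\cdots\sigma_i$ for $1 \le i \le \#A!+1$ to normalize as $L_a^{\ell_i}\rho_i$. Pigeonhole on the finitely many $\rho_i \in \exch^*$ then gives $i<j$ with $\rho_i=\rho_j$, whence $\sigma_{i+1}\cdots\sigma_j = L_c^{\ell_j-\ell_i}$ with \emph{trivial} permutation part; iterating this element of $\cS^*$ (not of $\cS$) puts $c^\omega$ into $\stab(\cS)$, yielding the contradiction. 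Your bound $N>2M$ and ``one block inside the run'' do not suffice: you must also absorb the $\#A!$ possible permutations.

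One small citation point: the preservation result you need here (morphisms preserving $A$-strict standard episturmian words lie in $({\cal L}\cup\exch)^*$) is Theorem~10 of~\cite{DroubayJustinPirillo2001}, quoted in the paper's proof, rather than the later Proposition~\ref{P:preserving_episturmian}, which concerns general episturmian words.
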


\begin{proof}
{The proof of the second part is 
similar to the proof of Proposition~{\protect\ref{No finite set for Sturmian}}
but we have to face a combinatorial problem due to the fact that the alphabet may have more than two letters.
}

{
Assume, by contradiction, that the set of $A$-strict standard episturmian words, 
($\stab({\cal L}_{StrictStand})$ by the first part of the proposition)
is equal to $\stab(\cS)$ for some finite set $\cS$ of substitutions.
First observe that, by Remark~{\protect\ref{Rem:preserve}}, for all $f$ in $\cS$, 
$f$ preserves $A$-strict standard episturmian words. 
Theorem 10 in {\protect\cite{DroubayJustinPirillo2001}} states that \textit{a substitution $\sigma \in \subst(A)$
preserves $A$-strict standard episturmian words
if and only $\sigma \in ({\cal L} \cup \exch)^*$}.
Hence $\cS \subseteq ({\cal L} \cup \exch)^*$.
By Equations~{\protect{\eqref{eq1}}} and {\protect{\eqref{eq3}}},  $\cS^* \subseteq {\cal L}^*\exch^*$ :  any element $\sigma$ of $\cS^*$ can be decomposed into $\sigma = f \pi$ with $f \in {\cal L}^*$ and $\pi$ a permutation.
}

{
Let $a$ and $b$ be two different letters.
The following fact is important.
}

{
\textbf{Fact.} Let $i$, $j$ be integers such that $i \leq j$. 
Let $\bw \in A^\omega$ and $f \in {\cal L}^i$. 
If $f(\bw)$ contains a factor $ba^{j+2}b$,
then $f = L_a^i$.
}

{
We prove this fact by induction. It is basically true for $i = 0$.
Assume $i \geq 1$.
Observe that for $c \neq a$ and for any infinite word $\bu$, $L_c(\bu)$ does not contain $aa$.
As $aa$ is a factor of $f(\bw)$ and as $f \in {\cal L}^i$, it follows that
$f = L_a g$ with $g \in {\cal L}^{i-1}$. 
Consequently, $g(\bw)$ contains a factor $ba^{(j-1)+2}b$ and $i-1 \leq j-1$.
Hence $g = L_a^{i-1}$ by induction. So $f = L_a^i$.
}

{
Now let $M_1 = \max( \{ |f| \mid f\pi \in \cS, f \in {\cal L}^*, \pi \in \exch^*\}$
(here, for $f \in {\cal L}^*$, $|f|$ is the length of $f$ considered as a word over the alphabet ${\cal L}$).
Let
$M = (\#A!+1)M_1$ 
Let $\bw$ be a strict standard episturmian word containing the factor $ba^{M+2}b$.
Let $(\sigma_n)_{n \geq 1} \in \cS^\omega$ be a directive word of $\bw$ (as an element of $\stab(\cS)$).
}

{
Let $i$, $1 \leq i \leq \#A!+1$. 
For $1 \leq j \leq i$, as $\sigma_j \in {\cal L}^*\exch^*$,
there exist $g_j \in {\cal L}^*$ and $\pi_j' \in \exch^*$
such that $\sigma_j = g_j \pi_j'$.
Using Equations~{\protect{\eqref{eq1}}} and {\protect{\eqref{eq3}}}, 
we can see that
there exist $f_i \in {\cal L}^*$ and $\pi_i \in \exch^*$ such that $\sigma_1\cdots\sigma_i = f_i \pi_i$.
Moreover $|f_i| = \sum_{j = 1}^i |g_j|$ and so $|f_i| \leq i M_1 \leq M$.
From the previous fact, $f_i \in\{L_a\}^*$.
Since the cardinality of $\exch^*$ is $\#A!$, there exists $i$ and $j$ such that $1 \leq i < j \leq \#A!+1$, such that $\pi_i = \pi_j$. Set $\pi = \pi_i$.
}

{
Let $\ell_i$ and $\ell_j$ be the integers such that $\sigma_1 \ldots \sigma_i = L_a^{\ell_i} \pi$ 
and $\sigma_1 \ldots \sigma_j = L_a^{\ell_j}\pi$. We have $L_a^{\ell_j}\pi =$ 
$\sigma_1 \ldots \sigma_j =$ 
$L_a^{\ell_i}\pi\sigma_{i+1}\cdots\sigma_j$. Thus 
$\pi\sigma_{i+1}\cdots\sigma_j = L_a^{\ell_j-\ell_i}\pi$.
By Equations~{\protect{\eqref{eq1}}} and {\protect{\eqref{eq3}}}, 
there exists a letter $c$ such that
$L_a^{\ell_j-\ell_i}\pi = \pi L_c^{\ell_j-\ell_i}$. So $\sigma_{i+1}\cdots\sigma_j =$ $L_c^{\ell_j-\ell_i}$.
}

{
Now observe that the word $c^\omega$ can be indefinitely desubstituted using the morphism $L_c^{\ell_j-\ell_i}$.
Thus $c^\omega \in \stab(\cS)$. But $c^\omega$ is not a strict standard episturmian word as it does not contain all the letters of $A$. This contradicts the fact that $\stab(\cS)$ is the set of strict standard episturmian words.
}

\medskip

{Let us prove the third part of Proposition~{\protect\ref{P:stable_standard_episturmian}}.}
By Proposition~\ref{Y1}, standard episturmian words are the elements of $\stab(\cal L)$.
If an element $\bw$ of $ \stab(\cal L)$ has a directive sequence $(\sigma_n)_{n \geq 1}$
containing infinitely many occurrences of at least two elements of ${\cal L}$ 
and  if {$(\bw_n)_{n \geq 0}$} is the corresponding sequence of desubstituted words,
then
one can verify 
that $\lim_{n \to \infty} |\sigma_1\cdots\sigma_n(\first(\bw_n))|= \infty$, 
and so, $\bw$ is {${\cal L}$}-adic.
If it is ultimately $L_\alpha^\omega$, then $\bw = f(\alpha^\omega)$ for some $f \in {\cal L}^*$.
{As there exists a letter $\beta$ different from $\alpha$, 
$\alpha^\omega = \lim_{n \to \infty} L_\alpha^n(\beta)$: $f(\alpha^\omega)$ is ${\cal L}$-adic.}

{Conversely, by Proposition~{\protect\ref{S-adic=>stable}}, the set of $\cal L$-adic words is a subset of $\stab(\cal L)$.}
\end{proof}

Let us recall some results on the family of \textit{LSP words}.
These words were introduced by G.~Fici in 2011 \cite{Fici2011TCS} as the words having all {their} left special factors as prefixes.
Standard episturmian words are particular LSP words.
The set of factors of these words is not necessarily closed by reversal.
In \cite{Richomme2019IJFCS} it was proved that the set of LSP words over $\{a, b\}$ is
the set $\stab(\{L_a, L_b\})$.
By Proposition~\ref{Y1} 
this means that on a binary alphabet, a word is standard episturmian if and only if it is LSP.
This does not hold on larger alphabets. 
Indeed by  \cite{Richomme2019IJFCS}
there exist no finite set $\cS \subseteq \subst(A)$ such that the set
of LSP words over $A$ is $\stab(\cS)$.

\subsection{Recurrence, stable sets and Episturmian words}

Justin and Pirillo \cite{JustinPirillo2002} obtained
also the next result on desubstitutions of episturmian words.
Let {us} recall that a word is \textit{recurrent} if all its factors occur infinitely often, or equivalently,
if all its factors occur at least twice.

\begin{theorem}[\mbox{\cite[Th. 3.10]{JustinPirillo2002}}]
\label{T:charac_episturmien_JP}
An infinite word $\bw$ is episturmian
if and only if 
$\bw \in \stab({\cal L} \cup {\cal R})$ and
it has a sequence {$(\bw_n)_{n \geq 0}$}
of \textbf{recurrent} desubstituted words.
\end{theorem}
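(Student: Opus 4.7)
The plan is to establish the two implications separately. The difficulty is asymmetric: the stable set $\stab({\cal L} \cup {\cal R})$ strictly contains the set of episturmian words (it contains, for instance, words of the form $a^n b a^\omega$), so the recurrence assumption on the desubstituted sequence is essential and is precisely what rules out such degenerate elements in the converse.

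For the only if direction, I would rely on the known fact, established by Droubay, Justin and Pirillo, that every episturmian word is uniformly recurrent. Consequently, any desubstitution sequence producing episturmian words automatically yields recurrent desubstituted words, and it suffices to exhibit, for an episturmian word $\bw$, a decomposition $\bw = \sigma(\bw')$ with $\sigma \in {\cal L} \cup {\cal R}$ and $\bw'$ again episturmian. Such a decomposition follows from the palindromic-closure characterization of episturmian words: examining the first letter $\alpha$ of $\bw$ together with the longest palindromic prefixes of $\bw$, one determines whether $\bw$ lies in the image of $L_\alpha$ or of $R_\alpha$, and a routine check using the closure of factors under reversal shows that $\bw'$ is again episturmian. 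Iterating yields the desired directive sequence in ${\cal L} \cup {\cal R}$ and its attendant recurrent desubstituted words.

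For the converse, let $\bw \in \stab({\cal L} \cup {\cal R})$ admit a sequence $(\bw_n)_{n \geq 0}$ of recurrent desubstituted words with directive sequence $(\sigma_n)_{n \geq 1}$. I would show by induction on the factor length $k$ that, for every $n \geq 0$, the set of length-$k$ factors of $\bw_n$ is closed under reversal and contains at most one left special factor. A length-$k$ factor $u$ of $\bw_n$ lifts, for some $m > n$, to a factor $v$ of $\bw_m$ together with data on its flanking letters; the recurrence of $\bw_m$ guarantees that this flanking data is unambiguous, and the explicit structure of $L_\alpha$ and $R_\alpha$ then transports both reversal-closure and the uniqueness of left special factors back down to $\bw_n$. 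Applied at $n = 0$, this yields that $\bw$ is episturmian.

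The main obstacle will be the lifting step: a precise lemma asserting that for any recurrent infinite word $\bu$, both $L_\alpha(\bu)$ and $R_\alpha(\bu)$ inherit closure of factors under reversal and uniqueness of left special factors whenever $\bu$ satisfies these properties. Unlike Lemma~\ref{morphisms preserving balance property}, this cannot be reduced to a uniform morphism-preservation principle, because non-recurrent inputs provide genuine counterexamples (as witnessed by $a^n b a^\omega$). The combinatorial heart consists in showing that every palindromic factor and every left special factor of $L_\alpha(\bu)$ or $R_\alpha(\bu)$ arises from a palindromic or left special factor of $\bu$ through the appropriate extension, with recurrence supplying the extra occurrences needed to recover the bispecial information that might otherwise be lost at boundaries; this parallels the standard desubstitutive treatment of Arnoux-Rauzy words in the Justin-Pirillo framework.
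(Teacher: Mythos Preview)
The paper does not prove this statement: Theorem~\ref{T:charac_episturmien_JP} is simply quoted from Justin and Pirillo \cite[Th.~3.10]{JustinPirillo2002} and used as a black box thereafter. There is therefore no ``paper's own proof'' to compare your proposal against.

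That said, your outline is reasonable and close in spirit to the original Justin--Pirillo argument. Two remarks. First, for the forward direction you say one ``determines whether $\bw$ lies in the image of $L_\alpha$ or of $R_\alpha$'' from the first letter and palindromic prefixes; be aware that the actual mechanism in \cite{JustinPirillo2002} hinges on whether the \emph{standard} episturmian word sharing the same factor set has its first letter equal to $\first(\bw)$ (giving $L_\alpha$) or not (giving $R_\alpha$), and that the desubstituted word is again episturmian because it is a shift of the desubstituted standard word---your description via ``longest palindromic prefixes'' would need to be made precise along these lines. Second, for the converse, your inductive scheme on factor length is workable, but the cleaner route (and the one actually taken in \cite{JustinPirillo2002}) is to prove directly that if $\bu$ is recurrent and episturmian then so are $L_\alpha(\bu)$ and $R_\alpha(\bu)$, and then observe that every $\bw_n$ is a limit of words $\sigma_{n+1}\cdots\sigma_m(\first(\bw_m))$ whose factors are episturmian; recurrence is used exactly where you identify it, namely to control boundary effects when pulling back left special factors through $R_\alpha$.
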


{The p}revious theorem and Proposition~\ref{P:characterising balanced words}
show that, in the binary case,
the episturmian words are the recurrent balanced words.
On arbitrary alphabet, this characterization of episturmian words can be simplified {as follows}.

\begin{proposition}
\label{P:another_characterization_of_episturmian_words}
An infinite word $\bw$ is episturmian if and only if 
it is a recurrent element of $\stab({\cal L} \cup {\cal R})$.
\end{proposition}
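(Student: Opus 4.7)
The plan is to reduce the statement to Theorem~\ref{T:charac_episturmien_JP} of Justin and Pirillo. The \emph{only if} direction is immediate: if $\bw$ is episturmian then by Theorem~\ref{T:charac_episturmien_JP} it belongs to $\stab({\cal L} \cup {\cal R})$ and admits a sequence $(\bw_n)_{n \geq 0}$ of recurrent desubstituted words; in particular $\bw = \bw_0$ is recurrent.

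For the converse, suppose that $\bw$ is a recurrent element of $\stab({\cal L} \cup {\cal R})$, and fix a directive sequence $(\sigma_n)_{n \geq 1} \in ({\cal L} \cup {\cal R})^\omega$ with associated desubstituted words $(\bw_n)_{n \geq 0}$. By Theorem~\ref{T:charac_episturmien_JP}, it suffices to check that \emph{every} $\bw_n$ is recurrent; since $\bw_{n-1} = \sigma_n(\bw_n)$, an immediate induction on $n$ reduces this to the following key lemma: for every $\sigma \in {\cal L} \cup {\cal R}$ and every $\bu \in A^\omega$, the recurrence of $\sigma(\bu)$ implies the recurrence of $\bu$.

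To prove this key lemma for $\sigma = L_\alpha$, I would consider the canonical block factorization of $L_\alpha(\bu)$ in which each letter $c$ of $\bu$ contributes a block $L_\alpha(c)$, equal either to $\alpha$ (when $c = \alpha$) or to $\alpha\beta$ with $\beta \neq \alpha$ (when $c = \beta$). The decisive observation is that the positions at which the letter $\alpha$ occurs in $L_\alpha(\bu)$ are exactly the block starts, because in any block of length two the trailing letter is distinct from $\alpha$. Now for any factor $u$ of $\bu$, the word $L_\alpha(u)$ begins with $\alpha$, so every occurrence of $L_\alpha(u)$ in $L_\alpha(\bu)$ must begin at a block boundary; a block-by-block check then shows that such an occurrence arises from an honest occurrence of $u$ in $\bu$. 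Consequently $u$ occurs in $\bu$ at least as often as $L_\alpha(u)$ occurs in $L_\alpha(\bu)$, which is infinitely often by hypothesis. The case $\sigma = R_\alpha$ is completely symmetric, using that $R_\alpha(u)$ always ends with $\alpha$ and that the $\alpha$-positions in $R_\alpha(\bu)$ are precisely the block ends.

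The main obstacle is precisely the block-alignment step, namely verifying that an occurrence of $L_\alpha(u)$ in $L_\alpha(\bu)$ is forced to align with the block factorization; this is a standard ``marked morphism'' phenomenon and requires only careful bookkeeping along the first few letters of $u$. Everything else is either a routine induction or a direct appeal to Theorem~\ref{T:charac_episturmien_JP}.
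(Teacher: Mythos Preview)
Your key lemma is false for $\sigma = R_\alpha$. Take $\bu = \alpha\beta^\omega$ with $\beta \neq \alpha$: then $R_\alpha(\bu) = \alpha\cdot(\beta\alpha)^\omega = (\alpha\beta)^\omega$ is periodic and hence recurrent, while $\bu$ is not (the letter $\alpha$ occurs only once). The ``completely symmetric'' argument breaks down because, although $R_\alpha(u)$ ends with $\alpha$ and the $\alpha$-positions in $R_\alpha(\bu)$ are exactly the block ends, this only pins down the \emph{right} end of an occurrence. When $u$ begins with $\alpha$, the left end of $R_\alpha(u)$ is again $\alpha$, and that $\alpha$ may sit at the end of a two-letter block $\gamma\alpha$ in $R_\alpha(\bu)$ rather than at a block start. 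In the counterexample, $R_\alpha(\alpha) = \alpha$ occurs at every odd position of $(\alpha\beta)^\omega$, yet only the first of these is a block start.

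The paper repairs exactly this by inserting a normalization step (Lemma~\ref{L:normalization}): any $\bw \in \stab({\cal L}\cup{\cal R})$ admits a directive sequence satisfying $\sigma_{k+1} \neq R_{\first(\bw_k)}$ for every $k$. One then needs the recurrence transfer only for $\sigma_{k+1} = L_\alpha$ (where your argument is correct) and for $\sigma_{k+1} = R_\beta$ with $\beta \neq \first(\bw_k)$ (Lemma~\ref{L:recurrence}). In the latter case the paper reduces to the $L$-case via the identity $R_\beta(\alpha\bw') = \alpha L_\beta(\bw')$, and handles the residual issue --- recurrence of prefixes of $\alpha\bw'$ --- by noting that $R_\beta(\alpha v)$ begins with $\alpha \neq \beta$ and is therefore forced to start on a block boundary. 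Your overall plan (induct down the desubstituted words and appeal to Theorem~\ref{T:charac_episturmien_JP}) is sound; what is missing is this normalization to exclude the bad case $\sigma_{k+1} = R_{\first(\bw_k)}$.
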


To prove this {proposition} we need the next two lemmas. The first one is a slight variation of Theorem 5.2 in \cite{GlenLeveRichomme2008TIA} that concerns only the directive sequence and not the sequence of desubstituted words.

\begin{lemma}
\label{L:normalization}
Let $\bw$ be an element of $\stab({\cal L} \cup {\cal R})$.
There exist 
a sequence of infinite words $(\bw_n)_{n \geq 0}$
and a sequence $(\sigma_n)_{n \geq 1}$ in $({\cal L} \cup {\cal R})^\omega$
such that $\bw_0 = \bw$, for all $n \geq 0$, $\bw_n = \sigma_{n+1}(\bw_{n+1})$,
and, for all {$k \geq 0$}, if $\bw_k$ begins with the letter $\alpha$, then $\sigma_{k+1} \neq R_\alpha$.
\end{lemma}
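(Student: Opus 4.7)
The central tool is the identity $\alpha R_\alpha(x) = L_\alpha(x)$, valid for every letter $\alpha$ and every infinite word $x \in A^\omega$. I would derive it from its finite-word counterpart $\alpha R_\alpha(w) = L_\alpha(w)\alpha$ (itself checked letter by letter on $w \in A^*$) by passing to the limit over the prefixes of $x$: the trailing $\alpha$ gets pushed to infinity.

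The plan is to construct the normalized desubstitution inductively, starting from any desubstituted sequence $(\bw_n^\star)_{n \geq 0}$ of $\bw$ with directive $(\sigma_n^\star)_{n \geq 1}$. Having defined $\bw_0, \ldots, \bw_n$ and $\sigma_1, \ldots, \sigma_n$ together with an ``attached'' desubstitution of $\bw_n$ in $({\cal L} \cup {\cal R})^\omega$, I would inspect the first morphism $\sigma^\star$ of that attached desubstitution. If $\sigma^\star \neq R_{\first(\bw_n)}$, I would simply keep $\sigma^\star$ and the next attached word. Otherwise $\first(\bw_n) = \alpha$ and $\bw_n = R_\alpha(\bw_{n+1}^\star)$; since $R_\alpha$ preserves first letters, $\first(\bw_{n+1}^\star) = \alpha$, so writing $\bw_{n+1}^\star = \alpha v$, the identity rewrites $\bw_n = L_\alpha(v)$. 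I would then set $\sigma_{n+1} = L_\alpha$ and $\bw_{n+1} = v$.

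The main obstacle is to equip the new $\bw_{n+1} = v$ with an attached desubstitution in $({\cal L} \cup {\cal R})^\omega$ so the induction can continue. I would handle this by a case analysis on $\sigma_{n+2}^\star$ (the case $L_\beta$ with $\beta \neq \alpha$ is ruled out since $\bw_{n+1}^\star$ starts with $\alpha$). Using the identity together with the commutation relations~\eqref{eq1}--\eqref{eq4}, each remaining case either expresses $v$ as $\tau(\bw_{n+2}^\star)$ for an explicit $\tau \in {\cal L} \cup {\cal R}$ (so the original desubstitution of $\bw_{n+2}^\star$ can be re-attached), or reduces the problem to extracting a desubstitution of the tail of $\bw_{n+2}^\star$, which is the same type of problem one step deeper.

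The remaining delicate point is what happens if this ``descend one step'' reduction never terminates. A short analysis based on $R_\alpha^n(\alpha) = \alpha$ and $R_\alpha^n(\beta) = \beta\alpha^n$ shows that such an infinite descent forces $\bw$ to belong to the degenerate family $\{\alpha^\omega\} \cup \{\alpha^j \beta \alpha^\omega : j \geq 0,\, \beta \in A \setminus \{\alpha\}\}$. For these words I would exhibit normalized desubstitutions explicitly: the directive $L_\alpha^\omega$ (with every desubstituted word equal to $\alpha^\omega$) for $\bw = \alpha^\omega$, and the directive $L_\alpha^j R_\alpha^\omega$ for $\bw = \alpha^j \beta \alpha^\omega$, relying on the fixed-point relation $R_\alpha(\beta \alpha^\omega) = \beta \alpha^\omega$ and on $\beta \neq \alpha$ for the normalization condition to hold.
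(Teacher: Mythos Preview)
Your approach and the paper's are essentially the same: both construct the normalized sequence inductively from a given desubstitution, using the identity $\alpha R_\alpha(x) = L_\alpha(x)$ to replace a forbidden $R_\alpha$ by $L_\alpha$ at the cost of stripping a leading $\alpha$ from the next word. The paper packages this more cleanly by maintaining the explicit invariant $\bw_k = \first(\bw_k)^{n_k}\bw_k'$ throughout; with that invariant, a single case analysis on the original $\sigma_{k+1}$ suffices at each step, with no inner ``descent'' and no need to look further ahead in the original directive sequence.

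Your treatment of the infinite-descent case contains a real error. The descent continues whenever the next attached morphism lies in $\mathcal{R}$, not only when it equals $R_\alpha$; the successive $R_{\gamma}$'s may have varying subscripts $\gamma$. So an infinite descent merely says that the attached desubstitution of $\bw_n$ lies in $\mathcal{R}^\omega$, which does \emph{not} force $\bw_n$ (let alone $\bw$) into your degenerate family: over $\{a,b\}$, a directive sequence in $\{R_a,R_b\}^\omega$ using both morphisms infinitely often directs a Sturmian word. Your analysis ``based on $R_\alpha^n(\alpha)=\alpha$ and $R_\alpha^n(\beta)=\beta\alpha^n$'' silently assumes all subscripts coincide. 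The fix is painless: in every ``reduce'' case you have already written $v=\tau(u)$ with $\tau\in\mathcal{L}\cup\mathcal{R}$ and $u$ the tail of the next attached word, so an infinite descent simply builds the full desubstitution of $v$ one morphism at a time, and you can attach that infinite sequence directly---no separate degenerate case is needed. (Incidentally, the relations \eqref{eq1}--\eqref{eq4} you invoke concern the exchange morphisms $E_{\alpha,\beta}$ and play no role here.)
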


\begin{proof}
As $\bw \in \stab({\cal L} \cup {\cal R})$,
there exist sequences $(\bw_n)_{n \geq 0}$ and $(\sigma_n)_{n \geq 1}$ 
such that $\bw_0 = \bw$ and for all $n \geq 0$, $\bw_n = \sigma_{n+1}(\bw_{n+1})$.

{We construct sequences $(\bw_n')_{n \geq 0}$ and $(\sigma_n')_{n \geq 1}$ 
such that $\bw_0' = \bw$, $\bw_n' =$ $\sigma_{n+1}'(\bw_{n+1}')$ for all $n \geq 0$, 
and, for all $k \geq 0$, if $\bw_k'$ begins with the letter $\alpha$, then $\sigma_{k+1}' \neq R_\alpha$.
It may be emphasize that for all $k \geq 0$, there will exist an integer $n_k \geq 0$, such that $\bw_k = \first(\bw_k)^{n_k} \bw_k'$.
The construction is done by an infinite induction. Set $\bw_0' = \bw_0$.}

{Let $k \geq 0$ be an integer such that $(\bw_i')_{0 \leq i \leq k}$ 
and $(\sigma_i')_{1 \leq i \leq k}$ are already defined.}

{Assume first that $\bw_k' = \bw_k$.}

{Let $\alpha =  \first(\bw_k)$.
If $\sigma_{k+1} \neq R_\alpha$ (that is $\sigma_{k+1} = R_\beta$ with $\beta \neq \alpha$ or  $\sigma_{k+1} = L_\beta$ (possibly $\beta = \alpha$)), 
set $\bw_{k+1}' =$ $\bw_{k+1}$
and $\sigma_{k+1}' =$ $\sigma_{k+1}$.}

{If $\sigma_{k+1} = R_\alpha$, from $\bw_k = R_\alpha(\bw_{k+1})$,
we get $\first(\bw_{k+1}) =$ $\alpha$. Let $\bw_{k+1}'$ be the word such that $\bw_{k+1} = \alpha \bw_{k+1}'$.
Set $\sigma_{k+1}'  = L_\alpha$.
We have $\bw_k' =$ $\bw_k =$
$R_\alpha(\bw_{k+1}) =$ 
$R_\alpha(\alpha \bw_{k+1}') =$ 
$\alpha R_\alpha(\bw_{k+1}') =$ 
$L_\alpha(\bw_{k+1}') =$ 
$\sigma_{k+1}'(\bw_{k+1}')$.}

{To continue the induction step, we now have to consider cases where $\bw_k = \alpha^n \bw_k'$ 
for some letter $\alpha$ and some integer $n \geq 1$.
Observe that, since $\alpha^n \bw_k' = \bw_k = \sigma_{k+1}(\bw_{k+1})$,
we cannot have $\sigma_{k+1} = L_\beta$ with $\beta \neq \alpha$.
Observe also that if $\sigma_{k+1} = L_\alpha$, then $\bw_{k+1}$ begins with $\alpha^{n-1}\beta$ where $\beta = \first(\bw_k')$.}

{If $\sigma_{k+1} = L_\alpha$ and $\alpha \neq \beta$, 
let $\bw_{k+1}'$ be the
word such that $\bw_{k+1} = \alpha^{n-1}\bw_{k+1}'$. 
Set $\sigma_{k+1}' = R_\alpha$. From $\alpha^n\bw_k'=$ $\bw_k =$ $L_\alpha(\bw_{k+1}) =$ 
$L_\alpha(\alpha^{n-1}\bw_{k+1}') =$
$\alpha^{n-1}L_\alpha(\bw_{k+1}')$,
we get $\alpha\bw_k' = L_\alpha(\bw_{k+1}')$ and so $\bw_k' =$ $R_\alpha(\bw_{k+1}) =$ $\sigma_{k+1}'(\bw_{k+1}')$.}

{If $\sigma_{k+1} = L_\alpha$ and $\alpha = \beta$,
let $\bw_{k+1}'$ be the word such that $\bw_{k+1} = \alpha^n \bw_{k+1}'$.
Set $\sigma_{k+1}'  = L_\alpha$.
From $\alpha^n \bw_k' =$ $\bw_k =$ $L_\alpha(\bw_{k+1}) =$ $L_\alpha(\alpha^n\bw_{k+1}') = \alpha^n L_\alpha(\bw_{k+1}')$,
we get $\bw_k' = L_\alpha(\bw_{k+1}')=$ $\sigma_{k+1}'(\bw_{k+1}')$.}

{If $\sigma_{k+1} = R_\beta$ for some letter $\beta\neq \alpha$,
from $\alpha^n\bw_k' = \bw_k = R_\beta(\bw_{k+1})$,
we deduce successively that $\first(\bw_{k+1}) = \alpha$, $\bw_k$ begins with $\alpha\beta$ and $n = 1$.
Let $\bw_{k+1}'$ be the word such that $\bw_{k+1} = \alpha \bw_{k+1}'$.
Set $\sigma_{k+1}'  = L_\beta$.
From $\alpha \bw_k' =$ $R_\beta(\alpha\bw_{k+1}')
=$ $\alpha \beta R_\beta(\bw_{k+1}')$, we get
$\bw_k' =$ $\beta R_\beta(\bw_{k+1}') =$ $L_\beta(\bw_{k+1}')=$ 
$\sigma_{k+1}'(\bw_{k+1}')$.
}

{If $\sigma_{k+1} = R_\alpha$,
from $\alpha^n\bw_k' = \bw_k = R_\alpha(\bw_{k+1})$,
we deduce that $\bw_{k+1}$ begins with $\alpha^n\beta$ with $\beta$ the first letter of $\bw_k'$.
If $\alpha \neq \beta$, 
let $\bw_{k+1}'$ be the word such that $\bw_{k+1} = \alpha^n \bw_{k+1}'$.
Set $\sigma_{k+1}'  = R_\alpha$.
From $\alpha^n \bw_k' =$ $R_\alpha(\alpha^n\bw_{k+1}')
=$ $\alpha^n R_\alpha(\bw_{k+1}')$, we get
$\bw_k' =$ $R_\alpha(\bw_{k+1}') =$ $\sigma_{k+1}'(\bw_{k+1}')$.
If $\bw_{k+1}$ begins with $\alpha^{n+1}$,
let $\bw_{k+1}'$ be the word such that $\bw_{k+1} = \alpha^{n+1} \bw_{k+1}'$.
Set $\sigma_{k+1}'  = L_\alpha$.
From $\alpha^n \bw_k' =$ $R_\alpha(\alpha^{n+1}\bw_{k+1}')
=$ $\alpha^n \alpha R_\alpha(\bw_{k+1}')$, we get
$\bw_k' =$ $\alpha R_\alpha(\bw_{k+1}') =$ 
$L_\alpha(\bw_{k+1}')=$
$\sigma_{k+1}'(\bw_{k+1}')$.
}

{Observe that, in all cases, we have by construction $\sigma_{k+1}' \neq R_{\first(w_{k}')}$.}
\end{proof}

\begin{lemma}
\label{L:recurrence}
Let $\bw$ be an infinite word and let $\alpha$ be a letter.
\begin{itemize}
\item If $L_\alpha(\bw)$ is recurrent then $\bw$  is recurrent.
\item If $R_\beta(\alpha \bw)$ is recurrent then $\alpha \bw$ is recurrent for any letter $\beta \neq \alpha$.
\end{itemize}
\end{lemma}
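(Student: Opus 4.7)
The plan is to handle both parts by the same strategy: for each factor $u$ of the target word, find a slightly enlarged factor $u'$ whose image under $L_\alpha$ (resp.\ $R_\beta$) is forced to occur \emph{block-aligned} inside the recurrent image word, and then transfer recurrence back through the morphism by unique decoding. The key structural observation is that both morphisms have a distinguished letter ($\alpha$ for $L_\alpha$, $\beta$ for $R_\beta$) whose occurrences in the image word mark block boundaries: in $L_\alpha(\bw)$ the positions carrying $\alpha$ are exactly the starts of images of letters of $\bw$ (since every image starts with $\alpha$ and no image contains another $\alpha$), and symmetrically in $R_\beta(\alpha\bw)$ the positions carrying $\beta$ are exactly the ends of letter-images. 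In particular $L_\alpha$ and $R_\beta$ are both injective on $A^+$, and once one has identified a block boundary, decoding an image factor letter by letter is unambiguous.

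For the first part, let $u$ be a nonempty factor of $\bw$. Choose a letter $\gamma$ such that $u\gamma$ is still a factor of $\bw$ (possible since $\bw$ is infinite). Then $L_\alpha(u)\alpha$, which is a prefix of $L_\alpha(u\gamma)$, is a factor of $L_\alpha(\bw)$. Its first and last letters are both $\alpha$, so by the boundary observation any occurrence of $L_\alpha(u)\alpha$ in $L_\alpha(\bw)$ starts at a block boundary and has its terminal $\alpha$ at a block boundary. By recurrence of $L_\alpha(\bw)$, infinitely many such occurrences exist, and each decodes to a distinct occurrence of $u$ in $\bw$ by injectivity of $L_\alpha$.

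For the second part, let $u$ be a nonempty factor of $\alpha\bw$. This time I need a left extension $u'$ of $u$ whose first letter is not $\beta$, because it is the start of the image that needs to be forced onto a boundary (the end is automatically a boundary since $R_\beta(u')$ ends in $\beta$). Such a $u'$ always exists: the word $\alpha\bw$ begins with $\alpha \neq \beta$, so looking at any occurrence of $u$ and scanning leftwards one must meet a non-$\beta$ letter before reaching position $0$; taking the shortest left extension of $u$ beginning with that letter yields $u' = \gamma\beta^k u$ with $\gamma \neq \beta$ and $k \geq 0$. Then $R_\beta(u')$ is a factor of $R_\beta(\alpha\bw)$ starting with $\gamma$ (a block start, because every non-$\beta$ letter in $R_\beta(\alpha\bw)$ begins an image) and ending with $\beta$ (a block end). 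Recurrence of $R_\beta(\alpha\bw)$ gives infinitely many block-aligned occurrences of $R_\beta(u')$; each decodes, by injectivity of $R_\beta$, to an occurrence of $u'$ and hence of $u$ in $\alpha\bw$.

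The only real subtlety I anticipate is the left-extension argument in the second part when $u$ itself begins with $\beta$: one must use the hypothesis $\beta \neq \alpha$ to guarantee that position $0$ of $\alpha\bw$ is non-$\beta$, which is exactly what stops the left-scan from failing. Everything else reduces to the two routine facts above (location of $\alpha$'s / $\beta$'s in the image word, and injectivity of the morphism), so no lengthy computation is needed.
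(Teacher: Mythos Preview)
Your proof is correct. For the first item your argument is essentially the paper's: both pivot on the factor $L_\alpha(u)\alpha$ and on the fact that every occurrence of $\alpha$ in $L_\alpha(\bw)$ marks a block boundary; the paper just phrases it by contradiction.

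For the second item your route genuinely differs. The paper exploits the identity $R_\beta(\alpha\bw)=\alpha\beta R_\beta(\bw)=\alpha L_\beta(\bw)$ (valid because $\alpha\neq\beta$): recurrence of $R_\beta(\alpha\bw)$ then yields recurrence of $L_\beta(\bw)$, hence of $\bw$ by the first item, so any factor of $\alpha\bw$ occurring only once must be a prefix $\alpha v$; one then transfers a non-prefix occurrence of $R_\beta(\alpha v)$ back. You avoid this reduction entirely by left-extending an arbitrary factor $u$ to some $u'$ whose first letter is not $\beta$ (using that position $0$ of $\alpha\bw$ carries $\alpha\neq\beta$) and running the block-alignment argument directly. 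The paper's trick is shorter and reuses the first item; your argument is more self-contained and treats the two morphisms symmetrically. One small wrinkle: writing $u'=\gamma\beta^{k}u$ tacitly assumes a strict left extension, but when $u$ already begins with a non-$\beta$ letter (in particular when $u$ is a prefix of $\alpha\bw$) you should simply take $u'=u$.
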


\begin{proof}
Assume by contradiction that $\bw$ is not recurrent.
It has a factor $u$ that occurs only once.
Let $v = L_\alpha(u)\alpha$. Any occurrence of $v$  in $L_\alpha(\bw)$ must come from an occurrence of $u$ in $\bw$.
This contradicts the recurrence property of $L_\alpha(\bw)$.

Assume by contradiction that $\alpha\bw$ is not recurrent.
{As $\alpha \neq \beta$, we have $R_\beta(\alpha \bw) =$ $\alpha\beta R_\beta(\bw) =$ $\alpha L_\beta(\bw)$. As} $R_\beta(\alpha \bw)$ is recurrent, $L_\beta(\bw)$ is recurrent and{,} from what precedes{,} $\bw$ is also recurrent. Hence $\alpha\bw$ has a prefix $v$ such that $\alpha v$ occurs only once in $\alpha \bw$.
As $R_\beta(\alpha \bw)$ is recurrent, 
the word $R_\beta(\alpha v)$ occurs at a non-prefix position in $R_\beta(\alpha \bw)$.
This implies that $\alpha v$ also occurs at a non-prefix position in $\alpha \bw$. A contradiction.
\end{proof}

\begin{proof}[Proof of Proposition~\ref{P:another_characterization_of_episturmian_words}]
Assume that $\bw$ is a recurrent element of $\stab({\cal L} \cup {\cal R})$.
There exist
sequences $(\bw_n)_{n \geq 0}$ and $(\sigma_n)_{n \geq 1}$ 
as in the statement of Lemma~\ref{L:normalization}.
By induction, using Lemma~\ref{L:recurrence}
we get the recurrence of all words $\bw_n$.
By Theorem~\ref{T:charac_episturmien_JP}, $\bw$ is episturmian.

The converse also follows  immediately from Theorem~\ref{T:charac_episturmien_JP}.
\end{proof}

One can observe that while desubstituting balanced words in the proof of 
Proposition~\ref{P:characterising balanced words},
we get directly {a sequence of desubstituted words and a directive sequence} as in Lemma~\ref{L:normalization}.

\subsection{Characterizing the set of episturmian words as a stable set}

Let ${\cal E}$ be the set of episturmian words.
Proposition~\ref{P:stab_epi} provides characterizations of ${\cal E}$ 
as a stable set and as a set of adic words. 
We first need a technical result.

\begin{lemma}
\label{L:form_direct_episturmian}
Any episturmian word, considered as an element of $\stab({\cal L}\cup{\cal R})$, has a directive sequence 
containing infinitely many occurrences of elements of ${\cal L}$.
In other words, any episturmian word belongs to $\stab({\cal R}^*{\cal L})$.
\end{lemma}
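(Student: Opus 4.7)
My plan is to prove the lemma by contradiction, exploiting the recurrence of episturmian words and a gap-counting argument. Given an episturmian word $\bw$, I would first invoke Proposition~\ref{P:another_characterization_of_episturmian_words} to view $\bw$ as a recurrent element of $\stab({\cal L}\cup{\cal R})$, and then apply Lemma~\ref{L:normalization} to pick a desubstituted sequence $(\bw_n)_{n\ge 0}$ and a directive sequence $(\sigma_n)_{n\ge 1}\in({\cal L}\cup{\cal R})^{\omega}$ in which no $\sigma_{n+1}$ equals $R_{\first(\bw_n)}$. Combining Lemma~\ref{L:recurrence} with this normalization property would yield, by induction on $n$, that every $\bw_n$ is recurrent: the condition $\sigma_{n+1}\neq R_{\first(\bw_n)}$, together with the fact that each $R_\beta$ preserves the first letter, is precisely what allows applying the second part of Lemma~\ref{L:recurrence} in the $\sigma_{n+1}\in {\cal R}$ case.

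Next, I would assume for contradiction that only finitely many $\sigma_n$ lie in ${\cal L}$, and pick $N$ such that $\sigma_n\in{\cal R}$ for all $n>N$. Since every $R_\beta$ preserves the first letter, $\first(\bw_n)$ equals a common letter $a$ for all $n\ge N$, and the normalization then forces $\sigma_n=R_{\alpha_n}$ with $\alpha_n\neq a$ for each $n>N$. Because $R_{\alpha_n}(a)=a\alpha_n$ and $R_{\alpha_n}(\beta)$ contains no $a$ when $\beta\neq a$ (here $\alpha_n\neq a$), the occurrences of $a$ in $\bw_n$ and in $\bw_{n-1}$ are in order-preserving bijection; recurrence of $\bw_N$ then guarantees infinitely many $a$'s in every $\bw_n$ for $n\ge N$.

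The heart of the argument is a gap-counting computation. For $n\ge N$, I would let $d_j^{(n)}$ denote the number of letters strictly between the $j$-th and $(j+1)$-th occurrence of $a$ in $\bw_n$. Tracing the images under $R_{\alpha_n}$ of the interior letters (all non-$a$, hence with images of length at least one) together with the extra $\alpha_n$ inserted right after each $a$, I would obtain $d_j^{(n-1)}\ge 1+d_j^{(n)}$, so that $d_j^{(n)}\le d_j^{(n-1)}-1$. Iterating gives $d_j^{(N+k)}\le d_j^{(N)}-k$ for all $k\ge 0$, which becomes strictly negative once $k$ exceeds $d_j^{(N)}$, contradicting $d_j^{(N+k)}\ge 0$. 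Hence $(\sigma_n)_{n\ge 1}$ must contain infinitely many elements of ${\cal L}$, which means $\bw\in\stab({\cal R}^*{\cal L})$.

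I expect the main obstacle to be the combinatorial bookkeeping: verifying the order-preserving bijection on occurrences of $a$ between consecutive $\bw_{n-1}$ and $\bw_n$, and making sure that Lemma~\ref{L:normalization} really rules out $R_a$ throughout the tail of the sequence, so that every interior letter is non-$a$ and the strict inequality $d_j^{(n-1)}\ge 1+d_j^{(n)}$ holds at every desubstitution step. Once these details are in place, the forced strict decrease of $d_j^{(n)}$ delivers the contradiction cleanly.
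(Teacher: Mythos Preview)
Your proof is correct and shares the paper's setup almost exactly: both invoke Lemma~\ref{L:normalization} to obtain a normalized directive sequence, observe that each $R_\beta$ preserves the first letter so that $\first(\bw_n)=a$ for all $n\ge N$, and deduce from the normalization condition that $\sigma_{n+1}\neq R_a$ throughout the tail. The difference lies in how the contradiction is extracted and in the direction recurrence is propagated. The paper argues directly that in $\bw_\ell$ the letter $\alpha$ occurs only once (since $\sigma_\ell\cdots\sigma_k(\alpha)$ is a prefix of $\bw_\ell$ of length tending to infinity in which $\alpha$ appears only at position one), concludes that $\bw_\ell$ is not recurrent, and then pushes non-recurrence \emph{backward} via Lemma~\ref{L:recurrence} to contradict the episturmianity of $\bw_0$. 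You instead push recurrence \emph{forward} to every $\bw_n$ first, then run a gap-counting argument showing $d_j^{(n)}$ strictly decreases with $n$, which is impossible for a nonnegative integer sequence. Your gap inequality $d_j^{(n-1)}\ge 1+d_j^{(n)}$ is precisely the quantitative version of the paper's qualitative ``only one $\alpha$'' observation; the two arguments are dual formulations of the same combinatorial fact, with the paper's version being marginally shorter.
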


\begin{proof}
Let $\bw$ in $\stab({\cal L}\cup{\cal R})$
and let $(\bw_n)_{n \geq 0}$ and $(\sigma_n)_{n \geq 1}$ as in Lemma~\ref{L:normalization}.
Observe that, for any infinite word $\bu$
and any letters $\alpha$, $\beta$,
the word $R_\beta(\bu)$ begins with $\alpha$
if and only if $\bu$ begins with $\alpha$.

Assume that $(\sigma_n)_{n \geq \ell+1} \in {\cal R}^\omega$ for some integer $\ell$.
Let $\alpha$ be the first letter of $\bw_\ell$.
{From the previous observation, it follows that $\alpha$ is the first letter of $\bw_k$ for all $k \geq \ell$.
The hypothesis
``if $\bw_k$ begins with the letter $\alpha$, then $\sigma_{k+1} \neq R_\alpha$",
implies that $\sigma_k \neq R_\alpha$ for all $k \geq \ell+1$.
Hence, for any $k \geq \ell$, $\alpha$ occurs only as a prefix of the words $\sigma_\ell\cdots\sigma_k(\alpha)$ whose lengths grow to infinity.
As these words $\sigma_\ell \cdots \sigma_k(\alpha)$ are prefixes of $\bw_\ell$,}
$\alpha$ has only one occurrence in $\bw_\ell$ and the word $\bw_\ell$ is not recurrent.
Using Lemma~\ref{L:recurrence}
and the previously recalled hypothesis,
by inverse induction,
we can show that $\bw_i$ is also not recurrent for every $i$, $\ell \geq i \geq 0$.
Hence, by Proposition~\ref{P:another_characterization_of_episturmian_words},
$\bw_0$ is not episturmian. 
Thus, 
for all $\ell \geq 1$, there exists $k \geq \ell$ such that $\sigma_k \in {\cal L}$, {that is, $\bw \in \stab({\cal R}^*{\cal L})$}.
\end{proof}

\begin{proposition}
\label{P:stab_epi}
{Consider an alphabet containing at least two letters.}
The set ${\cal E}$ of episturmian words is the set $\stab({\cal R}^*{\cal L})$ and 
it is also the set
of all {${\cal R}^*{\cal L}$}-adic words.
\end{proposition}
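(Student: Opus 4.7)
The plan is to show that the three sets ${\cal E}$, $\stab({\cal R}^*{\cal L})$, and the set of ${\cal R}^*{\cal L}$-adic words all coincide. The inclusions ${\cal E} \subseteq \stab({\cal R}^*{\cal L})$ and $\{{\cal R}^*{\cal L}\text{-adic words}\} \subseteq \stab({\cal R}^*{\cal L})$ are already at hand, from Lemma~\ref{L:form_direct_episturmian} and Proposition~\ref{S-adic=>stable} respectively. The main task is therefore to prove $\stab({\cal R}^*{\cal L}) \subseteq {\cal E}$ together with $\stab({\cal R}^*{\cal L}) \subseteq \{{\cal R}^*{\cal L}\text{-adic}\}$, which I will do simultaneously by a case analysis.

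Fix $\bw \in \stab({\cal R}^*{\cal L})$ with directive sequence $(\sigma_n)_{n\geq 1}$, where $\sigma_n = g_n L_{\alpha_n}$ with $g_n \in {\cal R}^*$ and $\alpha_n \in A$, and desubstituted words $(\bw_n)_{n\geq 0}$. Set $h_n = \sigma_1\circ\cdots\circ\sigma_n$. The key observation is that $L_{\alpha_n}(\beta)$ always starts with $\alpha_n$, so $\sigma_n(\beta)$ starts with $g_n(\alpha_n) = \sigma_n(\alpha_n)$ for every letter $\beta$; applying $h_{n-1}$, each $h_n(\beta)$ starts with $h_n(\alpha_n)$. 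Consequently $h_n(\alpha_n)$ is a prefix of $\bw$ occurring at every block boundary of the factorization $\bw = h_n(\bw_n[0])h_n(\bw_n[1])\cdots$, hence at infinitely many positions. Moreover $(h_n(\alpha_n))_n$ is increasing in the prefix order, because $h_{n+1}(\alpha_{n+1}) = h_n(g_{n+1}(\alpha_{n+1}))$ starts with $h_n(\alpha_{n+1})$, which in turn starts with $h_n(\alpha_n)$.

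If $|h_n(\alpha_n)|$ is unbounded, every finite prefix of $\bw$ is a prefix of some $h_n(\alpha_n)$, and therefore appears at infinitely many positions of $\bw$; this yields both that $\bw$ is recurrent (every factor sits inside a prefix) and that $\bw = \lim_{n\to\infty} h_n(\alpha_n)$ witnesses ${\cal R}^*{\cal L}$-adicity. Since decomposing each $\sigma_n$ over ${\cal L}\cup{\cal R}$ gives $\bw \in \stab({\cal L}\cup{\cal R})$, Proposition~\ref{P:another_characterization_of_episturmian_words} concludes $\bw \in {\cal E}$.

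The main obstacle is the bounded case. Here $(h_n(\alpha_n))_n$ stabilizes at some prefix $w_0$ of $\bw$. Each $h_n$ is injective (every $L_\alpha$ and $R_\alpha$ is an injective morphism, recognizable from its image by a greedy left-to-right parse), so $h_{n-1}(\alpha_{n-1}) = h_{n-1}(g_n(\alpha_n)) = w_0$ forces $g_n(\alpha_n) = \alpha_{n-1}$, a single letter. Since $R_\gamma(\alpha_n)$ has length $\geq 2$ whenever $\gamma \neq \alpha_n$, this implies $g_n \in \{R_{\alpha_n}\}^*$ and $\alpha_n = \alpha_{n-1}$; so past some $N$ the letter $\alpha_n = \alpha$ is constant and $\sigma_n \in \{R_\alpha\}^*\{L_\alpha\}$, for which $\sigma_n(\alpha) = \alpha$ and $\sigma_n(\beta) = \alpha\beta\alpha^{k_n}$ for $\beta \neq \alpha$. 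A position-tracking argument, using that every $\bw_n$ then starts with $\alpha$, shows that any non-$\alpha$ occurrence in $\bw_N$ at position $p$ would force $\bw_{N+p}$ to begin with a non-$\alpha$ letter, a contradiction; hence $\bw_N = \alpha^\omega$ and $\bw = h_N(\alpha^\omega)$. Writing $\alpha^\omega = \lim_{n\to\infty} L_\alpha^n(\beta)$ for any $\beta \neq \alpha$ (using $\#A \geq 2$) exhibits an ${\cal R}^*{\cal L}$-adic directive sequence $\sigma_1,\ldots,\sigma_N,L_\alpha,L_\alpha,\ldots$ for $\bw$; and $\bw = h_N(\alpha^\omega)$ is recurrent as the image of the recurrent word $\alpha^\omega$ under a nonerasing morphism, so combining with $\bw \in \stab({\cal L}\cup{\cal R})$ and Proposition~\ref{P:another_characterization_of_episturmian_words} places $\bw$ in ${\cal E}$ as well.
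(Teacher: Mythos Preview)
Your proof is correct and the ideas are closely related to the paper's, but the organization is genuinely different. The paper establishes the cyclic chain ${\cal E} \subseteq \stab({\cal R}^*{\cal L}) \subseteq \{{\cal R}^*{\cal L}\text{-adic}\} \subseteq {\cal E}$: first it shows that any element of $\stab({\cal R}^*{\cal L})$ is ${\cal R}^*{\cal L}$-adic (splitting on whether some $\sigma_n \in R_a^*L_a$ for arbitrarily large $n$, in which case $\bw = f(a^\omega)$, versus the case where lengths $|\sigma_{n+1}\cdots\sigma_m(\first(\bw_m))|$ grow), and then it shows separately that any ${\cal R}^*{\cal L}$-adic word is recurrent and hence episturmian via Proposition~\ref{P:another_characterization_of_episturmian_words}. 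You instead take ${\cal E} \subseteq \stab$ and $\{\text{adic}\} \subseteq \stab$ as the easy directions and prove $\stab({\cal R}^*{\cal L}) \subseteq {\cal E} \cap \{\text{adic}\}$ in one shot, using the monotone prefix sequence $(h_n(\alpha_n))_n$ as the governing object: unbounded growth of $|h_n(\alpha_n)|$ simultaneously yields recurrence (each $h_n(\alpha_n)$ recurs at every block boundary) and adicity ($\bw = \lim h_n(\alpha_n)$), while bounded growth forces, via injectivity of $h_{n-1}$, that eventually $\sigma_n \in R_\alpha^*L_\alpha$ for a fixed $\alpha$ and then $\bw_N = \alpha^\omega$. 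Your device makes the recurrence argument more transparent and avoids the second pass over the adic case; the paper's route has the advantage of isolating the implication $\{\text{adic}\} \Rightarrow {\cal E}$ as a standalone fact.
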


\begin{proof}
From {Theorem~\protect\ref{T:charac_episturmien_JP}} and Lemma~\ref{L:form_direct_episturmian}, 
${\cal E} \subseteq \stab({\cal R}^*{\cal L})$.

Assume now that $\bw \in \stab({\cal R}^*{\cal L})$.
Let $(\sigma_n)_{n \geq 1}$ be a directive sequence of $\bw$ ($\sigma_n \in {\cal R}^*{\cal L}$ for all $n \geq 1$).
If there exists a letter $a$ such that $\sigma_n \in R_a^*L_a$ for arbitrarily large $n$,
then,
$\bw = f(a^\omega)$ for some $f$ in {$({\cal R}^*{\cal L})^*$}.
{As there exists a letter $b \neq a$, $a^\omega = \lim_{n \to \infty} L_a^n(b)$: $\bw$ is  ${\cal R}^*{\cal L}$-adic}.
Otherwise,
let $(\bw_n)_{n \geq 0}$ be a sequence of desubstituted words of $\bw$ using $(\sigma_n)_{n \geq 1}$.
For any integer $n \geq 0$,
there exists $m$, $m \geq n$,
such that $|\sigma_{n+1}\cdots \sigma_m(\first(\bw_m))| \geq 2$.
Hence $\bw$ is $({\cal R}^*{\cal L})$-adic.

Assume now that $\bw$ is {${\cal R}^*{\cal L}$-adic}. 
{A first case is that}
$\bw = f(a^\omega)$ for some letter $a$ and some $f$ in $({\cal L}\cup {\cal R})^*$.
{In this case} $\bw$ is recurrent as any periodic word is.
Moreover it belongs to $\stab({\cal L}\cup{\cal R})$ 
with  $fL_a^\omega$ as a directive sequence.
By Proposition~\ref{P:another_characterization_of_episturmian_words}, $\bw$ is episturmian.

{Assume now that the first case does not hold (but still $\bw$ is ${\cal R}^*{\cal L}$-adic). 
By Proposition~{\protect\ref{S-adic=>stable}},
$\bw \in \stab({\cal R}^*{\cal L})$.
Let $\bs = (\sigma_n)_{n \geq 1}$ be the directive sequence of $\bw$ as an element of this stable set
(for all $n \geq 0$, $\sigma_n$ belongs to $\stab({\cal R}^*{\cal L}))$
and let $(\bw_n)_{n \geq 0}$ be the sequence of desubstituted words.
Observe that there cannot exist a letter $\alpha$ and an integer $N$
such that, for all $n \geq N$,  $\sigma_n \in (R_\alpha^*L_\alpha)$.
Indeed, for any infinite word $\bu$ and any element $f$ of $(R_\alpha^*L_\alpha)^k$, $f(\bu)$ begins with $\alpha^k$.
Hence the existence of $\alpha$ and $N$ would implies that $\bw_N = \alpha^\omega$ and that we are back in the first case.
From the previous observation, we deduce that,}
for all $n \geq 0$,
there exist an integer $m \geq 1$ 
and a letter $b \neq \first(\bw_n)$ such that
 $|\sigma_{n+1}\cdots \sigma_m(\first(\bw_m))|_b \neq 0$.
 This implies that, for all $n \geq 0$,
 $\first(\bw_n)$ has at least two occurrences in $\bw_n$.
 Observe that any factor of $\bw$ is a factor of a prefix of $\bw$
 and so a factor of $\sigma_1\cdots \sigma_n(\first(\bw_n))$ for some integer $n$.
 Hence any factor of $\bw$ occurs at least twice in $\bw$: $\bw$ is recurrent.
 By Proposition~\ref{P:another_characterization_of_episturmian_words}, $\bw$ is episturmian.

\end{proof}

As ${\cal R}^*{\cal L}$ is infinite, the next proposition answers a natural question.

\begin{proposition}
\label{P:stab_epi_pas_fini}
{Assume that the alphabet $A$ contains at least two letters.}
There is no finite set $\cS \subseteq \subst(A)$
such that the set ${\cal E}$ of episturmian words is $\stab(\cS)$.
\end{proposition}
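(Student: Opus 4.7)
The plan is to adapt the argument used for Proposition~\ref{P:stable_standard_episturmian} to the setting of general episturmian words. Assume for contradiction that ${\cal E} = \stab(\cS)$ for some finite $\cS \subseteq \subst(A)$. By Remark~\ref{Rem:preserve} and the characterization of episturmian-preserving morphisms announced in the abstract (and expected to be established earlier in this section), $\cS$ must be a finite subset of $({\cal L} \cup {\cal R} \cup \exch)^*$. Using Equations~\eqref{eq1}--\eqref{eq4} and the fact that any permutation in $\exch^*$ conjugates ${\cal L}$ to ${\cal L}$ and ${\cal R}$ to ${\cal R}$, each $\sigma \in \cS$ can be rewritten canonically as $\sigma = f\pi$ with $f \in ({\cal L} \cup {\cal R})^*$ and $\pi \in \exch^*$. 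Let $M_1$ be the maximum, over $\sigma \in \cS$, of $|f|$ measured as a word over ${\cal L} \cup {\cal R}$, and set $M = (\#A!+1)M_1$.

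The technical heart of the proof is the fact, proved by induction on $i$: if $f \in ({\cal L} \cup {\cal R})^i$ and $f(\bu)$ contains the factor $ba^{i+2}b$ for some infinite word $\bu$, then $f \in \{L_a, R_a\}^i$. The base case is trivial. For the inductive step, one writes $f = gh$ with $g \in {\cal L} \cup {\cal R}$ and $h \in ({\cal L} \cup {\cal R})^{i-1}$; since $L_\alpha$ and $R_\alpha$ with $\alpha \neq a$ produce no factor $aa$, we must have $g \in \{L_a, R_a\}$, and a direct check that $L_a(ba^kb) \supset ba^{k+1}b$ and $R_a(ba^kb) \supset ba^{k+1}b$ (and conversely, preimages shrink the $a$-run by exactly one) forces $h(\bu)$ to contain $ba^{i+1}b$, allowing the inductive hypothesis to apply.

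Next, I would choose an episturmian word $\bw$ containing $ba^{M+2}b$ (any Sturmian word whose directive sequence starts with $L_a^{M+2}$ works) and a directive sequence $(\sigma_n)_{n \geq 1} \in \cS^\omega$ of $\bw$. Commuting permutations to the right via \eqref{eq1}--\eqref{eq4} gives, for each $n \leq \#A!+1$, a decomposition $\sigma_1 \cdots \sigma_n = F_n \Pi_n$ with $F_n \in ({\cal L} \cup {\cal R})^*$ of length at most $nM_1 \leq M$ and $\Pi_n \in \exch^*$. Since $\bw = F_n(\Pi_n(\bw_n))$ contains $ba^{M+2}b$, the technical fact yields $F_n \in \{L_a, R_a\}^{|F_n|}$. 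By pigeonhole on the finite set $\exch^*$ of cardinality $\#A!$, there exist $1 \leq i < j \leq \#A!+1$ with $\Pi_i = \Pi_j = \Pi$; writing $F_j = F_iG_{ij}$ with $G_{ij} \in \{L_a, R_a\}^+$, one obtains $\sigma_{i+1}\cdots\sigma_j = \Pi^{-1}G_{ij}\Pi \in \{L_c, R_c\}^+$ where $c = \Pi^{-1}(a)$.

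It remains to derive a contradiction, and this will be the main obstacle, since the naive contradiction from Proposition~\ref{P:stable_standard_episturmian} (using $c^\omega \notin$~strict~standard~episturmian) no longer applies: $c^\omega$ is itself episturmian. The plan is to exploit the fact that $\stab(\{L_c, R_c\})$ contains not only $c^\omega$ but also the non-episturmian words $c^n d c^\omega$ (with $d \neq c$, $n \geq 0$): these are fixed points of $R_c$, and their factor sets are not closed under reversal since $dc^k$ is a factor while $c^k d$ is not for $k > n$. If $h = \sigma_{i+1}\cdots\sigma_j$ contains no $L_c$, then $h \in \{R_c\}^+$ fixes every $c^n d c^\omega$, and a limit point of $\bs' = (\sigma_{i+1}, \ldots, \sigma_j)^\omega$ of this form lies in $\stab(\cS) \setminus {\cal E}$, the desired contradiction. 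The remaining subcase, in which $h$ contains at least one $L_c$, appears to require either refining the choice of $\bw$ to include an asymmetric factor that distinguishes $L_a$ from $R_a$, or iterating the pigeonhole across further blocks of the directive sequence to force the pure $R_c$ situation; completing this subcase is the part I expect to be the most delicate.
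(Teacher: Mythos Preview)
Your proposal follows the paper's overall strategy closely, but the subcase you flag as ``most delicate'' is a genuine gap, and it is exactly the point where the paper's argument differs from yours.

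The paper's key fact is not your factor condition but a \emph{prefix} condition: if $f\in({\cal L}\cup{\cal R})^i$ with $i\le j$ and $f(\bu)$ \emph{begins with} $ab^{j+2}a$, then $f=R_b^i$. The leading letter $a\neq b$ is what does the extra work: $L_b(\cdot)$ always begins with $b$, so $L_b$ is excluded at every step, and since $L_\alpha,R_\alpha$ with $\alpha\neq b$ create no factor $bb$, only $R_b$ survives. Consequently the paper chooses an episturmian $\bw$ with \emph{prefix} $ab^{M+2}a$ (rather than merely a factor $ba^{M+2}b$), obtains $F_n=R_b^{|F_n|}$ exactly, and after the pigeonhole gets $\sigma_{i+1}\cdots\sigma_j=R_c^{\ell_j-\ell_i}$ for some letter $c$. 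Then $dc^\omega$ (any $d\neq c$) lies in $\stab(\cS)$ but is not recurrent, hence not episturmian --- contradiction, with no residual subcase.

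Your weaker factor condition only yields $F_n\in\{L_a,R_a\}^{|F_n|}$, and the mixed case is not a mere technicality: if $h=\sigma_{i+1}\cdots\sigma_j$ contains at least one $L_c$, then one checks that $\stab(\{h\})=\{c^\omega\}$ (desubstituting $c^n d c^\omega$ by $h$ strictly decreases $n$, so the process terminates), which is episturmian and yields no contradiction. So this subcase cannot be closed along the lines you sketch; the fix is precisely the prefix trick above. Two smaller points: the episturmian-preserving endomorphisms (Proposition~\ref{P:preserving_episturmian}) form $({\cal L}\cup{\cal R}\cup\exch\cup{\cal P})^*$, not $({\cal L}\cup{\cal R}\cup\exch)^*$; the paper disposes of ${\cal P}$ by restricting to an aperiodic $\bw$. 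Also, one must separately rule out $\cS\cap\exch^*\neq\emptyset$ (else every infinite word lies in $\stab(\cS)$).
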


For proving this result,
we need a characterization of the set of endomorphisms preserving episturmian words.

Let ${\cal P}$ be the set of all endomorphisms $f$
for which there exists a letter $a$ such that{, for all letter $b$ in $A$, $f(b) \in a^+$}.
For any infinite word $\bw$ and $f \in {\cal P}$, $f(\bw)$ is periodic: it is the word $a^\omega$ for some letter $a$.
Finally{,} let
$\cS_{\rm epi} = ({\cal L} \cup {\cal R} \cup \exch \cup {\cal P})^*$.

\begin{proposition}
\label{P:preserving_episturmian}
The set of endomorphisms preserving ${\cal E}$ is $\cS_{\rm epi}$.
\end{proposition}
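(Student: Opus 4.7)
The plan is to prove both inclusions of the equality $\cS_{\rm epi} = \{f \mid f({\cal E}) \subseteq {\cal E}\}$.

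For $\cS_{\rm epi} \subseteq \{f \mid f({\cal E}) \subseteq {\cal E}\}$, I would verify the property for each family of generators and then extend by composition. Given $\bw \in {\cal E}$, Proposition~\ref{P:stab_epi} gives $\bw \in \stab({\cal L} \cup {\cal R})$; both $L_\alpha(\bw)$ and $R_\alpha(\bw)$ remain in this stable set by stabilization, and they remain recurrent since any factor of the image lies inside the image of a recurring factor of $\bw$. Proposition~\ref{P:another_characterization_of_episturmian_words} then forces $L_\alpha(\bw), R_\alpha(\bw) \in {\cal E}$. For $E_{\alpha,\beta} \in \exch$, the defining properties of an episturmian word are invariant under letter relabelings. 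For $p \in {\cal P}$, $p(\bw) = a^\omega$ for some fixed letter $a$, which is clearly episturmian. Closure under composition finishes this direction.

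For the converse, suppose $f$ preserves ${\cal E}$. I would argue by induction on $|f| := \sum_{a \in A} |f(a)|$, allowing at most $\#A!$ left-extractions of exchange morphisms (which leave $|f|$ unchanged). The base case $|f| = \#A$ covers letter-to-letter morphisms: if $f$ is a bijection then $f \in \exch^*$; if $f$ is constant then $f \in {\cal P}$. Any other non-injective non-constant letter-to-letter $f$ is shown to fail to preserve ${\cal E}$ by evaluating on a standard episturmian word whose directive sequence over ${\cal L}$ cycles through all letters, producing in the image runs of the ``merged'' letter of incompatible lengths that violate the balance-type property forced by episturmianity on the reduced image alphabet.

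For the inductive step, assume $|f| > \#A$ and $f \notin {\cal P}$. The key observation is that for each letter $a$, $f(a^\omega) = f(a)^\omega$ must be a periodic episturmian word, which constrains $f(a)$ to lie in the orbit of a single letter under the episturmian monoid $({\cal L} \cup {\cal R} \cup \exch)^*$. Using this, I would extract a generator on the left: if all $f(a)$ with $|f(a)| \geq 2$ begin with the same letter $\gamma$ and $f(\gamma)$ also begins with $\gamma$, write $f = L_\gamma \circ g$ with $|g| < |f|$; a symmetric analysis of last letters produces an $R_\gamma$ factorization. Otherwise, precomposing with an appropriate element of $\exch$ and using the commutation relations~\eqref{eq1}--\eqref{eq4} allows a reduction to a case where such an extraction succeeds. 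In each case one must verify that the peeled-off $g$ still preserves ${\cal E}$: given $\bw \in {\cal E}$, $h(g(\bw)) = f(\bw)$ lies in ${\cal E}$, and Lemma~\ref{L:normalization} produces a normalized desubstitution of $h(g(\bw))$ whose first step is $h$, which recovers $g(\bw) \in {\cal E}$.

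The main obstacle I anticipate is the inductive step: the case analysis for extracting a generator is delicate, and rigorously showing that the remaining $g$ preserves ${\cal E}$ requires a partial-inverse argument for $L_\gamma$ and $R_\gamma$ acting on ${\cal E}$, leaning on the normalized-desubstitution machinery provided by Lemma~\ref{L:normalization}.
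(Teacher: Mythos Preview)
Your overall architecture (verify on generators for one inclusion; induct on $\sum_a |f(a)|$ and peel off an element of ${\cal L}\cup{\cal R}$ for the other) matches the paper's, but the two load-bearing steps of the inductive direction are not established by what you propose.

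First, the extraction step. To write $f = L_\gamma g$ you need \emph{every} image $f(a)$ to begin with $\gamma$; your stated condition ``all $f(a)$ with $|f(a)|\geq 2$ begin with $\gamma$ and $f(\gamma)$ begins with $\gamma$'' still allows single-letter images $f(a)=\delta\neq\gamma$, in which case no $g$ exists. More importantly, the fact that each $f(a)^\omega$ is a periodic episturmian word constrains each $f(a)$ individually, but says nothing about the \emph{joint} first-letter (or last-letter) structure of the images, which is exactly what you need. Precomposition $f\mapsto f\circ E$ only permutes the set $\{f(a):a\in A\}$ and cannot change first letters of images; postcomposition would change them but then you are extracting an exchange on the left, which does not decrease the size and does not bring you closer to an $L_\gamma$- or $R_\gamma$-extraction. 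The paper obtains this structure by a separate lemma (Lemma~\ref{L:begin_and_end}): testing $f$ on well-chosen episturmian words (containing specific short patterns) and using the ``every length-$2$ factor contains a common letter'' constraint on the episturmian image, a four-case analysis forces that all images begin with the same letter or all end with the same letter. Your sketch does not supply an argument of this kind.

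Second, the claim that the peeled-off $g$ still preserves ${\cal E}$. You invoke Lemma~\ref{L:normalization}, asserting it yields a desubstitution of $f(\bw)=h(g(\bw))$ whose first step is $h$. But Lemma~\ref{L:normalization} only guarantees that the normalized directive sequence avoids $R_\alpha$ when the current word begins with $\alpha$; it does not pin down the first morphism. Concretely, $L_\gamma(\bu)=R_\gamma(\bv)$ can hold for distinct $\bu,\bv$, and even if some desubstitution of $f(\bw)$ by an element of ${\cal L}\cup{\cal R}$ lands in ${\cal E}$, there is no a priori reason that the $L_\gamma$-preimage $g(\bw)$ is the episturmian one. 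The paper handles this with a direct lemma (Lemma~\ref{L:reduce_preserve}): assuming $g(\bw)\notin{\cal E}$, one exhibits either two left special factors of equal length or a failure of reversal-closure in $g(\bw)$, and then checks by hand that applying $L_\alpha$ or $R_\alpha$ preserves the defect (using recurrence of $g(\bw)$ to pass to non-prefix occurrences). This is the ``partial-inverse'' argument you anticipate needing, and it is not a consequence of the normalization lemma.
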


\begin{proof}
Let us first verify that any element of $\cS_{\rm epi}$ preserves ${\cal E}$.
It is sufficient to consider elements of ${\cal L} \cup {\cal R} \cup \exch \cup {\cal P}$.
As for any letter $\alpha$ the word $\alpha^\omega$ is episturmian,
this holds for elements of ${\cal P}$.
Changing the alphabet of a word does not change the property of being episturmian. 
Hence this holds also for elements of $\exch$.
Finally{,} as ${\cal E} = \stab({\cal R}^*{\cal L})$ {by Proposition~{\protect\ref{P:stab_epi}}}, elements of $({\cal L}\cup{\cal R})$
also preserve ${\cal E}$.

To prove {that any endomorphisms preserving ${\cal E}$ is an element of $\cS_{\rm epi}$,} we need the next two lemmas.

\begin{lemma}
\label{L:reduce_preserve}
{Let $f$, $g$ in $\subst(A)$ and $\alpha \in A$ such that $f = L_\alpha g$ or $f = R_\alpha g$.}
{We have:}
$f$ preserves ${\cal E}$
if and only if
$g$ preserves ${\cal E}$.
\end{lemma}

\begin{proof}
We have just seen that elements of ${\cal L}\cup{\cal R}$ preserve ${\cal E}$.
Hence if $g$ preserves ${\cal E}$, $f$ also preserves ${\cal E}$.

From now on,
 assume that $f$ preserves ${\cal E}$ but not $g$.
This means that there exists $\bw$ in ${\cal E}$
such that $g(\bw) \not\in {\cal E}$ and $f(\bw) \in {\cal E}$.
Observe that $g(\bw)$ is recurrent as $\bw$ is recurrent.
By definition of episturmian words,
$g(\bw)$ has two left special factors of the same length
or its set of factors is not closed by reversal.

First assume the existence of two different left special factors.
Considering them of minimal length
we may assume that these words are $ua_1$ and $ua_2$ for a word $u$ and different letters $a_1$ and $a_2$.
There exist letters $b_1$, $b_2$, $c_1$, $c_2$ with $b_1 \neq b_2$, $c_1 \neq c_2$
such that $b_1 u a_1$, $b_2 u a_1$, $c_1 u a_2$, $c_2 u a_2$ 
are factors of $g(\bw)$.
As $g(\bw)$ is recurrent,
we can consider non-prefix occurrences 
of the previous words.

{If $f = L_\alpha g$, then  the words 
$b_1L_\alpha(u)\alpha a_1$, 
$b_2L_\alpha(u)\alpha a_1$, 
$c_1L_\alpha(u)\alpha a_2$, 
$c_2L_\alpha(u)\alpha a_2$
are factors of $L_\alpha(g(\bw))$. 
To verify this assertion it may be observed that
(for instance) $b_1L_\alpha(u)\alpha a_1$ is a factor of 
$L_\alpha(b_1ua_1)\alpha$ which is a factor of $L_\alpha(g(\bw))$ 
(note that moreover $a_1\alpha = \alpha a_1$ when $a_1 = \alpha$).}

{If $f = R_\alpha g$, then the words
$b_1\alpha R_\alpha(u) a_1$, 
$b_2\alpha R_\alpha(u) a_1$, 
$c_1\alpha R_\alpha(u) a_2$, 
$c_2\alpha R_\alpha(u) a_2$
are factors of $R_\alpha(g(u))$. 
To verify this assertion it may be observed that
(for instance) $b_1\alpha R_\alpha(u) a_1$ is a factor of
$\alpha R_\alpha(b_1 u a_1)$ which is also a factor of $R_\alpha(g(\bw))$ as we consider a non-prefix occurrence of $b_1ua_1$.}

{In both cases,} $f(\bw)$ has two left special factors of the same length
and consequently it is not episturmian. A contradiction.

From now on{,} assume that $g(\bw)$ contains a factor $u$
but not its reversal $\tilde{u}$.
The word $L_\alpha(g(\bw))$ contains the factor $L_\alpha(u)\alpha$ but not its reversal.
Indeed if it contains $\widetilde{L_\alpha(u)\alpha} = \alpha\widetilde{L_\alpha(u)}
= \alpha R_\alpha(\tilde{u}) = L_\alpha(\tilde{u})\alpha$, then
the word $g(\bw)$ contains the factor $\tilde{u}$: a contradiction.
Similarly the word $R_\alpha(g(\bw))$ contains the factor $\alpha R_\alpha(u)$ 
(remember that $g(\bw)$ is recurrent and so there exists a non-prefix occurrence of $u$)
but not its reversal $\alpha R_\alpha(\tilde{u})$.
{As $f(\bw) = L_\alpha(g(\bw))$ or
$f(\bw) = R_\alpha(g(\bw))$, $f(\bw)$ is not episturmian.
This contradicts the facts that $\bw$ is episturmian and that $f$ preserves ${\cal E}$.}
Hence $g$ preserves ${\cal E}$.
\end{proof}

\begin{lemma}
\label{L:begin_and_end}
Let $f \in \subst(A)\setminus \exch^*$ be a morphism that preserves ${\cal E}$.
All images of letters by $f$ begin with the same letter,
or, all images of letters end with the same letter.
\end{lemma}

\begin{proof}
Let $\last(w)$ be the last letter of a non-empty {finite} word $w$. 

{Let $f \in \subst(A)\setminus \exch^*$ be a morphism that preserves ${\cal E}$.}
{We start with the following observation: 
For any episturmian word $\bw$, there exists a letter $\alpha$ such that any factor of $\bw$ of length 2 contains $\alpha$.}
{This follows directly from Theorem~{\protect\ref{T:charac_episturmien_JP}} that states, 
that for any episturmian word $\bw$, there exists a letter $\alpha$
such that $\bw = L_\alpha(\bw')$ or $\bw = R_\alpha(\bw')$ for some episturmian word $\bw'$.}

\textbf{Case 1}.
Assume first that {there exist letters $\alpha$ and $a$ such that}
$\alpha\alpha$ is a factor of $f(a)$.
Then{, by the previous observation,} for any episturmian word $\bw$ containing $a$,
$f(\bw)$ cannot contain a factor $\beta \gamma$ with $\beta$, $\gamma$ 
two letters with $\beta \neq \alpha$ and $\gamma \neq \alpha$.
This implies that there cannot exist letters $b$ and $c$
such that $\first(f(c)) \neq \alpha$ and
$\last(f(b)) \neq \alpha$.
Indeed there exist episturmian words containing both $a$ and $bc$: their images by $f$ would contain both $\alpha\alpha$ and 
the word $ \last(f(b))\first(f(c))$ which is impossible.
Thus all images of letters begin with $\alpha$ or
 all images of letters end with $\alpha$.
 
 \textbf{Case 2}.
Assume now {that the previous case does not hold but that there exist pairwise distinct letters $\alpha$, }
such that 
$\alpha\beta$ or $\beta\alpha$ is a factor of an element of $f(A)$,
and, $\alpha\gamma$ or $\gamma\alpha$ is also a factor of an element of $f(A)$.

{Assume first that there exist an episturmian word $\bw$ containing all letters, and, two letters $\delta_1$ and $\delta_2$ different from $\alpha$
such that $\delta_1\delta_2$ is a factor of $f(\bw)$.
Then by initial observation, as the previous case does not hold, $\delta_1\delta_2 \in \{\beta\gamma, \gamma\beta\}$.
It follows that no other letter than $\alpha$, $\beta$ and $\gamma$ occurs in words of $f(A)$.
Indeed otherwise, $f(\bw)$ would contain two factors written on different letters: 
a contradiction with the initial observation.
Assume now by contradiction that there exist images of letters that do not begin with the same letter
and images of letters that do not end with the same letter.
Then there exist a letter $x \in \{\alpha, \beta, \gamma\}$
such that $x = \first(f(b))$ and $x = \last(f(a))$ for some letters $a$ and $b$.
Let $\bw$ be an episturmian word containing all letters and the factor $ab$.
The episturmian word $f(\bw)$ contains $xx$ and a factor of length 2 that does not contains $x$.
This contradicts the initial observation.}

Hence, for any $\bw \in {\cal E}$, 
$f(\bw)$ cannot contain a factor $\delta_1\delta_2$ 
with $\delta_1$, $\delta_2$ two letters different from $\alpha$.
We end as in the Case~1.

\textbf{Case 3}.
Assume now that the previous cases do not hold but 
{that there exists a letter $b$ such that $|f(b)| \geq 2$.}

{Note that, as the two previous cases do not hold, 
there must exist distinct letters $\alpha$ and $\beta$ such that
factors of length 2 of 
elements of $f(A)$ are necessarily $\alpha \beta$ and $\beta\alpha$.
This implies that the first letter of $f(b)$ is $\alpha$ or $\beta$.
Without loss of generality, assume $f(b)$ begins with $\alpha$.
Hence, for some $k \geq 1$, 
$f(b) = (\alpha\beta)^k$ of
$f(b) = (\alpha\beta)^k\alpha$. }

{\textit{Case $f(b) = (\alpha\beta)^k$}}. 
There cannot exist letters $a$, $c$, $\gamma$, $\delta$
such that $\gamma \neq \beta$, $\alpha \neq \delta$,
$\gamma = \last(f(a))$,
$\delta = \first(f(c))$.
Indeed otherwise $f(abc)$ would contain the factor
$\gamma(\alpha\beta)^k\delta${.
Since} $\beta$ {is} not a factor of $\gamma\alpha$ and $\alpha$ {is} not a factor of {$\beta\delta$},
for any episturmian word $\bw$ containing $abc$,
the word $f(\bw)$ would not be {an episturmian word
by the initial observation.
Hence all images of letters begin with $\alpha$, or, 
all images of letters end with $\beta$.}

{\textit{Case $f(b) = (\alpha \beta)^k \alpha$}.}
Assume that there exists a letter $a$ such that $\last(f(a)) = \gamma$ with $\gamma \neq \alpha$.
{As all length $2$ factors of images of letters are $\alpha\beta$ and $\beta\alpha$,}
if $\gamma \neq \beta$, then $f(a) = \gamma$.
In this case, for any episturmian word containing $aab$,
$f(\bw)$ contains both $\gamma\gamma$ and $\alpha\beta$
contradicting the fact that it is episturmian.
So $\gamma = \beta$.
{Moreover $f(a) =  (\alpha\beta)^\ell$ or
$f(a) = (\beta \alpha)^\ell\beta$ for some $\ell \geq 0$.
The case $f(a) =  (\alpha\beta)^\ell$ has already been considered.
Hence assume $f(a) = (\beta \alpha)^\ell\beta$.}
Let $\bw$ be an episturmian word containing the factor $bbbabb$.
The word $f(\bw)$ contains the factor
$$\alpha(\alpha\beta)^k\alpha(\alpha\beta)^k\alpha(\beta\alpha)^\ell\beta(\alpha\beta)^k\alpha\alpha$$
which is equals to
$$\alpha(\alpha\beta)^k\alpha(\alpha\beta)^{k+\ell+1+k}\alpha\alpha$$
As {$f(\bw) \in \stab({\cal L}\cup{\cal R})$ by Theorem~\protect\ref{T:charac_episturmien_JP}},
$f(\bw) = L_\alpha(\bw')$ or $f(\bw) = R_\alpha(\bw')$ for some episturmian word $\bw'$.
Thus $\bw'$ should contain the factor 
$\alpha\beta^k\alpha\beta^{k+\ell+1+k}\alpha$.
{As ${k+\ell+1+k} \geq k+2$, this implies that both $\beta^{k+1}$ and $\beta^k\alpha$ are left special factors of $\bw'$: 
this contradicts the fact that $\bw'$ is episturmian.}
Thus for each letter $a$, {$\last(f(a)) = \alpha$}.

{\textbf{Case 4}.}
It remains to consider the case where
all images of letters are of length 1.
As $f \not\in \exch^*$,
if all images are not the same,
then there exist pairwise different letters $a$, $b$ and $c$
such that $f(a) = f(b)$
and $f(a) \neq f(c)$.
Set {$x = f(a)$}, {$y = f(c)$} ($x$ and $y$ are letters).
Let $\bw$ be an episturmian word containing the factor $acc$.
The word $L_aL_bL_c(\bw)$ contains the factor
$cabaabacabac$ and
the word $fL_aL_bL_c(\bw)$ contains the factor
$yx^6yx^3y$.
{Hence $x^3y$ and $x^4$ are both left special factors of $fL_aL_bL_c(\bw)$.}
This is not possible for an episturmian word: we have a contradiction with the fact
that $f$ preserves ${\cal E}$.
Thus all images of letters are equal.
\end{proof}

Let us continue the proof of Proposition~\ref{P:preserving_episturmian}.

From now on, let $f$ be an endomorphism preserving episturmian words.
Let $||f|| = \sum_{\alpha \in A} |f(\alpha)|$. We prove by induction on $||f||$
that $f \in \cS_{\rm epi}$.

Assume first that $||f||= \#A$.
If $f(A) = A$,
then $f \in \exch^* \subseteq {\cal S}_{epi}$.
Otherwise by Lemma~\ref{L:begin_and_end}, $f \in {\cal P} \subseteq  {\cal S}_{epi}$.

From now on{,} assume that $||f|| > \#A$
and assume $f \not\in {\cal P}$.
By Lemma~\ref{L:begin_and_end},
all images of letters begin with the same letter,
or, all images of letters end with the same letter.
Assume that the first case holds (the second case is symmetric) and let $\alpha$ be the first letter of images of letters.
Let $\bw \in {\cal E}$.
By hypothesis $f(\bw) \in {\cal E}$.
As ${\cal E} \subseteq \stab({\cal L}\cup{\cal R})$ by Theorem~\ref{T:charac_episturmien_JP},
$f(\bw) = L_\beta(\bw')$ or $f(\bw) = R_\beta(\bw')$ for some letter $\beta$ and some word $\bw'$.

When $f(\bw) = L_\beta(\bw')$, we have $\alpha = \beta$ and
we can find a morphism $g$ such that $f = L_\alpha g$.
By Lemma~\ref{L:reduce_preserve}, $g$ preserves ${\cal E}$.
As $f \not\in {\cal P}$, $||g|| < ||f||$ and by induction $g \in \cS_{\rm epi}$. So $f \in \cS_{\rm epi}$.

Assume now that $f(\bw) = R_\beta(\bw')$.
If all images of letters by $f$ end with $\beta$,
then $f = R_\beta g$ and, as in the case $f = L_\alpha g$ above, $f \in \cS_{\rm epi}$.
If some image of {a letter} by $f$ {does} not end with $\beta$,
as {the} image of letters is followed by $\alpha$ in $f(\bw)$
and by $\beta$ in $R_\beta(\bw')$, we get $\alpha = \beta$.
Hence $f = L_\alpha h$ for some morphism $h$, and as previously $f \in \cS_{\rm epi}$.
\end{proof}

\medskip

\begin{proof}[Proof of Proposition~\ref{P:stab_epi_pas_fini}]
{The proof is similar to the proof of the second part of 
Proposition~{\protect{\ref{P:stable_standard_episturmian}}} but we have to take care of elements in ${\cal R}$.}
Assume {by contradiction} that there exists a finite set $\cS \subseteq \subst(A)$
such that the set of episturmian words is $\stab(\cS)$.
By Remark~\ref{Rem:preserve},
elements of $\cS$ preserve ${\cal E}$ and so,
by Proposition~\ref{P:preserving_episturmian}, {$\cS \subseteq \cS_{\rm epi} = ({\cal L} \cup {\cal R} \cup \exch \cup {\cal P})^*$}.

Observe that $\cS \cap \exch^* = \emptyset$.
Indeed if {$f \in \cS \cap \exch^*$},
$f$ acts on the alphabet as a permutation 
and there exists an integer $n$ such that $f^n$ is the identity.
Then any infinite word belongs to $\stab(\cS)$: a contradiction.

From now on, we consider only aperiodic episturmian words.
These words {can} be desubstituted only on elements of
$\cS \cap ({\cal L}\cup{\cal R}\cup \exch)^*$.
{From Relations~{\protect{\eqref{eq1}}} to {\protect{\eqref{eq4}}},} 
any element of $({\cal L}\cup{\cal R}\cup \exch)^*$ can be decomposed
into $\sigma e$ with $\sigma \in ({\cal L}\cup{\cal R})^*$ and $e \in  \exch^*$,
that is, it can be viewed as an element of $({\cal L}\cup{\cal R})^*\exch^*$.

Let $a$ and $b$ be two different letters. The following fact is important.

{
\textbf{Fact.} Let $i$, $j$ be integers such that $i \leq j$. 
Let $\bw \in A^\omega$ and $f \in ({\cal L}\cup{\cal R})^i$. 
If $f(\bw)$ begins with $ab^{j+2}a$,
then $f = R_b^i$.
}

{
We prove this fact by induction on $i$. It is basically true for $i = 0$.
Assume $i \geq 1$.
As $f(\bw)$ begins with $abb$ and $f \in ({\cal L}\cup{\cal R})^i$,
necessarily $f = R_b g$ for some $g \in ({\cal L}\cup{\cal R})^{i-1}$.
Moreover $g(\bw)$ has a prefix $ab^{(j-1)+2}a$ and $i-1 \leq j-1$.
Hence $g = R_b^{i-1}$ by induction. So $f = R_b^i$.
}

{
Now let $M_1 = \max( \{ |f| \mid f\pi \in \cS, f \in ({\cal L}\cup{\cal R})^*, \pi \in \exch^*\}$. 
(here, for $f \in ({\cal L}\cup{\cal R})^*$, $|f|$ is the length of $f$ considered as a word over the alphabet ${\cal L}\cup{\cal R}$).
Let
$M = (\#A!+1)M_1$.
Let $\bw$ be an episturmian word having $ab^{M+2}a$ as a  prefix.
Let $(\sigma_n)_{n \geq 1} \in \cS^\omega$ be a directive word of $\bw$ (as an element of $\stab(\cS)$).
}

{
Let $i$, $1 \leq i \leq \#A!+1$. 
For $1 \leq j \leq i$, as $\sigma_j \in  ({\cal L}\cup{\cal R})^*\exch^*$,
there  exist $g_j \in ({\cal L}\cup{\cal R})^*$ and $\pi_j' \in \exch^*$
such that $\sigma_j = g_j \pi_j'$.
Using Equations~{\protect{\eqref{eq2}}} and {\protect{\eqref{eq4}}}, 
we can see that 
there exist $f_i \in ({\cal L}\cup{\cal R})^*$ and $\pi_i \in \exch^*$ such that $\sigma_1\cdots\sigma_i = f_i \pi_i$.
Moreover $|f_i| = \sum_{j = 1}^i |g_j|$ and so $|f_i| \leq i M_1 \leq M$.
From the previous fact, $f_i \in\{R_b\}^*$.
Since the cardinality of $\exch^*$ is $\#A!$, there exists $i$ and $j$, $1 \leq i < j \leq \#A!+1$, such that $\pi_i = \pi_j$. Set $\pi = \pi_i$.
}

{
Let $\ell_i$ and $\ell_j$ be the integers such that $\sigma_1 \ldots \sigma_i = R_b^{\ell_i} \pi$ 
and $\sigma_1 \ldots \sigma_j = R_b^{\ell_j}\pi$. We have $R_b^{\ell_j}\pi =$ 
$\sigma_1 \ldots \sigma_j =$ 
$R_b^{\ell_i}\pi\sigma_{i+1}\cdots\sigma_j$. Thus 
$\pi\sigma_{i+1}\cdots\sigma_j = R_b^{\ell_j-\ell_i}\pi$.
By Equations~{\protect{\eqref{eq3}}} and {\protect{\eqref{eq4}}}, 
there exists a letter $c$ such that
$R_b^{\ell_j-\ell_i}\pi = \pi R_c^{\ell_j-\ell_i}$. So $\sigma_{i+1}\cdots\sigma_j =$ $R_c^{\ell_j-\ell_i}$.
}

Observe that, for $d$ a letter different from $c$ (the alphabet $A$ contains at least two letters), $R_c^\omega$ is a directive sequence of the word $dc^\omega$.
{Thus $(R_c^{\ell_j-\ell_i})^\omega$ is a directive sequence of the word $dc^\omega$.}
Hence this word $dc^\omega$ belongs to $\stab(\cS)$ although it is not episturmian ({by Theorem~{\protect\ref{T:charac_episturmien_JP}}} since it is not recurrent).
We have obtained a contradiction with $\stab(\cS) = {\cal E}$.
\end{proof}

\subsection{Strict episturmian words}
\noindent

Theorem~\ref{T:charac_episturmien_JP} recalls only a part of Theorem 3.10 in \cite{JustinPirillo2002}.
This latter theorem also implies that
an infinite word $\bw$ is $A$-strict episturmian if and only if
there exist an infinite sequence of \textit{recurrent} infinite
words $(\bw_n)_{n \geq 0}$
and a sequence $(\sigma_n)_{n \geq 1}$ in $({\cal L}\cup {\cal R})^\omega$
such that, for each letter $\alpha$ in $A$, 
$L_\alpha$ or $R_\alpha$ occurs infinitely often in $(\sigma_n)_{n \geq 1}$.
Let ${\cal S}_{strictepi}$ be the set 
$({\cal L}\cup{\cal R})^*{\cal L}({\cal L}\cup{\cal R})^*\cap\cap_{\alpha \in A} ({\cal L}\cup {\cal R})^* \{L_\alpha, R_\alpha\} ({\cal L}\cup {\cal R})^*$
of all elements of $({\cal L}\cup{\cal R})^*$ having a decomposition over ${\cal L}\cup{\cal R}$ with at least one element of
${\cal L}$ and at least one element of $\{L_\alpha, R_\alpha\}$ for each letter $\alpha$.

From what precedes, using Lemma~\ref{L:normalization},
we can see that the set of $A$-strict episturmian words
is included in $\stab({\cal S}_{strictepi})$.
Conversely, acting as in the proof of Proposition~\ref{P:stab_epi},
we can deduce that any element of 
$\stab({\cal S}_{strictepi})$
is recurrent. The next result follows.

\begin{proposition}
The set of $A$-strict episturmian words is $\stab({\cal S}_{strictepi})$.
\end{proposition}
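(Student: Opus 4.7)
The plan is to prove the two inclusions separately, following the pattern used for Propositions~\ref{P:stab_epi} and~\ref{P:stab_epi_pas_fini}.

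For the forward inclusion, let $\bw$ be an $A$-strict episturmian word. I would invoke the full statement of Theorem~3.10 of \cite{JustinPirillo2002} recalled just above to obtain a sequence $(\bw_n)_{n \geq 0}$ of recurrent desubstituted words with $\bw_0 = \bw$ together with a directive sequence $(\sigma_n)_{n \geq 1} \in ({\cal L}\cup{\cal R})^\omega$ in which, for each letter $\alpha$, at least one of $L_\alpha$ or $R_\alpha$ occurs infinitely often. Applying Lemma~\ref{L:normalization}, I may further require $\sigma_{k+1} \neq R_{\first(\bw_k)}$ for every $k \geq 0$; arguing as in the proof of Lemma~\ref{L:form_direct_episturmian}, this normalization forces infinitely many elements of ${\cal L}$ to appear in $(\sigma_n)_{n \geq 1}$ (otherwise $\bw_N$ would have its first letter as a unique occurrence for some large $N$, contradicting recurrence of $\bw_N$). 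I would then contract $(\sigma_n)_{n \geq 1}$ into successive blocks $(\tau_k)_{k \geq 1}$, each being the shortest product of consecutive $\sigma_n$'s containing at least one element of ${\cal L}$ and, for every $\alpha \in A$, at least one element of $\{L_\alpha, R_\alpha\}$. By construction each $\tau_k$ lies in ${\cal S}_{strictepi}$, so the subsequence of $(\bw_n)$ obtained at the right endpoints of the blocks witnesses $\bw \in \stab({\cal S}_{strictepi})$.

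For the reverse inclusion, given $\bw \in \stab({\cal S}_{strictepi})$ with directive sequence $(\tau_k)_{k \geq 1}$, I would expand each $\tau_k$ into its decomposition over ${\cal L} \cup {\cal R}$ and insert the intermediate desubstituted words, producing an extended directive sequence $(\sigma_n)_{n \geq 1} \in ({\cal L}\cup{\cal R})^\omega$ with corresponding desubstituted words $(\bw_n)_{n \geq 0}$. By definition of ${\cal S}_{strictepi}$, this expanded $(\sigma_n)_{n \geq 1}$ contains infinitely many elements of ${\cal L}$ and, for each $\alpha$, infinitely many elements of $\{L_\alpha, R_\alpha\}$. Adapting the recurrence argument from the final part of the proof of Proposition~\ref{P:stab_epi}, I would show that every $\bw_n$ is recurrent; Theorem~\ref{T:charac_episturmien_JP} then yields that $\bw$ is episturmian, and the $A$-strictness condition on $(\sigma_n)$ combined with the full Theorem~3.10 of \cite{JustinPirillo2002} promotes this to $A$-strict episturmian.

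The step I expect to be the main obstacle is the recurrence part of the reverse direction. The proof of Proposition~\ref{P:stab_epi} exploited the rigid shape ${\cal R}^*{\cal L}$ of its blocks, whose trailing $L$ forces $\first(\bw_n)$ to be duplicated inside $\bw_n$. For ${\cal S}_{strictepi}$ the blocks mix $L$'s and $R$'s in arbitrary order, so it is less immediate to locate a second occurrence of an arbitrary prefix of $\bw_n$. The idea I would exploit is that each block still contains at least one $L_\beta$, which inserts a $\beta$ to the left of every non-$\beta$ letter; combined with the fact that every letter eventually appears in every $\bw_n$ (guaranteed by the $A$-strictness condition on $(\sigma_n)$), this should yield the needed repetition by an argument analogous to the one used in Proposition~\ref{P:stab_epi}.
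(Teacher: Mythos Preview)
Your plan is correct and is exactly the route the paper takes: the paper's proof consists of the single sentence ``From what precedes, using Lemma~\ref{L:normalization}, we can see that the set of $A$-strict episturmian words is included in $\stab({\cal S}_{strictepi})$; conversely, acting as in the proof of Proposition~\ref{P:stab_epi}, we can deduce that any element of $\stab({\cal S}_{strictepi})$ is recurrent.''  Your forward direction (Theorem~3.10 of \cite{JustinPirillo2002} $+$ Lemma~\ref{L:normalization} $+$ block contraction) and your reverse direction (expand blocks, show recurrence, apply Theorem~3.10 again) are precisely this.

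Your concern about the recurrence step is unwarranted.  The argument in Proposition~\ref{P:stab_epi} does \emph{not} rely on the rigid ${\cal R}^*{\cal L}$ shape of the blocks; at the expanded $({\cal L}\cup{\cal R})$ level it only uses that the sequence contains infinitely many elements of ${\cal L}$.  Indeed, if $m>n$ is minimal with $\sigma_m = L_\gamma$, then $\first(\bw_{m-1})=\gamma$ occurs at positions $1$ and $|L_\gamma(\first(\bw_m))|+1$ of $\bw_{m-1}$; the intervening $R$'s preserve first letters and cannot decrease letter counts, so $\first(\bw_n)=\gamma$ occurs at least twice in $\bw_n$.  Since every block of ${\cal S}_{strictepi}$ contains an element of ${\cal L}$, you have infinitely many $L$'s and the argument goes through verbatim.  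In fact the situation is \emph{easier} than in Proposition~\ref{P:stab_epi}: the degenerate case $\bw_N=\alpha^\omega$ cannot occur here, because each block of ${\cal S}_{strictepi}$ involves every letter index, so the expanded directive sequence can never become eventually $\{L_\alpha,R_\alpha\}^\omega$.

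One small point worth making explicit in your forward direction: the normalization of Lemma~\ref{L:normalization} preserves the letter index of each $\sigma_k$ (this is visible case by case in its proof), so the condition ``for every $\alpha$, some element of $\{L_\alpha,R_\alpha\}$ occurs infinitely often'' survives normalization, and the new desubstituted words $\bw_k'$ are suffixes of the original recurrent $\bw_k$ and hence themselves recurrent.
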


In \cite[Th. 3.13]{JustinPirillo2002},
it is proved that $({\cal L}\cup {\cal R}\cup Exch)^*$
is the set of endomorphisms of $A^*$
that preserve $A$-strict episturmian words.
Using this result, as done for the proof of Proposition~\ref{P:stab_epi_pas_fini}, we can prove:

\begin{proposition}
There is no finite set $\cS \subseteq \subst(A)$
such that $\stab(\cS)$ is the set of $A$-strict episturmian words.
\end{proposition}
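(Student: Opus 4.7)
The plan is to follow the template of Proposition~\ref{P:stab_epi_pas_fini} almost verbatim, replacing the characterization of preserving endomorphisms from Proposition~\ref{P:preserving_episturmian} by the one for $A$-strict episturmian words, namely $({\cal L}\cup {\cal R}\cup \exch)^*$ from \cite[Th. 3.13]{JustinPirillo2002}. A convenient simplification relative to the non-strict case is that the class ${\cal P}$ plays no role here, since a morphism in ${\cal P}$ sends every word to some $a^\omega$, which, as $\#A \geq 2$, is never $A$-strict episturmian.

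I would start by assuming for contradiction that $\stab(\cS)$ equals the set of $A$-strict episturmian words for some finite $\cS \subseteq \subst(A)$. By Remark~\ref{Rem:preserve} and the cited characterization, $\cS \subseteq ({\cal L} \cup {\cal R} \cup \exch)^*$. As a first step I would exclude $\cS \cap \exch^* \neq \emptyset$: any $f \in \exch^*$ is a permutation and satisfies $f^n = \id$ for some $n \geq 1$, so every word of $A^\omega$ would lie in $\stab(\cS)$, contradicting the fact that, for example, $a^\omega$ is not $A$-strict episturmian. Using Equations~\eqref{eq1}--\eqref{eq4}, every element of $\cS$ and every finite composition of such elements can be written as $g\pi$ with $g \in ({\cal L}\cup{\cal R})^*$ and $\pi \in \exch^*$.

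Next I would perform the quantitative pigeonhole argument. Fix two distinct letters $a, b \in A$, let $M_1$ be the maximum length of the $g$-part over all such decompositions of elements of $\cS$, and set $M = (\#A!+1)M_1$. Pick an $A$-strict episturmian word $\bw$ beginning with $ab^{M+2}a$ (such a word exists: starting from a standard Arnoux-Rauzy word whose directive sequence is chosen to contain $L_a L_b^{M+1}$ and infinitely many $L_\alpha$ for every letter $\alpha$, one obtains the factor $ab^{M+2}a$, and by recurrence a shift yielding this factor as a prefix is itself $A$-strict episturmian). Pick a directive sequence $(\sigma_n)_{n \geq 1} \in \cS^\omega$ of $\bw$. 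For each $i$ with $1 \leq i \leq \#A!+1$, write $\sigma_1 \cdots \sigma_i = f_i \pi_i$ with $f_i \in ({\cal L}\cup{\cal R})^*$, $\pi_i \in \exch^*$ and $|f_i| \leq iM_1 \leq M$. The same inductive fact as in Proposition~\ref{P:stab_epi_pas_fini} --- if $h \in ({\cal L}\cup{\cal R})^k$ with $k \leq j$ and $h(\bw')$ begins with $ab^{j+2}a$, then $h = R_b^k$ --- forces $f_i = R_b^{\ell_i}$ for some $\ell_i \geq 0$. By pigeonhole on $\exch^*$ (which has $\#A!$ elements), there exist $1 \leq i < j \leq \#A!+1$ with $\pi_i = \pi_j =: \pi$; a short computation using Equations~\eqref{eq2} and~\eqref{eq4} then yields $\sigma_{i+1} \cdots \sigma_j = R_c^{\ell_j-\ell_i}$ for a letter $c$ determined by $\pi$ and $b$.

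Finally, iterating the finite product $\sigma_{i+1}\cdots\sigma_j$ produces a directive sequence in $\cS^\omega$ that acts as $R_c^\omega$. For any letter $d \neq c$, the word $dc^\omega$ satisfies $R_c(dc^\omega) = dc\cdot c^\omega = dc^\omega$, so $dc^\omega \in \stab(\cS)$; but $dc^\omega$ is not recurrent, hence not $A$-strict episturmian, a contradiction. The main obstacle will be the commutation step $\pi R_b^k = R_c^k \pi$ via Equations~\eqref{eq2} and~\eqref{eq4} with careful bookkeeping of the letter $c$ obtained from $b$ by the appropriate action of $\pi$; the existence of a strict episturmian word with the required long prefix is also worth checking explicitly, though it follows routinely from the structure of Arnoux-Rauzy words.
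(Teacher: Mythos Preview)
Your proposal is correct and follows exactly the approach the paper indicates: it reproduces the proof of Proposition~\ref{P:stab_epi_pas_fini}, replacing the preserving-morphism characterization by \cite[Th.~3.13]{JustinPirillo2002} (so $\cS \subseteq ({\cal L}\cup{\cal R}\cup\exch)^*$, with ${\cal P}$ no longer playing a role), and reaches the same contradiction via the pigeonhole on $\exch^*$ and the factorization into a power of $R_c$. The only cosmetic slip is the direction of the commutation: what is needed is $R_b^{k}\pi = \pi R_c^{k}$ (so $c = \pi^{-1}(b)$), not $\pi R_b^{k} = R_c^{k}\pi$, but this does not affect the argument.
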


\section{\label{sec:conclusion}Conclusion}

Stable sets formalize the concept of infinite desubstitutions using a set of nonerasing endomorphisms.
We have shown that several known sets of words are stable sets:
the set of binary balanced words, 
the set of Sturmian words,
the set of Lyndon Sturmian words,
the set of standard episturmian words 
({which} corresponds, in the binary case, to the set of LSP words),
the set of strict standard episturmian words
({which} corresponds, in the binary case, to the set of standard words),
the set of episturmian words
and the set of strict standard episturmian words.
Among all these sets,
only the set of binary balanced words and
the set of standard episturmian words are
stable sets of a finite set of substitutions.
A first natural question is whether 
there exist other sets defined by combinatorial properties that are stable sets
of a (finite) set of substitutions.
 
 {A characterization} of a set of words as {the} stable set of an infinite set
 of substitutions may be more difficult to understand and to use
 than a characterization as a subset of a stable set of a
 finite set of substitutions using conditions on directive sequences.
 For instance, it is probably more interesting to know that
 standard Sturmian words are the elements of $\stab({\cal L}\cup {\cal R})$ whose directive
 sequences contain infinitely many occurrences of $L_a$ and 
infinitely many occurrences of $L_b$
than to know that they are the elements of $\stab(\cS_{stand})$,
even if this latter formulation states the same result in a more compact form.
Similarly, 
it may be more interesting to know that episturmian words
are the recurrent elements of  $\stab({\cal L}\cup {\cal R})$ 
than to know that they are the elements of {$\stab({\cal R}^* {\cal L})$}.
More generally one can search for a characterization of 
a set of words as a subset of a stable set whose directive sequence of elements
verify a particular condition. Such a result was obtained for the set of LSP words \cite{Richomme2019IJFCS}.
Moreover this approach is often done {w.r.t.} the concept of $S$-adicity instead of stable sets (see, \textit{e.g.} the case 
of Sturmian words or the paper \cite{Berthe_Delecroix2014RIMS} and its references).

Some extensions of the notion of stable set could also be studied.
For instance it should be quite natural to 
search for set{s} of words that are images by a morphism or by a set of morphisms
(not necessarily endomorphisms) of a stable set. 
But except the morphic words, 
the author knows no example among classical sets of words.

Another direction of study could be to have
a better formalization of the possible changes of alphabets.
Indeed{,} remember that in the definitions of $S$-adicity,
the considered morphisms are not necessarily endomorphisms.
Also in the examples of stable sets
one can observe that some elements of stable sets have desubstituted words written on alphabets
whose cardinalities may decrease (see for instance the word $L_cL_b(a^\omega)$).

\section*{{Acknowledgements}}
Many thanks to Robert Merca\c{s} and Daniel Reidenbach for their invitation to talk to the Words 2019 conference that stimulated the writing of this paper.

\small

\end{document}